\documentclass{lmcs}

\usepackage{amsmath}
\usepackage{amssymb}
\usepackage{graphicx}
\usepackage{booktabs}
\usepackage{mathpartir}
\usepackage{listings}
\usepackage{xcolor}

\definecolor{rule}{HTML}{BBBBBB}
\definecolor{code}{HTML}{F6F6F4}

\definecolor{estColor}{HTML}{009E73}   
\definecolor{claimColor}{HTML}{D55E00} 
\definecolor{typeColor}{HTML}{0072B2}  

\everymath{\color{typeColor}}
\everydisplay{\color{typeColor}}

\usepackage{array}
\usepackage{colortbl}
\newcolumntype{L}{>{\normalcolor}l}
\newcolumntype{C}{>{\normalcolor}c}
\newcolumntype{R}{>{\normalcolor}r}
\newcolumntype{P}[1]{>{\normalcolor}p{#1}}
\arrayrulecolor{black}

\newcolumntype{N}{>{\color{typeColor}}r}
\newcolumntype{M}{>{\color{typeColor}}c}

\lstdefinestyle{tacetcode}{
  language=Python,
  commentstyle=\upshape, 
  morekeywords=[1]{observe,select,vs,where,rate,Stat},
  keywordstyle=[1]{\color{estColor}},
  morekeywords=[2]{claim,note,preregister,Note},
  keywordstyle=[2]{\color{claimColor}},
  morekeywords=[3]{audit,static},
  keywordstyle=[3]{\color{typeColor}},
}

\newenvironment{whole}{\par\noindent\begin{minipage}{\columnwidth}}
                      {\end{minipage}\par}

\newcommand{\sys}[1]{\textsc{#1}}

\newcommand{\Stat}[4]{\mathsf{Stat}_{#1}[#2;#3]^{#4}}
\let\oldtextsc\textsc
\renewcommand{\textsc}[1]{{\upshape\mdseries\oldtextsc{#1}}}

\newcommand{\clean}{\mathsf{clean}}
\newcommand{\rd}{\mathsf{read}}
\newcommand{\den}[1]{[\![#1]\!]}
\newcommand{\Claim}[1]{\mathsf{Claim}[#1]}
\newcommand{\kw}[1]{\ensuremath{\mathsf{#1}}}

\newcommand{\est}[1]{\textcolor{estColor}{#1}}
\newcommand{\clc}[1]{\textcolor{claimColor}{#1}}
\newcommand{\ty}[1]{\textcolor{typeColor}{#1}}

\newcommand{\Description}[1]{}

\begin{document}

\title[Tacet: Statistical Validity Accounting]{Tacet: A Language and Type System for Automatic
Statistical Validity Accounting}

\author[C.~Abuah]{Chiké Abuah\lmcsorcid{0000-0003-1860-2360}}

\address{Walla Walla University, College Place, Washington, USA}
\email{chike.abuah@wallawalla.edu}

\begin{abstract}
Empirical comparisons between systems are a standard form of evidence in
computer science research, but few are checked for statistical validity:
most are never framed as statistical tests at all. Existing
multiple-comparison procedures could control the resulting error, but need
inputs (what an analysis examined, and how its observations are arranged)
that are not recoverable from a list of $p$-values.

We introduce \sys{Tacet}, a language in which an analysis declares what it
generated, states what it expects to find, and is refused any claim it
cannot afford or cannot properly test. Its core calculus $\mathcal{T}$ pairs
a free estimation sublanguage, carrying a reported footprint and a purity
bit that records whether any outcome was consulted in building a value,
with a priced claim sublanguage, carrying a wealth transformer, connected
only by a mechanism that prices a comparison. A sample selected by reading
outcomes sets the purity bit and is recorded as having examined everything
it read, permanently, so it can never be granted a one-sided or
confirmatory price, without the system ever asking whether the analyst
intended to cherry-pick. Whether a comparison is paired or clustered is
computed statically from the artifact schema, from declared functional
dependencies between key fields alone and before any data is read, and a
mechanism that assumes that structure away is refused rather than priced.
Because the wealth transformer is antitone in the realized $p$-value,
affordability can be checked before the analysis runs too, turning
pre-registration into a typing rule. We prove the metatheory
machine-checked in Lean~4 with no admitted gaps, and demonstrate the
approach on a reference implementation and two case studies on published
artifacts, the SWE-bench Verified leaderboard and BIG-Bench Hard.
\end{abstract}

\keywords{multiple comparisons, graded types, statistical validity,
unit of analysis, alpha-investing}

\subjclass{Theory of computation---Type structures; Mathematics of
computing---Hypothesis testing and confidence interval computation}

\maketitle

\section{Introduction}
\label{sec:intro}

Empirical comparison has become the standard method of evaluation in machine
learning and software engineering research: new systems are measured on
shared benchmarks, and the results are published as leaderboards and results
tables. However, experience has shown that the statistical validity of these
comparisons is rarely checked. Dror et al.~\cite{dror-hitchhiker} annotated
180 experimental papers from ACL and TACL and found that only 63 checked statistical
significance at all, that 42 named the test they used, and that 36 used a
correct one. Marie et al.~\cite{marie-credibility} hand-annotated 769 machine
translation papers published between 2010 and 2020 and titled the relevant
section \emph{The Disappearing Statistical Significance Testing}: the rate
never exceeded 65\% in any year, it has fallen sharply since 2016, and most
papers, in their words, drew conclusions without checking whether their
results were coincidental.

\textbf{Two facts that live in the program.} This paper claims that two of
the inputs multiple-comparison procedures need are properties of the
\emph{program}, recoverable from it alone.

First, the price a claim is eligible for depends on what selecting its
sample required looking at. Consider two clinical trials reporting the same
subgroup finding. The first reports on patients over 65, a subgroup read
off the enrollment records and fixed before any outcome was observed. The
second reports on the patients who responded to treatment, a subgroup that
became available only after every patient had been measured. The two trials
can report identical $p$-values, and a multiple-comparison procedure given
one of them has no way to tell which trial produced it. What separates them
survives in the program that computed the subgroup: selecting on a property
of the key (``the patients over 65'', ``the instances from one
repository'') records a footprint of the observations it keeps, while
selecting on a property of the outcome (``the patients who responded'',
``the instances this system failed'') records a footprint of every
observation it examined. The record is permanent: an outcome-selected sample can
never satisfy a pre-registration and is never granted a one-sided or
confirmatory price (Section~\ref{sec:lambdat}). QUDE~\cite{qude}, the
closest prior system, integrates multiple-hypothesis control into
interactive data exploration, but it assumes the hypotheses are independent
and names dependence as an open limitation (Section~\ref{sec:related}).

Second, the correct statistical instrument depends on how the observations
are arranged. Every comparison in a leaderboard is paired, because both
systems attempt the same instances, and an unpaired test is the wrong
instrument for it. This fact is legible from the artifact keys, where a
checker can see it, and absent from the results, where a reader cannot.
Section~\ref{sec:leaderboard} measures what moving this paper's running
example, the SWE-bench Verified leaderboard, from the instance level to the
declared cluster level does to individual $p$-values and to the family as a
whole. A one-sentence prose disclaimer acknowledging the clustering, the common substitute for running the
correct test, costs the author about as many words to write as the
correct test costs to run.

Our goal in this work is to answer the following four questions:

\begin{itemize}\itemsep2pt
\item Can the inputs that multiple-comparison procedures require (what
selecting an analysis's sample required looking at, and how its
observations are arranged) be recovered from the program rather than from its results?
\item Can an outcome-selected sample be caught, and refused the prices that
assume no selection occurred, without the system asking whether the analyst
intended to cherry-pick?
\item Can the experimental design of a comparison (whether it is paired or
clustered) be computed statically, and soundly?
\item Can the affordability of an analysis be checked before the analysis
runs?
\end{itemize}

We answer all four questions in the affirmative, in \sys{Tacet}: a
language in which an analysis declares what it generated, states what it
expects to find, and is refused any claim it cannot afford or cannot
properly test.

\textbf{Contributions.} In summary, this paper makes the following
contributions:

\begin{itemize}\itemsep2pt
\item We introduce $\mathcal{T}$ (Section~\ref{sec:lambdat}), a core calculus with a free estimation
sublanguage carrying a footprint and a constraint, a priced claim sublanguage
carrying a wealth transformer, and a mechanism as the only route between
them. Each sublanguage is graded in a sense with a long lineage
(Section~\ref{sec:related}), but by a grade outside that lineage's usual
numeric, commutative shape: a dependency judgment for what was read, a
non-commutative monoid of functions for what can still be afforded
(Section~\ref{sec:claims}'s ``Grades outside the numeric lineage'').
\item We show that a footprint reports what was read rather than pricing it
(Section~\ref{sec:lambdat}): a narrowing predicate that reads outcomes is
recorded, permanently, as having examined everything it read, so an
outcome-selected sample can never match a pre-registration and is never
granted a one-sided or confirmatory price, with no model of the analyst's
intentions involved. Appendix~\ref{app:estimation} proves the tracking
sound: a $\est{\clean}$ selection is a function of the artifact keys alone,
and outcomes outside a value's footprint cannot affect it.
\item We show that the experimental design can be read off the artifact
schema (Section~\ref{sec:lambdat}): whether a comparison is paired or
clustered is computed statically from declared functional dependencies
between key fields, and Theorem~\ref{thm:design}
(Section~\ref{sec:metatheory}), machine-checked in Lean~4, proves that
computation sound, given
declared dependencies and an identifying key that are genuine facts about
the run. A mechanism that assumes structure away is refused rather than
priced.
\item We show that declared bounds make affordability static (Section~\ref{sec:grade}), soundly,
because the wealth transformer is antitone in the $p$-value. This turns
pre-registration from a methodological norm into a typing rule.
\item We establish the metatheory (Section~\ref{sec:metatheory}) in Lean~4 with no admitted gaps:
the checker under-approximates the monitor
(Theorem~\ref{thm:static}), the monitor controls $\mathrm{mFDR}_\eta$ at
$\alpha$ under explicitly stated side conditions
(Theorem~\ref{thm:mfdr}), and checker acceptance implies
$\mathrm{mFDR}_\eta$ control without conditioning on the honesty of
declared bounds (Theorem~\ref{thm:bridge}).
\item We present a reference implementation (Section~\ref{sec:implementation}) requiring two declarations
per analysis (the artifact key and the cluster level) and an evaluation
(Section~\ref{sec:evaluation}) on a published leaderboard and a published benchmark corpus.
\end{itemize}

The rest of the paper is organized as follows. First we introduce \sys{Tacet}
through a worked example (Section~\ref{sec:analysis}) and summarize the
necessary statistical background (Section~\ref{sec:background}). We then
describe what an analysis must declare (Section~\ref{sec:declarations}), formalize the core
language $\mathcal{T}$ (Section~\ref{sec:lambdat}) and its grade structure
(Section~\ref{sec:grade}), and establish its metatheory
(Section~\ref{sec:metatheory}). We describe our implementation (Section~\ref{sec:implementation}) and evaluate the
system on published artifacts (Section~\ref{sec:evaluation}). Finally we state the
limitations of the guarantee and the threats to this paper's validity
(Section~\ref{sec:threats}), outline related work (Section~\ref{sec:related}), and conclude
(Section~\ref{sec:conclusion}).

\section{\sys{Tacet} by Example}
\label{sec:analysis}

\sys{Tacet} is a library, embedded in Python because that is the language
analyses are written in. ``Static'' in this
paper means before the data is read, not before the host language runs: a
runtime monitor enforces the budget as an analysis executes, and a checker
separately inspects a program's declared bounds and schema before
\lstinline[style=tacetcode]|observe| ever runs, with
Theorem~\ref{thm:static} relating what the two accept. Both surface as
ordinary Python exceptions; the refusals below are the runtime monitor's,
and Section~\ref{sec:implementation} shows the checker's own transcript,
produced before a line of this section's code would even run.

The running example throughout is the SWE-bench Verified
leaderboard~\cite{swebench}: 134 public submissions, each with the set of
benchmark instances it resolved, whose total order asserts 8{,}911 pairwise
claims that one system beats another (Section~\ref{sec:leaderboard} prices
the family in full). An analysis begins by declaring what was generated
once, and how the observations nest. Those two are the declarations;
\lstinline|passed| merely specifies how to read an outcome out of a row:

\begin{whole}\begin{lstlisting}[style=tacetcode]
study = Study(alpha=0.05)
board = study.observe(runs, artifact=key,
                      cluster="repo", passed=resolved)
top   = board["20251205_sonar-foundation-agent_claude-opus-4-5"]  # rank 1
rank2 = board["20251215_livesweagent_claude-opus-4-5"]
rank3 = board["20250928_trae_doubao_seed_code"]
rank4 = board["20251127_openhands_claude-opus-4-5"]
\end{lstlisting}\end{whole}

\lstinline[style=tacetcode]|observe| folds every observation onto its
artifact on the way in. Where a benchmark instance carries several test
assertions behind one generated patch, those assertions are not addressable
afterwards, so counting them as separate trials has no expression in the
language at all. Folding buys this protection at a cost of its own, taken
up in Section~\ref{sec:folding}: it changes the estimand, and the change is
not always in the conservative direction.

Describing the board is free:

\begin{whole}\begin{lstlisting}[style=tacetcode]
study.note(
    "the top two entries are tied at 396 resolved instances")
\end{lstlisting}\end{whole}

A \lstinline[style=tacetcode]|Note| is not a
\lstinline[style=tacetcode]|Stat|, and no claim will accept one. The boundary
between description and inference is therefore drawn by the analyst rather
than inferred, a consequence of observing programs
(Section~\ref{sec:related}).

Setting two samples against each other is also free:

\begin{whole}\begin{lstlisting}[style=tacetcode]
top.vs(rank2)
\end{lstlisting}\end{whole}

\begin{whole}\begin{lstlisting}
Comparison(
  <20251205_sonar-foundation-agent_claude-opus-4-5: 396/468> vs
  <20251215_livesweagent_claude-opus-4-5: 396/468>, two-sided)
\end{lstlisting}\end{whole}

\lstinline[style=tacetcode]|vs| returns a \emph{comparison}: the two
samples in correspondence by artifact key, two-sided because nothing
records that a direction was chosen before the data was read. A
\lstinline[style=tacetcode]|Comparison| is not yet an inference: it has no
truth value, and converting one raises, with instructions to claim it. Nor
does it name a test. The instrument is chosen from the declarations made at
\lstinline[style=tacetcode]|observe| at the moment the comparison is
claimed, and Section~\ref{sec:lambdat} computes the same choice statically,
before any data arrives.

Inference, by contrast, is priced:

\begin{whole}\begin{lstlisting}[style=tacetcode]
study.claim("rank 1 beats rank 2", top.vs(rank2))
study.claim("rank 1 beats rank 3", top.vs(rank3))
study.claim("rank 1 beats rank 4", top.vs(rank4))
\end{lstlisting}\end{whole}

The first two are charged and refused, on this real data; the third's
refusal is the one this section examines, and its message reports
exactly the two claims made before it:

\begin{whole}\begin{lstlisting}
cannot afford 'rank 1 beats rank 4': p=0.3438 exceeds
the threshold 0.003125 this pool affords, and costs
0.003125 either way.
  pool has 0.003125 left, covers 468 artifacts, charged
  on 2 earlier claims (0 kept): rank 1 beats rank 2,
  rank 1 beats rank 3
\end{lstlisting}\end{whole}

This error message is central to the design: an analyst reading it learns
the stake the claim was refused at, the remaining balance (the pool
\emph{after} this claim's charge, which coincides with that stake whenever
$i = 0.5$), how
many earlier claims in this pool were charged, and how many of those were
actually kept.

A different refusal follows when the instrument is wrong, if Fisher's exact
test is named explicitly against the same two samples:

\begin{whole}\begin{lstlisting}
'rank 1 beats rank 4': these observations are
paired+clustered, and Fisher exact assumes
independent.
  the artifact keys put the two samples in
  correspondence and span several 'repo' values.
  use a mechanism declaring paired+clustered.
\end{lstlisting}\end{whole}

We deliberately distinguish this failure from a budget failure. Here the
price is acceptable and the instrument is wrong, and no amount of wealth
makes an unpaired test correct on paired data.

\begin{figure}[t]
\centering
\includegraphics[width=\textwidth]{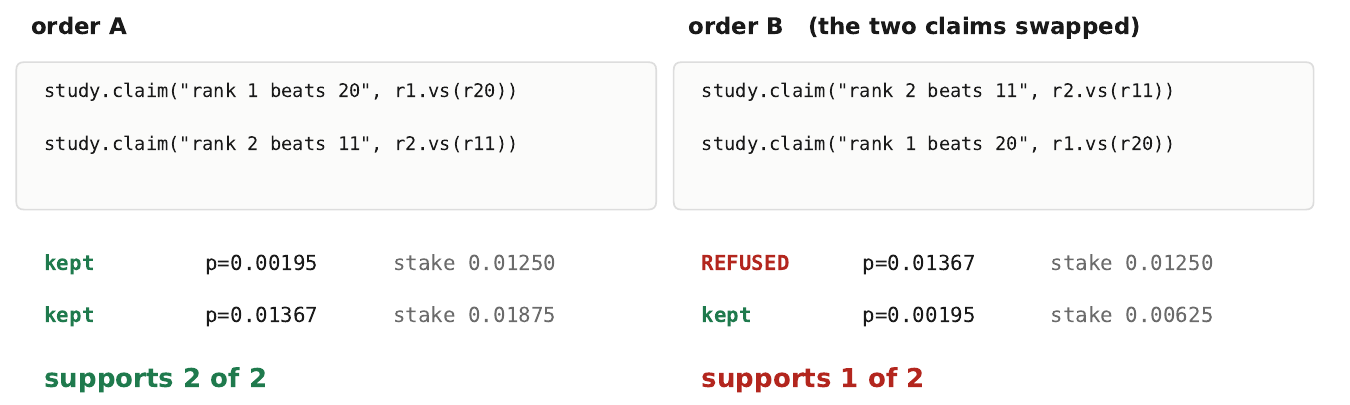}
\Description{Two Tacet programs side by side, differing only in the order of
their two claim statements, with the resulting p-values, stakes and verdicts
below each. Order A supports both claims; order B is refused on the first and
supports only the second.}
\caption{Two orderings of the same two claims, over the shared setup of
Section~\ref{sec:analysis}: order A supports both claims, and order B
supports one.}
\label{fig:order}
\end{figure}

\textbf{Order dependence.} The two programs in Figure~\ref{fig:order} differ
only in the order of their last two lines. Order A pays for the cheaper
claim, which is supported, and its refund raises the stake enough to cover
the more expensive claim behind it; both claims are supported. Order B meets
the more expensive claim first, cannot cover it at the opening stake, and
pays for that failure anyway, which halves the stake. The cheaper claim
behind it is still supported,\footnote{``Supported,'' ``kept,'' and
``accepted'' are this paper's synonyms for one runtime fact: the claim was
made and the pool's stake covered its realized $p$-value. ``Certified'' and
``decided'' belong to the \emph{static} verdict instead, computed from a
declared bound with no data present, which this paper calls
\emph{affordable}; ``solvency'' is a third, distinct fact about whether the
pool could still pay a stake if a claim lost. Appendix~\ref{app:semantics}
gives all three their precise definitions in one place.} so the analysis has
bought one finding where it could have had two. Note that the evidence never
changed. Alpha-investing refunds wealth on a rejection and charges for a
failure, so a supported claim funds the claims behind it, and a claim that
misses starves them.

A \emph{predictable} ordering, throughout this paper, is one fixed
independently of the claims' own $p$-values or effect sizes: reading order
and submission date are the two the case studies use. At the scale of the
whole leaderboard, the gap Figure~\ref{fig:order} illustrates is the
difference between the 0 supported claims either predictable ordering produces and
the up to several hundred a random draw can reach
(Section~\ref{sec:leaderboard}).

\section{Background}
\label{sec:background}

Section~\ref{sec:analysis}'s ``budget'' and ``refund'' are
this section's \emph{wealth} $W$ and $\omega$; its ``stake'' (raised,
halved, opened at a level) is the \emph{amount} actually staked,
$\alpha_j = i \cdot W_{j-1}$ in Section~\ref{sec:mfdr}'s notation, not the
constant \emph{invest fraction} $i$ itself, which Section~\ref{sec:analysis}
never names directly.

\textbf{Families of tests.} A $p$-value is valid for a null hypothesis
$H_0$ when $\Pr_{H_0}(p \le t) \le t$ for every $t$, so a single test at
$p < \alpha$ is wrong \emph{at most} a fraction $\alpha$ of the time when
there is nothing to find; the exact tests this paper uses on discrete
data (Fisher's~\cite{fisher1935}, McNemar's~\cite{mcnemar}, and the cluster
sign-flip~\cite{winkler-permutation}) are super-uniform under the null,
conservative at the thresholds discreteness makes unachievable, which is
why later sections compare each stake against the sharpest $p$-value the
data's own discreteness can produce: a stake below that floor can be met
by no comparison, at any effect size. Running $k$ such tests where nothing is real gives at
least one false rejection with probability at most $1-(1-\alpha)^k$
\emph{if the tests are independent} (at
$\alpha = 0.05$ and $k = 8{,}911$, as on the leaderboard of
Section~\ref{sec:leaderboard}, that already exceeds $0.999999$), and, if they are
not, anywhere between $\alpha$ and $\min(1, k\alpha)$: which of these
applies is not a fact about the $p$-values, and
Section~\ref{sec:leaderboard}'s 8{,}911 comparisons are not independent. The claims a paragraph reads off one
table form exactly this kind of family, whether or not anyone frames them
as tests.

\textbf{Classical procedures.} The classical corrections are offline: each
consumes all $k$ $p$-values as a completed batch rather than one at a time
as they arrive. Bonferroni~\cite{dunn1961} divides $\alpha$ by $k$ and controls the
probability of \emph{any} false rejection, the family-wise error rate.
Benjamini and Hochberg~\cite{benjamini-hochberg} instead bound the
\emph{expected fraction} of rejections that are false, the false discovery
rate, which is less conservative because it charges in proportion to how
many discoveries there are rather than to how many tests were run, but
only under independence or positive dependence among the tests; arbitrary
dependence needs Benjamini and Yekutieli's
extension~\cite{benjamini-yekutieli}, and a family drawn from correlated
observations, like the leaderboard of Section~\ref{sec:leaderboard}, is exactly
the setting the plain procedure's guarantee does not cover. Both
procedures need the full batch of $k$ tests in hand before either can
compute a single rejection, and both are silent on an analyst who runs a
test, likes the result, and decides to run one more: that analyst has
already left the offline setting either procedure assumes.

\textbf{Sequential procedures.} Sequential procedures are online: they spend
a budget instead of dividing one, test by test, as each arrives. Foster and
Stine's alpha-investing~\cite{foster-stine} treats $\alpha$ as \emph{wealth}
$W$, carried from one test to the next
rather than split among a number fixed in advance: each test stakes a
fraction of the current wealth against the $p$-value it returns, earning a
reward $\omega$ back on a rejection and, in Foster and Stine's own rule,
paying a deduction strictly larger than the stake itself on a non-rejection.
The rule this paper actually uses, defined precisely as
Equation~\ref{eq:transformer} in Section~\ref{sec:grade}, deducts the stake
$i \cdot W$ directly instead, whether or not the test rejects, a different,
simpler choice in the same family, sound by Aharoni and Rosset's generalized
alpha-investing~\cite{aharoni-rosset} rather than by matching Foster and
Stine's rule itself. Either way, wealth is a stochastic process rather than
a constant, and the guarantee is on
\[
  \mathrm{mFDR}_\eta = \frac{\mathbb{E}[V]}{\mathbb{E}[R]+\eta}, \qquad \eta > 0,
\]
the ratio of expected false rejections $V$ to expected rejections $R$, with
a constant added to the denominator so that the ratio is defined before
anything has been rejected; \sys{Tacet}'s own choice of $\eta$ is fixed later,
in Theorem~\ref{thm:mfdr}. Because the rule only ever consults wealth
already accumulated, $k$ never has to be known. This property is exactly
what the leaderboard requires: nobody declares in advance how many pairwise
comparisons a table will invite. LORD~\cite{lord} and SAFFRON~\cite{saffron}
are later investment rules in the same family, aimed at FDR itself rather
than the weaker $\mathrm{mFDR}_\eta$ this paper controls; the wealth
transformer of Section~\ref{sec:grade} is one more, and
Section~\ref{sec:limitations} states the guarantee's limits. The
wealth-as-betting-process framing is the testing-by-betting
view of statistical evidence, in which a $p$-value-controlling procedure is
recast as a gambler's wealth against a null~\cite{shafer-betting}, later
given a general safe-testing account via e-values that compose correctly
under optional continuation~\cite{safe-testing}. Alpha-investing fits that
same betting picture, staking wealth against a false-discovery budget
rather than a single hypothesis's null; the wealth this section defines is
that stake.

\textbf{Experimental design.} The shape of the data decides which instrument
a comparison licenses. Two samples measured on the same units,
such as two systems attempted on the same benchmark instances, are
\emph{paired}: McNemar's test~\cite{mcnemar} uses the correlation between
them, while an independent-samples test aimed at the same data discards that
correlation and answers a different question at the number it returns.
Observations sharing a cluster, such as instances drawn from the same
repository, are not independent trials either, and treating them as such is
a separate error, stackable with the first. Importantly, Bonferroni and
alpha-investing alike take the $p$-value as given; neither distinguishes a
correct instrument from an incorrect one computed to the same number of
decimal places. Each of the three exact tests this paper serves has its
own null, precisely: Fisher's is the hypergeometric null that group
membership is independent of outcome, so the two groups' successes are
however many a random split of the pooled outcomes would place in each;
McNemar's is marginal homogeneity, so each discordant pair is equally
likely to favor either side, independent of the others; and the cluster
sign-flip's is that the sign of each cluster's own net difference is
exchangeable, so flipping any subset of clusters' signs leaves the null
distribution unchanged. Section~\ref{sec:lambdat}'s design premise decides
which of these three a comparison's shape licenses; none of the three is
a claim about what happens \emph{within} a cluster beyond that exchange
symmetry, and Section~\ref{sec:design-premise} states exactly how much
dependence each is willing to tolerate there.

\textbf{Pre-registration.} Pre-registration makes cherry-picking visible
from outside the analysis. Gelman and Loken's garden of forking
paths~\cite{gelman-loken} names the risk: an analyst who does not fix in
advance which comparisons matter can, in good faith, follow a data-dependent
path that looks like a single test but carries the multiplicity of every
path not taken. Adaptive data analysis~\cite{adaptive-data-analysis} treats
this directly as a composition problem. Declaring a bound before looking is
the practical answer already in use; Section~\ref{sec:grade} shows how to
turn that declaration into one a checker can decide before the analysis
runs.

\textbf{Selective inference.} Cherry-picking is not always a deliberate
choice: an analyst who looks at several outcomes and reports the one that
came out favorably has conditioned inference on that selection, whether or
not the choice was intentional. The statistics literature already prices
this event: post-selection inference conditions the sampling distribution on
the selection having occurred~\cite{berk-posi,taylor-tibshirani-si}, so a
$p$-value computed as though no selection took place is invalid regardless
of intent. Section~\ref{sec:lambdat}'s \est{\textsc{T-SelectOutcome}}
recovers this conditioning event from the program that did the selecting,
rather than from a model of what the analyst meant to do. The type system
uses the recovered event to withhold the prices that assume no selection
occurred, the one-sided and confirmatory discounts; the $p$-value itself is
never recomputed, and Section~\ref{sec:threats} states what that leaves
open.

Each procedure above needs facts about the analysis as input, and two of
them (what was read, and how the observations are arranged) are properties
of the \emph{program} that produced the numbers, invisible in the numbers
alone.

\section{Declarations in \sys{Tacet}}
\label{sec:declarations}

Three inputs to \sys{Tacet} are declared rather than inferred. All three are
modeling judgments that no analysis of the data can recover, and the
checker's job is to hold the analyst to them consistently, with no attempt
to guess them. The guarantee that follows is conditional on that consistency
being honest: nothing in the type system checks that a declared unit,
schema, or bound is true of the run, only that the analysis proceeds as
though it were (Section~\ref{sec:metatheory}).

\textbf{The artifact.} The artifact is the output of one act of generation:
the \emph{unit of analysis}, counted a single time however many outcomes
derive from it. For example, when a
model produces a single patch that touches seven functions, checked by nine
differential tests, whether the unit is the patch or the function depends on
whether the model can misread one function's specification without breaking
the others. That is a claim about the failure mode, something the data
alone cannot decide. The program supplies the structure, and the analyst
supplies the judgment.

This same artifact declaration is easy to misread. It is tempting to read it
as repairing a mis-measured number: the naive count was too
large, the artifact count corrects it, and the evidence weakens by whatever
factor separates the two. That reading is wrong: folding substitutes a
different estimand for the evidence rather than attenuating it, and
Section~\ref{sec:folding} gives a counterexample in which two
observation-by-observation indistinguishable groups separate after
folding. This is why
the artifact cannot be inferred even in principle: choosing the unit
determines which question is being asked.

We should be careful about how much this declaration buys, because we
measured it and the answer is smaller than the motivation suggests. Across
17 analyses in two corpora, 8 analyses of this project's own and 9 public
benchmarks fetched fresh (HumanEval, MBPP, and SWE-bench Lite among
them), the
ratio between a naive count of outcomes and a count of generated artifacts is
bimodal: exactly 1.00 for 11 of them, where an analysis asks one question
per generation, and between 2.53 and 9.00 for the other 6, where several
checks apply to one artifact. Nothing falls in between, and 35\% of the
analyses are affected. However, no analysis in either corpus is answering a
question it did not mean to ask, because the mature benchmarks in our
corpus write their intended unit into their own harness:
\lstinline|pass@k|~\cite{humaneval,mbpp} is computed at the level of the problem, with assertions
playing no role in the count. \lstinline|pass@k| exists
because somebody reasoned about the unit of analysis once and wrote the
answer into a harness; that reasoning is per-benchmark, invisible in the
results, and unavailable to anyone writing a one-off analysis. A harness is,
in effect, an informal type system for a single benchmark. Our approach
generalizes it to any analysis.

\textbf{The schema.} The schema declares which key fields identify
observations, which fields determine which, and which field is the level at
which independence is assumed. Declaring that an instance determines its
repository lets the checker decide statically whether a
comparison is paired or clustered (Section~\ref{sec:lambdat}), and it is the
declaration whose absence costs a factor of 48 on the leaderboard
(Section~\ref{sec:leaderboard}).

\textbf{The bound.} The bound declares what each claim expects to find. This
declaration is optional, and an analysis that omits it still runs under a
runtime monitor. However, an analysis that supplies it can be checked before
it is run, because the wealth transformer is antitone in the $p$-value and
monotone in the wealth already accumulated, and a declared upper bound
therefore under-approximates the wealth an honest execution will have
(Section~\ref{sec:grade}). Pre-registration, in this
setting, is precisely the information the checker needs to answer the
affordability question early.

Everything else is inferred. Footprints propagate through ordinary
arithmetic with nothing declared, and the design follows from the schema and
the selections, as the worked analysis of Section~\ref{sec:implementation}
shows.

\section{The Core Language \texorpdfstring{$\mathcal{T}$}{T}}
\label{sec:lambdat}

$\mathcal{T}$ consists of two mutually embedded sublanguages, following the
design of \sys{Duet}~\cite{duet}. Estimation is free and tracks what data a value
rests on. Claim is priced and tracks what the budget does. The only route
from one sublanguage to the other is a mechanism, and the grammar below is
small enough to state this directly: only \kw{claim} crosses from $e$ to
$c$, and nothing crosses back.

Every rule and equation from here to the appendices is colored in \sys{Duet}'s own
convention, stated once: \est{estimation}, $\est{\mathsf{Stat}}$, is
\est{bluish green}; \clc{claim}, $\clc{\mathsf{Claim}}$, is
\clc{vermillion};\footnote{The three colors are Wong's colorblind-safe
palette~\cite{wong2011}, checked under a color-vision-deficiency
simulation~\cite{machado2009} to remain pairwise distinguishable. Color
is a reading aid rather than a carrier of information no other mark
makes: which sublanguage a term belongs to is also decidable from the
notation alone, so a grayscale print or a colorblind reader loses
convenience, not content.} and the numeric and type-level
machinery the two sublanguages share (the design lattice, and $T_p$ and
$T_\bot$ as functions rather than as a claim's payload) is \ty{blue}. A
rule is colored line by line instead of as a whole, so a rule whose premise
is estimation and whose conclusion is a claim shows both: \clc{\textsc{T-Claim}}
(Section~\ref{sec:claims}) has an \est{estimation} premise, a \ty{blue} side-condition,
and a \clc{claim} conclusion, because it is the rule where all three meet.

{\setlength{\arraycolsep}{2pt}
\[
\begin{array}{llll}
\est{\text{sample}} & \est{v} & \est{::=} &
  \est{\kw{observe}_\Sigma(r, a) \mid v[k] \mid v\langle q \rangle} \\
\est{\text{estimation}} & \est{e} & \est{::=} &
  \est{v \mid e_1 \oplus e_2} \\
\clc{\text{claim}} & \clc{c} & \clc{::=} &
  \clc{\kw{note}(s) \mid \kw{claim}^m_{\bar p}(\ell, v_1 \mathbin{\kw{vs}} v_2) \mid c_1 \,;\, c_2}
\end{array}
\]}

\noindent Selection ($[k]$ and $\langle q \rangle$) applies only to a
\est{sample} $v$, never to an arithmetic combination: $e_1 \oplus e_2$
already denotes a single reduced value (Section~\ref{sec:estimation}
fixes what that means), with no per-observation structure left for a key
literal to filter or a predicate to examine. A sample embeds into
estimation ($v$ is always a well-formed $e$), so every rule below that
concludes a $\est{\Stat{\Sigma}{\varphi}{\kappa}{t}}$ type for some $e$
covers a sample as the case where $e = v$; only \est{\textsc{T-SelectKey}}
and \est{\textsc{T-SelectOutcome}} are stated over the narrower sort,
because only they need to be. \kw{vs} is likewise restricted to $v_1,
v_2$, because a mechanism consumes per-observation structure and an
aggregate has none left for $\mathit{price}$ to align by key; this keeps
every claim's two sides samples a mechanism can actually run on, matching
the shipped \lstinline[style=tacetcode]|Comparison|.

\noindent A claim term $c$ is, up to associativity of $;$ (which we
identify throughout), a single sequence $c_1 \,;\, \dots \,;\, c_n$ of
\kw{note} and \kw{claim} leaves; the rules below assign it a type only if
every leaf does. A \emph{program} additionally pairs a claim term with a
plan $\pi$ (Appendix~\ref{app:syntax}), and this section's judgment extends
to $\pi \vdash c : \tau$ there. Appendix~\ref{app:syntax} restates the
grammar with the metavariables it ranges over, and Appendix~\ref{app:semantics}
steps through this normal form to prove Theorem~\ref{thm:static}.

\subsection{Schemas}
\label{sec:schemas}

An artifact key is structured, and its structure determines whether a
comparison is paired or clustered. A schema is declared once, at
\kw{observe}.

\begin{defi}[Schema]
\label{def:schema}
$\Sigma = (F, I, D, \gamma)$ where $F$ is the key's fields, $I \subseteq F$ those that
\emph{identify} observations within a selection, $D \subseteq F \times F$ a set of
functional dependencies written child $\mapsto$ parent, and $\gamma \in F \cup
\{\cdot\}$ the level at which independence is assumed.
\end{defi}

\noindent $D$ encodes the nesting: a benchmark instance determines its
repository, and a task determines its mode. Declaring $D$ makes the design
check static, with no runtime inspection of values. $I$ is the alignment
key, and its scope is deliberately relative: on the leaderboard, two
submissions' rows share the instance field, so $I = \{\mathit{instance}\}$
keys no run containing both. $I$ is asked to key the observations any one
selection keeps, all of which agree on the fields that selection pinned;
Definition~\ref{def:conforms} (Section~\ref{sec:metatheory}) states the
relativization precisely, and Theorem~\ref{thm:design} is proved against it.
This full schema is a separate, optional declaration from the artifact key
and cluster level \kw{observe} takes (Section~\ref{sec:declarations}): where
it is supplied
(\lstinline[style=tacetcode]|Schema(fields=..., identifies=..., determines=..., cluster=...)|
passed to \kw{preregister}),
the checker computes $\mathit{design}$ from $\Sigma, \kappa_1, \kappa_2$
alone, before any data is read; where it is not, the same judgment is read
directly off the observed key values. Theorem~\ref{thm:design} proves the
static route sound, and Section~\ref{sec:metatheory} checks the two
routes' agreement empirically.

A selection carries a \emph{constraint} $\kappa : F \rightharpoonup \mathit{Val}$,
the fields it pinned. A key literal $k$ is a pair $(f, v) \in F \times
\mathit{Val}$; an observation $u$ \emph{matches} $k$ when $f(u) = v$; and
$\kappa \uplus (f, v)$ extends $\kappa$ with $f \mapsto v$, defined only
when $f \notin \mathrm{dom}(\kappa)$, so pinning a field twice (the same
value or a conflicting one) has no derivation rather than a silent
overwrite. Dually, $\kappa_1 \sqcap \kappa_2$ is the partial function
defined exactly where $\kappa_1$ and $\kappa_2$ are both defined and agree,
taking that shared value, so a derived value keeps only the fields both
operands pinned the same way. Estimation types are
$\est{\Stat{\Sigma}{\varphi}{\kappa}{t}}$, indexed by the schema and
carrying a footprint, a constraint, and a \emph{purity bit}
$\est{t \in \{\clean, \rd\}}$; claim types are $\clc{\Claim{w}}$, carrying a
wealth transformer.

\textbf{Why the type is indexed by $\Sigma$.} The schema is declared at
\kw{observe} and nowhere else, so without the index it appears in the design
premise of \clc{\textsc{T-Claim}} as a free metavariable, buried inside
$e_1$ and $e_2$, erased by the type, and therefore satisfiable by choosing
whichever schema makes the premise hold. Indexing the type binds it: the
premise's $\Sigma$ is the one \est{\textsc{T-Observe}} declared, propagated
by every estimation rule. The index also closes a cross-schema hole in
\est{\textsc{T-Arith}}, where sharing the metavariable is now what requires
both operands to have been observed under the same schema; combining two
estimates whose keys have different fields has no meaning that
$\kappa_1 \sqcap \kappa_2$ could summarize.

\textbf{The purity bit.} $\est{t}$ records whether any outcome was consulted
in building the value, and it is the one fact $\kappa$ cannot record.
$\kappa$ says which key literals a selection pinned; it says nothing about
\emph{how} the surviving observations were chosen, so an outcome-dependent
narrowing leaves $\kappa$ intact while changing which observations the value
rests on. The two bits order as $\est{\clean \sqsubset \rd}$, with join
$\est{\sqcup}$; $\est{\rd}$ is absorbing, because once an outcome has been
read no later operation un-reads it. Pre-registration is a promise about a
comparison fixed before the data was seen, so the two rules that cash a
registration in, \clc{\textsc{T-OneSided}} and
\clc{\textsc{T-Registered}} (Appendix~\ref{app:plan}), demand
$\est{\clean}$ on both sides.

\textbf{The role of the footprint.} $\varphi$ is diagnostic: no typing rule
below inspects it in a conclusion or a side condition, so deleting it from
$\est{\Stat{\Sigma}{\varphi}{\kappa}{t}}$ changes which programs typecheck
not at all, and Section~\ref{sec:implementation}
makes the same concession about the shipped audit. The weight a first
reading might assign to $\varphi$ is carried by $\kappa$: the contrast
between \est{\textsc{T-SelectKey}} and \est{\textsc{T-SelectOutcome}} below
is, for pricing purposes, a contrast in what each does to $\kappa$, not to
$\varphi$, since it is $\kappa$ alone that reaches \clc{\textsc{T-Claim}}'s
design premise. The footprint stays in the type because its consumer is
outside the rules: refusal messages and the audit record print it, and a
report of what was read becomes load-bearing in (H1)'s external, timestamped
record (Section~\ref{sec:mfdr}); every rule below is deliberately
indifferent to it. The report those consumers receive is also faithful, by
Lemma~\ref{lem:footprint} (Appendix~\ref{app:estimation}): the sample a
value rests on lies inside its footprint, and outcomes outside the footprint
can affect neither the sample nor any value computed from it.

\subsection{Estimation}
\label{sec:estimation}

\begin{mathpar}
\inferrule[T-Observe]{ }{\est{\vdash \kw{observe}_\Sigma(r, a) :
  \Stat{\Sigma}{\mathit{units}(r,a)}{\emptyset}{\clean}}}

\inferrule[T-SelectKey]{\est{\vdash v : \Stat{\Sigma}{\varphi}{\kappa}{t}} \\
  \est{\psi = \{u \in \varphi \mid u \text{ matches } k\}}}
  {\est{\vdash v[k] : \Stat{\Sigma}{\psi}{\kappa \uplus k}{t}}}

\inferrule[T-SelectOutcome]{\est{\vdash v : \Stat{\Sigma}{\varphi}{\kappa}{t}}}
  {\est{\vdash v\langle q \rangle : \Stat{\Sigma}{\varphi}{\kappa}{\rd}}}

\inferrule[T-Arith]{\est{\vdash e_i : \Stat{\Sigma}{\varphi_i}{\kappa_i}{t_i}} \\
  \est{\oplus \in \mathit{Ops}}}
  {\est{\vdash e_1 \oplus e_2 :
   \Stat{\Sigma}{\varphi_1 \cup \varphi_2}{\kappa_1 \sqcap \kappa_2}{t_1 \sqcup t_2}}}
\end{mathpar}

In \est{\textsc{T-Observe}}, $\est{\mathit{units}(r,a)}$ is the set of
artifact keys $a$ assigns to the rows of $r$: a fresh observation rests on
every artifact the run generated and pins nothing.

The key rule is \est{\textsc{T-SelectOutcome}}. Its conclusion keeps the
\emph{input's} footprint, not the subset the predicate retained, so a report
of what was read never shrinks just because looking was followed by
discarding. It actually changes the purity bit instead: it sets $\est{t}$ to
$\est{\rd}$, and that is the effect $\kappa$ and $\varphi$ together cannot
record on their own. Contrast \est{\textsc{T-SelectKey}}, which narrows the
footprint precisely because the predicate never consulted an outcome, and
which therefore passes $\est{t}$ through unchanged. Which rule applies is
syntactic, decided by the selection form itself rather than by inspecting a
predicate's body: $v[k]$ is always \est{\textsc{T-SelectKey}}, a literal
matched against the key, and $v\langle q \rangle$ is always
\est{\textsc{T-SelectOutcome}}, a predicate given the outcome. A predicate
that in fact reads neither key nor outcome still enters through $v\langle q
\rangle$ and is still priced as $\est{\rd}$: conservative, since nothing
downstream needs to know a predicate ignored its argument,
only whether it could have used it. Under information-flow
terminology~\cite{sabelfeld-myers}, the purity bit is exactly the question
``did this value depend on an outcome?'', and \est{\textsc{T-SelectOutcome}}
is a dependency judgment in the sense of the Dependency Core
Calculus~\cite{dcc}: it is deciding whether a selection is in the image of
an outcome-typed value, not bounding a distance in a metric space.
Appendix~\ref{app:estimation} proves the corresponding noninterference
statement (Lemma~\ref{lem:clean}): a $\est{\clean}$ selection is a function
of the artifact keys alone, so the two rules that demand $\est{\clean}$ on
both sides (Appendix~\ref{app:plan}) cash a semantic guarantee that the
comparison was fixed before any outcome existed, beyond the naming
convention the bit would otherwise be. The two rules differ the same way on
$\kappa$, and that difference is the one with pricing consequences:
\est{\textsc{T-SelectKey}} pins $k$ into $\kappa$, while
\est{\textsc{T-SelectOutcome}} leaves $\kappa$ exactly as it found it, so no
amount of outcome-reading can manufacture a pairing or clustering structure
for \clc{\textsc{T-Claim}}'s design premise to find below that the key
literals alone did not establish.

In \est{\textsc{T-Arith}}, $\kappa_1 \sqcap \kappa_2$ keeps only the fields
both sides pinned to the same value, which is the most a derived value can
soundly claim about its own provenance, and $t_1 \sqcup t_2$ keeps
$\est{\rd}$ if either operand carries it.

\textbf{What $\est{\oplus}$ ranges over.} $\est{\oplus}$ is not an arbitrary
binary operation on estimates. If it were, an outcome-sensitive combinator
would do the work of \est{\textsc{T-SelectOutcome}} without triggering it,
and \est{\textsc{T-Arith}} would return a type indistinguishable from a
clean one. $\mathit{Ops}$ is therefore a fixed signature: each
$\est{\oplus \in \mathit{Ops}}$ has type
\[
  \est{\oplus \;:\; \mathit{Val} \times \mathit{Val} \rightharpoonup
   \mathit{Val}}
\]
and combines the operands' own values directly: each estimate already
denotes a single aggregate $\mathit{Val}$ (a rate, say, reduced over its
own footprint), with no
notion of a value indexed by unit for $\est{\oplus}$ to range over, and the
partiality in $\rightharpoonup$ is the ordinary kind field arithmetic
already has ($\div$ undefined at a zero divisor). The footprint union
$\varphi_1 \cup \varphi_2$ in \est{\textsc{T-Arith}}'s conclusion is
fixed by the operands' footprints alone, so $\est{\oplus}$ cannot drop,
keep, or reweight a unit as a function of that unit's outcome.
Concretely, $\mathit{Ops}$ is the ordinary field arithmetic ($+$, $-$,
$\times$, $\div$, and scalar versions of the same). Anything outside
that signature is a selection and must be written as one, where
\est{\textsc{T-SelectOutcome}} can see it.

\subsection{Claims}
\label{sec:claims}

\begin{mathpar}
\inferrule[T-Note]{ }{\clc{\vdash \kw{note}(s) : \Claim{\mathit{id}}}}

\inferrule[T-Claim]{\est{\vdash v_i : \Stat{\Sigma}{\varphi_i}{\kappa_i}{t_i}} \\
  \ty{\mathit{design}(\Sigma, \kappa_1, \kappa_2) = \langle \mathit{pr},
   \mathit{cl} \rangle} \\
  \ty{\mathit{assumes}(m) = \langle \mathit{pr}, \mathit{cl}' \rangle} \\
  \ty{\mathit{cl} \sqsubseteq \mathit{cl}'} \\
  \ty{\mathit{sided}(m) = 2}}
  {\clc{\vdash \kw{claim}^m_{\bar p}(\ell, v_1 \mathbin{\kw{vs}} v_2) :
   \Claim{T_{\bar p}}}}

\inferrule[T-Seq]{\clc{\vdash c_i : \Claim{w_i}}}
  {\clc{\vdash c_1 ; c_2 : \Claim{w_2 \circ w_1}}}
\end{mathpar}

\kw{note} yields the identity transformer. It is free, and because its type
is not a $\est{\mathsf{Stat}}$, no claim will accept it. Transformers
compose in program order, which is where non-commutativity enters and why
$\circ$ here is not symmetric. $\mathit{sided}(m)$, also a premise above, is
a fixed fact about the mechanism $m$ itself (one-sided or two-sided) that
Appendix~\ref{app:plan}'s \clc{\textsc{T-OneSided}} and
\clc{\textsc{T-Registered}} also consult; \clc{\textsc{T-Claim}} requires
$\mathit{sided}(m) = 2$ because it is the rule for two-sided mechanisms,
and a one-sided one can only be typed through \clc{\textsc{T-OneSided}}
instead.

\textbf{The subscript at $\bar p = \bot$.} \clc{\textsc{T-Claim}}'s
conclusion $\Claim{T_{\bar p}}$ is written for every $\bar p \in [0,1] \cup
\{\bot\}$: at a declared $p$ it is Equation~\ref{eq:transformer}'s $T_p$,
and at $\bar p = \bot$ (no bound declared) it is
Equation~\ref{eq:static-transformer}'s $T_\bot$, the no-refund
transformer (Section~\ref{sec:grade}). No rule needs to disambiguate
which equation a given $T_{\bar p}$ came from, because the index already
says.

\textbf{No context.} The judgments above have no typing context, because
$\mathcal{T}$ has no variables and no binders: $e$ is built from \kw{observe}
by selection and arithmetic, $c$ is a sequence of leaves, and nothing in
either grammar can be named. Carrying a context no rule reads would add
nothing, and would invite a comparison with \sys{Duet}'s
substructural context splitting that $\mathcal{T}$ cannot support, since
there is nothing here a program could name twice, and correspondingly no
substitution theorem to prove.

\textbf{$\clc{\mathsf{Claim}}$ as a graded sequential composition.} What
remains, and what \clc{\textsc{T-Seq}} states, is that
$\clc{\mathsf{Claim}}$ is graded by sequential composition in a monoid of
wealth transformers. \sys{Duet}~\cite{duet} splits into two languages: a
sensitivity language, following \sys{Fuzz}~\cite{fuzz}, whose grade is a real
number and is allowed to scale; and a privacy language that composes
differentially private computations by threading one privacy-typed term
through \textsc{Bind}, whose grade is also a number but is not allowed to
scale, because $(\epsilon,\delta)$-privacy cost does not form the metric
that scaling under \sys{Fuzz}'s linear types depends on.
$\clc{\mathsf{Claim}}$ plays the role of \sys{Duet}'s privacy language.
\clc{\textsc{T-Seq}} sequences claims as \textsc{Bind} sequences
privacy-typed terms, with no binding occurrence to carry a value forward,
since nothing a claim computes is consumed by the next one; only the wealth
it leaves behind is. The shape of the grade differs. \sys{Duet}'s
privacy cost is a number, and composing two of them is addition, which
commutes. Alpha-investing's refund is conditional on whether a claim was
rejected, so $T_p$ must be a function of $W$ rather than a number, and
composing two of them is $\circ$, which does not commute; this is the source
of Section~\ref{sec:grade}'s non-commutativity, a stateful, data-dependent
refund forcing the grade to stop being a scalar. A graded system whose
grades commute may reorder its effects freely; $\mathcal{T}$'s may not, at
the cost Figure~\ref{fig:order} shows.

\textbf{Grades outside the numeric lineage.} Both sublanguages are graded in a
sense already familiar from this lineage, and neither grade is the
familiar one. \sys{Fuzz}, \sys{DFuzz}, \sys{Duet}, \sys{Solo}, and
\sys{Jazz} all grade by a \emph{numeric} resource, sensitivity or privacy
cost, composing under addition or a scaling action, commutative in every
case, \sys{Solo}'s own graded treatment of privacy cost~\cite{solo}
included, since the grade it threads is still a number under an
associative, commutative combination, only inferred relative to declared
units rather than to variables. Estimation's grade, $\varphi$ paired with
the purity bit $\est{t}$, is not a cost at all: it is a set of declared
units under union together with a two-point lattice recording whether an
outcome was consulted, a dependency judgment in the sense of the
Dependency Core Calculus~\cite{dcc} (Section~\ref{sec:estimation}) rather
than a sensitivity bound. Claim's grade, $T_p$, is numeric but composes
under function composition rather than addition, and $\circ$ does not
commute where every numeric grade in this lineage's $+$ or scaling action
always would. Neither choice is a variation on the other's theme, and the
paper's technical content is choosing both correctly and connecting them
through the one mechanism able to cross from one sublanguage to the
other: \kw{claim} reads a $\est{\clean}$ purity bit as a precondition and
produces a $\clc{T_{\bar p}}$ transformer as its only output
(\clc{\textsc{T-Claim}}, \clc{\textsc{T-OneSided}}), so a dependency fact
about what was read becomes an input to a resource fact about what can
still be afforded, and nothing else in the calculus can make that
crossing. Once the two grades are fixed this way, the short inductions of
Section~\ref{sec:metatheory} are short \emph{because} the grades were
chosen to make them so, exactly as in the coeffect and graded-(co)monad
literature this lineage belongs to~\cite{coeffects,gaboardi-grading}: the
substantial work in a graded soundness result is finding the semiring or
resource monoid that fits the domain, and the induction that verifies it,
once found, is routine by design rather than in spite of the design.

\subsection{The Design Premise}
\label{sec:design-premise}

$\mathit{assumes}(m)$ is mechanism $m$'s declaration about its inputs, drawn
from the design lattice: the product of the two facts computed independently
below, namely whether the observations are in correspondence across the two
selections (paired) and whether they nest inside larger units (clustered).
That is four points, and they form the evident product lattice, with
independent at the bottom and paired+clustered at the top; the rest of
this paper names the points by that order. The premise, however, checks
something else: neither the three-point chain the three served instruments
suggest nor the pointwise $\sqsubseteq$ of the product itself gives the
right premise, because the two coordinates are not the same kind of fact.

Clustering \emph{is} slack: a mechanism that models dependence the data
lacks loses only power and remains valid, so the premise need only check
that nothing is assumed away, and $\sqsubseteq$ on that coordinate is the two-point chain
$\text{no} \sqsubseteq \text{yes}$. Pairing is not slack, for the reason
stated under ``the fourth lattice point'' below: it is the correspondence
the statistic is computed \emph{over}, and a mechanism assuming it where it
is absent breaks the alignment rather than weakening it. On that coordinate
the premise is equality. So \clc{\textsc{T-Claim}}'s premise is two separate
2-chains, one ordered and one discrete, which writing $\mathit{design}$ and
$\mathit{assumes}(m)$ as pairs $\langle \mathit{pr}, \mathit{cl} \rangle$
and sharing the metavariable $\mathit{pr}$ between them expresses, and
which a single $\sqsubseteq$ on the diamond would not. The other side of
the premise reads what the data actually is, from the schema and the two
constraints alone:
{\everydisplay{} 
\begin{align*}
\ty{\mathit{differing}} &\ty{= \{ f \in F \mid \kappa_1(f), \kappa_2(f)
   \text{ defined, unequal} \}} \\
\ty{\mathit{distinguishing}} &\ty{= \text{least } X \supseteq
   \mathit{differing} \text{ such that}} \\
  &\ty{\qquad (f \mapsto g) \in D \text{ and } g \in X \text{ imply } f \in X} \\
\ty{\mathit{residual}} &\ty{= F \setminus \mathit{distinguishing}} \\
\ty{\mathit{paired}} &\ty{\iff I \subseteq \mathit{residual}} \\
\ty{\mathit{clustered}} &\ty{\iff \gamma \neq \cdot \text{ and not }
   (\kappa_1(\gamma) = \kappa_2(\gamma) \text{ defined})}
\end{align*}
}
and $\ty{\mathit{design}(\Sigma,\kappa_1,\kappa_2) = \langle
\mathit{paired}, \mathit{clustered} \rangle}$.

\textbf{Why $\mathit{paired}$ needs the whole key.}
The soundness argument of Section~\ref{sec:metatheory} needs the whole key
to survive, $I \subseteq \mathit{residual}$: if every field
of $I$ is residual, two distinct observations agreeing on the residual tuple
agree on all of $I$; both also agree on every field the selection keeping
them pinned, and $I$ together with the pinned fields identifies
observations (Definition~\ref{def:conforms}), so they are the same
observation. A single surviving field of a compound key
would not give this: with $I = \{\mathit{task}, \mathit{attempt}\}$ and only
$\mathit{task}$ residual, two attempts at the same task would collide on the
residual key, and the alignment would have to keep one and drop the other, a
$p$-value that depends on row order.

\textbf{Why $\mathit{clustered}$ does not consult $\mathit{distinguishing}$.}
Pinning $\mathit{repo}{=}A$ against $\mathit{repo}{=}B$ is a cross-cluster
comparison, each side confined to a single repository, the most clustered
case there is. The one case genuinely safe to call unclustered is both sides
pinned to the \emph{same} value, which confines every observation to one
cluster by construction, and that is the only condition the formula above
checks. ``Safe'' is relative to the dependence model
Theorem~\ref{thm:folding}'s antecedent and (H2) already state (arbitrary
dependence within a shared cluster value, independence across distinct
ones): pinning both sides to one cluster value removes every pair of
observations that model permits to be dependent. The guard
$\gamma \neq \cdot$ is necessary: without it, a
schema that declares no cluster level at all ($\gamma = \cdot$) makes
$\kappa_1(\gamma)$ and $\kappa_2(\gamma)$ queries against a symbol that is not a
field, so ``$\kappa_1(\gamma) = \kappa_2(\gamma)$ defined'' is vacuously false and
$\mathit{clustered}$ would read true for every comparison under a schema
that never mentions clustering at all.

\textbf{Non-degeneracy.} $\mathit{design}$ is \emph{partial}. It is defined
only when
\[
  \ty{\mathrm{dom}(\kappa_1) = \mathrm{dom}(\kappa_2)
   \quad\text{and}\quad \mathit{differing} \neq \emptyset,}
\]
and a claim whose two constraints fail either condition is refused, with
the reason stated, and is assigned no design; we call this uniform
response the \emph{opaque-key policy}, and every case below that cannot be
read off the keys receives it.
Both conditions are needed, and neither is a formality. $\mathit{differing}$
asks which fields the two sides pinned to \emph{different} values, so it
requires both sides to pin $f$ at all; with $\kappa_1 = \emptyset$ and
$\kappa_2 = \{\mathit{repo} \mapsto A\}$ it is empty, $\mathit{distinguishing}$
is empty, $\mathit{residual}$ is all of $F$, and $\mathit{paired}$ holds for
any schema that identifies observations at all. That is a comparison of a
set against its own subset typing as paired, and the degenerate case
$\kw{claim}(\ell, v \mathbin{\kw{vs}} v)$ (a set against itself) typing as
paired too. $\mathit{clustered}$ holds in the same situation, so the design
lands at the top of the lattice and would serve the cluster sign-flip on two
overlapping selections; the residual key then aligns each side against
itself or drops the non-overlap silently, which is exactly the collision
whose cost is measured below. Requiring $\mathrm{dom}(\kappa_1) =
\mathrm{dom}(\kappa_2)$ rules out the nesting case, and requiring
$\mathit{differing} \neq \emptyset$ rules out the identical one. Neither is
a design the checker should guess at, and the language refuses any
comparison whose shape it cannot read off the keys.

Appendix~\ref{app:design} records two further details of the closure: why
it is stated as a least fixed point running parent to child rather than as
a closure along $D$'s edges, and why restricting $D$ to single-field
dependencies loses nothing the rest of the development needs.

\textbf{Instrument derivation.} Each point of the lattice names its
instrument, and the order is read in one direction only. When the analyst
names no mechanism, the library derives one from the design by a rule stated
in three lines: a clustered design demands the cluster sign-flip permutation
test, whose unit of randomness is the cluster; a paired one demands
McNemar's exact test on the discordant pairs; an independent one receives
Fisher's exact test. The cluster sign-flip statistic is the sum, within each
cluster, of matched pairs where the left side passed and the right did not
minus pairs where the right passed and the left did not, taking the
absolute value of the sum of these per-cluster net differences across
clusters; the two-sided $p$-value is the fraction of the $2^k$ sign
assignments over $k$ nonempty-difference clusters whose statistic is at
least as extreme as the one observed, exact rather than
asymptotic~\cite{winkler-permutation}, and computed by
dynamic-programming convolution rather than enumeration
(Appendix~\ref{app:runtime} reports the difference that makes at large
$k$). One function is the sole authority for the rule,
called by the checker and the monitor alike, so the two cannot drift
(Section~\ref{sec:metatheory}). In the premise's notation,
$\mathit{assumes}(\text{Fisher}) = \langle \text{no}, \text{no} \rangle$,
$\mathit{assumes}(\text{McNemar}) = \langle \text{yes}, \text{no} \rangle$,
and $\mathit{assumes}(\text{sign-flip}) = \langle \text{yes}, \text{yes}
\rangle$. A
mechanism may still be named explicitly. On the clustering coordinate the
premise refuses it only for assuming \emph{away} structure that is present,
which produces the second refusal of Section~\ref{sec:analysis}, where
Fisher's exact test is named against data that turned out paired and
clustered. Over-modeling clustering is admitted, for the reason the clustering
coordinate was granted its slack above; under-modeling is the checker's to
stop. The pairing coordinate is not slack
in this way at all, in either direction; the next subsection states why and
what happens when a mechanism assumes pairing the data does not have.

\textbf{The fourth lattice point.} The fourth lattice point is refused: no
instrument is offered for it. Three points carry instruments; the fourth,
clustered without pairing, does not, because the cluster sign-flip is a
paired procedure, built on
differences of matched observations, and on unmatched data the
correspondence it consumes does not exist. The point is not exotic: two
disjoint sets of instances drawn from the same repositories have no such
correspondence, and the derivation above reads it off their keys correctly. It
therefore receives the opaque-key policy (refused with the reason
stated) rather than being priced by an instrument built for a different
design. Pricing it anyway has a measured cost: the alignment key
degenerates to the cluster alone and the resulting $p$-value depends on
the order the rows arrived in, turning $p = 1.0$ into $p = 0.0078$ on data
carrying no signal at all (Appendix~\ref{app:design}). The refusal is a
fact about \sys{Tacet}'s three-mechanism menu, not about the design point
itself: a two-sample cluster-level permutation test, or a cluster-robust
variance estimator over the aggregated cluster
means~\cite{cameron-miller-cluster}, is a standard instrument for exactly
this design, and either could serve it; neither is implemented here, and
the opaque-key policy refuses the point rather than serving it with the
one instrument the library has and the wrong correspondence for.

The refusal also marks where the asymmetry above stops holding, and it is
why the premise is two chains and not one order. A single $\sqsubseteq$
on the diamond would admit the sign-flip on this point by the same rule
that lets a paired test run harmlessly on independent data, which is
precisely what the pairing coordinate's demand for equality excludes.
Equality only checks that
$\mathit{assumes}(m)$ matches what $\mathit{design}$ computed, and says
nothing about whether $\mathit{design}$ computed the truth, which rests on
the schema's declared key and can be wrong if the key is mis-declared. The
runtime alignment therefore checks its own injectivity independently: any
residual key that fails to identify observations uniquely on either side
is refused rather than silently collapsed. The check is implemented in the
alignment routine both mechanisms share, so it covers every mechanism that
aligns by key; the first line of defense is sound given the key is
genuine, and the second holds when it is not.

Failing this premise is a wrong-instrument failure, distinct from a budget
failure: the claim is priced correctly and the instrument is wrong
(Section~\ref{sec:analysis}).

\section{Wealth Transformers}
\label{sec:grade}

With invest fraction $0 < i \le 1$ and refund $\omega \ge 0$, one claim's
transformer is
\begin{equation}
  \ty{T_p(W) \;=\; (1-i)\cdot W \;+\; \omega \cdot [\, p \le i \cdot W \,]}
  \label{eq:transformer}
\end{equation}
and composition is function composition. Every numeric example in this
paper, including the ones already given in Sections~\ref{sec:intro}
and~\ref{sec:analysis}, uses $i = 0.5$, $w_0 = \alpha/2$, and
$\omega = \alpha/2$, the instantiation Section~\ref{sec:mfdr}'s (H4)
requires ($0 \le \omega \le \alpha \le 1$ and $0 \le w_0 \le \alpha\eta$,
which this choice of $w_0$ satisfies for every $\alpha \le 1/2$, the range
every example in this paper uses), unless stated otherwise. This choice is
aggressive, staking half the pool on every claim, but the leaderboard's
headline numbers are no artifact of it: rerunning
Section~\ref{sec:leaderboard}'s reading order and strongest-first orderings at
$i \in \{0.1, 0.25, 0.5, 0.75, 0.9\}$ leaves both unchanged (0 of 8{,}911 and
6{,}545 of 8{,}911 at every value tried), because the claims that sink reading
order and the claims that fund strongest-first are the same ones regardless
of how large a bite each stake takes; for strongest-first specifically,
``The ceiling stake'' below gives the sharper reason. Every
law in this section is
property-tested over a uniform grid of 400 wealth values, and the three the proof of
Theorem~\ref{thm:static} depends on (monotone in $W$, antitone in $p$, and
the no-refund lower bound) are additionally machine-checked in Lean~4,
algebraically rather than by sampling; we use ``machine-checked'' for the
Lean proofs alone and ``property-tested'' for numeric sampling. The two laws
the grid alone stands behind (associativity and non-commutativity) do not
need the grid's density to hold: associativity is a fact about function
composition, true of any $T_p$ whatever, and non-commutativity is
established by a single exhibited witness pair, not by how many wealth
values agree elsewhere.

\textbf{Scope of Equation~\ref{eq:transformer}.} It is one investment rule,
not a family. Equation~\ref{eq:transformer} is \lstinline|fixed_fraction|:
the stake is a fixed fraction of the \emph{current} pool, so the transformer
is a function of $W$ alone and Appendix~\ref{app:proof}'s induction can
carry $\hat W_j \le W_j$ as its whole invariant. Rules whose stake depends
on the verdict history rather than the current pool do not have that form.
The implementation ships five of them (\lstinline|tacet.qude|); QUDE's third
rule, for instance, sets $\alpha_j$ from the wealth at the most recent
rejection, which is not recoverable from $W_j$, so a monotone-in-$W$
argument does not reach it. Equation~\ref{eq:transformer},
Appendix~\ref{app:semantics}, Appendix~\ref{app:proof}, and the
mechanization are all stated for \lstinline|fixed_fraction| only. For the
induction to go through, a different rule would have to make its stake a
monotone function of the state the invariant tracks; we have not established
that for any of the five (Section~\ref{sec:threats}).

\textbf{Where the grades live.} Grades are wealth transformers, and they
compose in $(\mathbb{R} \to \mathbb{R}, \circ, \mathit{id})$, the monoid of
endofunctions on wealth: $\circ$ is associative and $\mathit{id}$ is its
unit, which \clc{\textsc{T-Seq}} and \clc{\textsc{T-Note}} need. The
set $\{T_p\}$ itself is \emph{not} a monoid: it is not closed under $\circ$
and does not contain $\mathit{id}$.
Within that ambient monoid, $T_p$ is monotone in $W$ (Lemma~\ref{lem:mono}),
and $T_p$ is \emph{antitone} in $p$ (Lemma~\ref{lem:anti}), since a smaller
$p$-value only ever earns more refund.

\textbf{Non-commutativity.} The dynamic transformer does not commute. For
example, at $W = 0.0005$, composing $T_{0.0002}$ and then $T_{0.0011}$
gives $0.037625$, while the other order gives $0.000125$. This is the order
dependence of Section~\ref{sec:analysis}, stated algebraically.

\textbf{The ceiling stake.} $T_p$ has a fixed point at $W^* = \omega/i$: at
that wealth, a claim clearing the stake ($p \le i W^* = \omega$) sends
$T_p(W^*) = (1-i)(\omega/i) + \omega = \omega/i = W^*$, so wealth returns
exactly to $W^*$ and a run of accepted claims holds there indefinitely,
the ceiling a run of accepted claims reaches and holds before the
evidence runs out; at the study defaults ($i = 0.5$,
$\omega = \alpha/2 = 0.025$), $W^* = 0.05$. The stake charged at the fixed point,
$i \cdot W^* = \omega$, does not depend on $i$: whatever fraction of the
pool a claim stakes, the wealth level adjusts so the stake itself is
always $\omega$ once the pool reaches its ceiling. A discovery-rich
sequence spends most of its length there, so its outcome is governed by
whether each claim's own $p$-value clears $\omega$, a threshold the choice
of $i$ never touches; this is why Section~\ref{sec:leaderboard}'s
strongest-first count is unaffected by the $i$-sweep above, once the
short transient before wealth first reaches $W^*$ has passed, and it is
why Section~\ref{sec:bbh}'s budget keeps exactly the tasks naive
thresholding at $0.025$ would.

\textbf{Why these defaults and not the permitted maximum.} (H4) permits
$\omega$ up to $\alpha$ and $w_0$ up to $\alpha\eta$, strictly more
generous than the study defaults $\omega = w_0 = \alpha/2$, and the
ceiling-stake identity just established puts the cost of the more
conservative choice in closed form: at $\omega = \alpha$ the ceiling stake
a discovery-rich sequence eventually faces is $\alpha$ itself, twice the
shipped one, so BIG-Bench Hard's two refused tasks, whose $p$-values fall
between the two stakes (Section~\ref{sec:bbh}), would
clear at the more generous choice; the
``accounting tax'' that section reports is therefore a fact about this
default, not a law the design forces. We use $\alpha/2$ for both, symmetric
and simple to state, in preference to either boundary permitted, and every
numeric example in this paper inherits that choice; a study wanting less
conservatism can raise $\omega$ toward $\alpha$ without leaving (H4), at
the cost stated just above. The same choice of $w_0 = \alpha/2$ also sets the
price of the single-claim analysis, the modal case in the population
Section~\ref{sec:intro} cites: the first claim's stake is
$i \cdot w_0 = \alpha/4$ under the shipped defaults, so an analysis making
one comparison, textbook-style, is tested at a quarter of the nominal
level, for no multiplicity reason, since nothing has been asked of the
pool yet. This is not a defect specific to
$\alpha/2$: any $w_0 < \alpha/i$ opens below $\alpha$, and $w_0 = \alpha\eta$,
(H4)'s permitted maximum, narrows but does not close the gap
($\eta = 1 - \alpha$ under the shipped $\eta$, so $w_0 = \alpha(1-\alpha)$
still opens below $\alpha$ for every $\alpha \in (0,1)$). A single-claim
analysis is therefore always priced somewhat more strictly than a bare
textbook test under this family of rules, a fact the defaults did not
create and cannot fully undo.

\textbf{The static transformer.} The static no-refund transformer, by
contrast, does commute. A checker without access to values cannot know
whether the refund arrives, so the natural static grade is the worst case
\begin{equation}
  \ty{T_\bot(W) = (1-i)\cdot W}
  \label{eq:static-transformer}
\end{equation}
This is scalar multiplication, and scalars commute. Running the two programs
of Figure~\ref{fig:order} through it, both end at the same wealth, while the
runtime separates them cleanly. Absent a declared bound, affordability is
therefore a runtime property, exactly as \sys{Duet}~\cite{duet} certifies a
composition without computing the noise.

\textbf{Declared bounds.} Declared bounds buy the static answer back. A
claim promising $p \le \bar p$ can be evaluated statically \emph{and priced at
$T_{\bar p}$}, and the promise is conservative precisely because $T$ is
antitone in $p$ (Lemma~\ref{lem:anti}): an honest run's realized price is at
most $\bar p$, so $T_{\bar p}(\hat W) \le T_p(\hat W)$ and the shadow wealth
never overstates the real one. This is the inequality Theorem~\ref{thm:static}
turns on at every declared claim. Running
Figure~\ref{fig:order} through declared bounds reproduces the runtime
separation exactly. Importantly, pre-registration in this setting is far
more than a methodological convention external to the language: it
functions as the mechanism that moves affordability from run time to
type-checking time.

Three independent design pressures converge on this feature: fixing the
direction of a one-sided test, closing data-dependent control flow, and
making the grade static.

\section{Metatheory}
\label{sec:metatheory}

Theorem~\ref{thm:static} below is a short induction, and it stays short
because Section~\ref{sec:claims}'s two grades were chosen to make it so.
Not everything in this section is short on the same terms. Theorem~\ref{thm:design} is a genuine argument
about which fields distinguish two samples, not a one-line consequence of
monotonicity, and Theorem~\ref{thm:mfdr}'s supermartingale is
measure-theoretic rather than a finite calculation
(Section~\ref{sec:mechanization}).

\begin{thm}[Static under-approximation]
\label{thm:static}
Let $0 < i \le 1$ and $\omega \ge 0$, let the checker and the runtime
monitor run the same program from the same initial wealth $w_0$, and let
every declared bound be honest at run time (Definition~\ref{def:honest},
Appendix~\ref{app:semantics}). Then every claim the checker certifies is
accepted by the runtime monitor, whether or not the checker certifies the
claims around it; in particular, when the checker accepts the whole
program, the monitor accepts every claim in it that declared a bound.
\end{thm}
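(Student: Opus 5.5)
The proof is an induction over the normal form $c_1 \,;\, \dots \,;\, c_n$ of the claim term. Both the checker and the monitor step through the leaves in program order, and the checker maintains a shadow wealth $\hat W_j$ alongside the monitor's real wealth $W_j$, starting from $\hat W_0 = W_0 = w_0$. At leaf $j$ the checker applies the static grade of that leaf: $\mathit{id}$ for a \kw{note}, $T_{\bar p}$ for a claim with a declared bound $\bar p$, and $T_\bot$ for a claim with $\bar p = \bot$. It certifies a bounded claim exactly when $\bar p \le i \cdot \hat W_{j-1}$. The monitor applies $T_{p_j}$ with the realized $p_j$ and accepts exactly when $p_j \le i \cdot W_{j-1}$. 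The invariant I will carry is $\hat W_j \le W_j$ for every $j$. As the Scope paragraph says, this single inequality is the whole invariant for \lstinline|fixed_fraction|. Crucially, the invariant does not depend on whether any claim was certified, which is what gives the clause ``whether or not the checker certifies the claims around it.''

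The inductive step splits into three cases according to the leaf. For a \kw{note}, both sides apply $\mathit{id}$, so the inequality is preserved trivially. For a claim with $\bar p = \bot$, I use the no-refund lower bound $T_\bot(\hat W) = (1-i)\hat W \le (1-i)W \le T_{p}(W)$, which holds because $\omega \ge 0$ and $1-i \ge 0$. For a claim with a declared bound, I chain two inequalities: $T_{\bar p}(\hat W_{j-1}) \le T_{p_j}(\hat W_{j-1}) \le T_{p_j}(W_{j-1})$. The first is antitonicity in $p$ (Lemma~\ref{lem:anti}) applied with honesty, which gives $p_j \le \bar p$ by Definition~\ref{def:honest}. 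The second is monotonicity in $W$ (Lemma~\ref{lem:mono}). These are exactly the three machine-checked laws.

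Acceptance then follows pointwise. If the checker certifies leaf $j$, then $p_j \le \bar p \le i\hat W_{j-1} \le i W_{j-1}$, and this last inequality is the monitor's acceptance test. The ``in particular'' clause is the special case in which every bounded leaf is certified. The monitor also checks the design premise, but that premise is computed by the same single function the checker calls (Section~\ref{sec:design-premise}). So whenever the checker typed the claim via \clc{\textsc{T-Claim}}, the monitor's design check passes as well. Provided the monitor's refusal conditions are exactly the affordability test and the design premise, the premise adds nothing new.

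I expect the main obstacle to be bookkeeping rather than arithmetic. The difficulty is making sure the operational definitions in Appendix~\ref{app:semantics} align with the induction. First, the monitor charges the stake on a failure rather than aborting, so $W_j$ is always defined and the run continues past refused claims. Second, the checker likewise continues past claims it fails to certify, so $\hat W_j$ keeps being updated by $T_{\bar p}$ even on an uncertified claim. The induction needs this second point, and it is where I would first check the definitions. If instead the checker stops or skips an uncertified claim, I would reformulate the invariant over the checker's actual update rule and reuse the same two lemmas. Finally, I will note that the indicator $[p \le iW]$ makes $T_p$ discontinuous. Monotonicity therefore has to be argued by a case split on the indicator rather than by continuity, and that is routine given Lemma~\ref{lem:mono}.
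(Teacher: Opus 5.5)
Your proposal is correct and follows essentially the same route as the paper's proof in Appendix~\ref{app:proof}: induction carrying $\hat W_j \le W_j$ through the note step, the unbound step (Lemmas~\ref{lem:mono} and~\ref{lem:bot}) and the declared-bound step (Lemma~\ref{lem:anti} with honesty, then Lemma~\ref{lem:mono}), with certification giving $p_j \le \bar p_j \le i\hat W_j \le i W_j$. Your two bookkeeping worries resolve as you hoped: \clc{\textsc{S-Claim}} steps by $T_{\bar p}$ whether or not it certifies, and \clc{\textsc{E-Claim}} defines monitor acceptance as exactly $b_j$, so the design-premise remark is unnecessary.
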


\noindent The three side conditions are load-bearing hypotheses. $i > 0$
makes the chain $\bar p \le i \hat W \le i W$ go through, $i \le 1$ keeps
$(1-i)W$ non-decreasing in $W$, and $\omega \ge 0$ makes the no-refund
transformer a lower bound. The Lean development
carries all three as explicit hypotheses, and every statement in this
paper that mentions $i$ or $\omega$ assumes them.

\noindent Antitonicity in $p$ and monotonicity in $W$ give this directly.
Appendix~\ref{app:proof} spells the argument out as an induction over the
operational semantics of Appendix~\ref{app:semantics}, rather than leaving
it at the one-line justification above; that proof is additionally
machine-checked in Lean~4 (Section~\ref{sec:mechanization}).

\noindent The statement is per claim rather than per program, and the
shipped tooling depends on that. The audit's fold does not stop at a
refused claim: it charges the refusal and keeps folding, reporting a
verdict for every claim behind it, so Section~\ref{sec:implementation}'s
transcript can read ``2 of 3 decided claims are affordable'', and the
validation below deliberately certifies claims after an earlier refusal
has already drained the shadow pool. A statement hypothesized on the
checker accepting the whole program would cover neither, since a program
with one refused claim falls outside that hypothesis; the invariant the
proof carries, $\hat W_j \le W_j$, holds through certified, refused, and
undeclared claims alike (Appendix~\ref{app:proof}).

\textbf{Empirical validation.} We also test the theorem, by generating
random claim sequences, declaring bounds $\bar p = p \times \mathit{slack}$,
and comparing the checker against the monitor: 599 programs across two
corpora (4{,}880 claims), one drawn over a fixed code-model comparison
corpus (8 open-weight models scored on 15 paired coding tasks in two
prompting modes; Appendix~\ref{app:validation}) and one tuned so that $p$-values
span decades and certified claims land across the full range of wealth
trajectories, including after refusals have drained the pool and after
refunds have refilled it. At honest bounds ($\mathit{slack} \ge 1$) the
checker never certifies a claim the monitor refuses, in every stratum;
dishonest bounds, a negative control, break the implication on both
corpora, as they must for the test to mean anything. The same runs also
price imprecision: on data whose resolution is low, the cost of an
overstated bound is a cliff (within $1.5\times$ of the truth free,
threefold fatal), and on data with resolution to spare it is a slope, so
pre-registration buys the most when the declared bound is close to the
truth relative to the data's own resolution.
Appendix~\ref{app:validation} reports the corpus construction and the full
counts (Table~\ref{tab:validation}).

\textbf{One authority, enforced by a gate.} The mechanism a design demands
is decided by one function, and the checker and the monitor both call it,
so the two sides price the same number by construction; a checker whose
bounds came from one test while the monitor priced another would be
recorded as violating a theorem it does not violate. The same judgment is
enforced repository-wide: a gate walks this project and refuses any test
statistic computed outside the library unless the module derives its design
from the artifact keys, or carries a written reason for not doing so
(timed in Appendix~\ref{app:runtime}). Since the thesis of this paper is that
this judgment should be discharged by a machine rather than remembered by
a person, we apply the discipline to our own repository.

\textbf{The design check.} The design check is static. Computing
$\mathit{design}$ from the schema and constraints agrees with a value-level
inspection on all 210 pairs of selections our test corpus can express: the code-model
comparison corpus, 7 models each selectable whole
or narrowed to one of its two prompting modes, 21 selections in all, over a
schema with one identifying field (\lstinline[style=tacetcode]|task|), two
declared dependencies (\lstinline[style=tacetcode]|task| determines both
\lstinline[style=tacetcode]|mode| and \lstinline[style=tacetcode]|col|),
and no declared cluster field; $\binom{21}{2} = 210$ is every pair those 21
selections form, and the two purpose-built cases below extend the corpus
along the two coordinates it does not otherwise exercise, a compound
identifying key and a declared cluster field. The same check also agrees
on the running example itself, where the leaderboard's declared schema
predicts paired+clustered and the cluster sign-flip for the adjacent-rank
comparisons before any data is read and the realized keys of all 134
submissions return the same verdict (Section~\ref{sec:implementation});
this agreement is why the check belongs in \clc{\textsc{T-Claim}} rather
than in the monitor. The general statement this evidence gestures at is not true without
saying what the schema's declarations are declarations \emph{of}:
$\mathit{design}$ reads $\Sigma$ and $\kappa_1, \kappa_2$ alone, never the
run, so any statement connecting it to the run needs a name for what makes a
run consistent with what $\Sigma$ claims about it.

\begin{defi}[Conforming run]
\label{def:conforms}
A run $r$ \emph{conforms} to schema $\Sigma = (F, I, D, \gamma)$ at pin
domain $S \subseteq F$ if (i) every
declared dependency $(f \mapsto g) \in D$ holds in $r$ (for all observations
$o, o' \in r$, $f(o) = f(o')$ implies $g(o) = g(o')$) and (ii) $I$ is a
genuine key of $r$ \emph{relative to $S$}: no two distinct observations of
$r$ agree on every field of $I \cup S$.
\end{defi}

\noindent Clause (ii) is relative rather than global, and the leaderboard is
why. Two submissions' rows agree on every field the submission field does
not determine, so no $I$ disjoint from it keys the whole run, while any $I$
containing it fails $I \subseteq \mathit{residual}$; a global clause (ii)
would therefore leave the $\mathit{paired}$ case of Theorem~\ref{thm:design}
unsatisfiable on exactly the comparisons it exists for. The alignment
consumes, and (ii) states, one requirement: that $I$ key the observations
one selection keeps, and those agree on every field of $S$ by construction.

\begin{thm}[Design premise soundness]
\label{thm:design}
Let $s_1, s_2$ be samples drawn from a run $r$,
$\kappa_i$-consistent (every observation of $s_i$ satisfies every pin
$\kappa_i$ declares), and let $r$ conform to $\Sigma$ at
$\mathrm{dom}(\kappa_1)$. If $\mathit{design}(\Sigma, \kappa_1, \kappa_2) =
\langle \mathit{paired}, \mathit{clustered} \rangle$ is defined, then:
(a) if $\mathit{clustered}$ is false and $\gamma \neq \cdot$, every
observation of $s_1 \cup s_2$ shares one value of $\gamma$; (b) if
$\mathit{paired}$ is true, no two distinct observations within $s_1$, nor
within $s_2$, agree on every field of $\mathit{residual}$.
\end{thm}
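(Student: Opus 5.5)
The plan is to prove both parts by unfolding definitions. Neither needs the fixed-point structure of $\mathit{distinguishing}$ beyond the inclusion $I \subseteq \mathit{residual}$ that $\mathit{paired}$ asserts. Throughout I use that $s_1, s_2 \subseteq r$, since they are samples drawn from $r$, and that definedness of $\mathit{design}$ gives the non-degeneracy conditions $\mathrm{dom}(\kappa_1) = \mathrm{dom}(\kappa_2)$ and $\mathit{differing} \neq \emptyset$ (Section~\ref{sec:design-premise}).

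For (a), I negate the defining formula of $\mathit{clustered}$. If $\mathit{clustered}$ is false and $\gamma \neq \cdot$, then $\kappa_1(\gamma)$ and $\kappa_2(\gamma)$ are both defined and equal to a single value $c$. By $\kappa_1$-consistency every $o \in s_1$ has $\gamma(o) = \kappa_1(\gamma) = c$. By $\kappa_2$-consistency every $o \in s_2$ has $\gamma(o) = c$. So every observation of $s_1 \cup s_2$ has $\gamma$-value $c$. Conformance is not needed for this part.

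For (b), assume $\mathit{paired}$, that is, $I \subseteq \mathit{residual}$. Suppose, for contradiction, that $o \neq o'$ both lie in $s_1$ and agree on every field of $\mathit{residual}$.
\begin{itemize}
\item Since $I \subseteq \mathit{residual}$, they agree on every field of $I$.
\item Both satisfy every pin of $\kappa_1$, so for each $f \in \mathrm{dom}(\kappa_1)$ we have $f(o) = \kappa_1(f) = f(o')$. Hence they agree on all of $I \cup \mathrm{dom}(\kappa_1)$.
\item Since $o, o' \in r$ and $r$ conforms to $\Sigma$ at $S = \mathrm{dom}(\kappa_1)$, clause (ii) of Definition~\ref{def:conforms} says no two distinct observations of $r$ agree on $I \cup S$. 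This contradicts the previous step.
\end{itemize}

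The case of $s_2$ is the one step I expect to need care. Conformance is stated only at $\mathrm{dom}(\kappa_1)$, while the pins $s_2$ respects are those of $\kappa_2$. Here I invoke the first non-degeneracy condition, which gives $\mathrm{dom}(\kappa_2) = \mathrm{dom}(\kappa_1) = S$. Two distinct observations of $s_2$ agreeing on $\mathit{residual}$ then agree on $I$, and by $\kappa_2$-consistency they agree on $\mathrm{dom}(\kappa_2) = S$. This yields the same contradiction with clause (ii).

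Clause (i), the functional dependencies, plays no role in (b) as stated. It would matter only for a stronger cross-sample alignment statement. I will remark on this, so that the reader sees the dependencies enter through the computation of $\mathit{residual}$ rather than through this soundness argument.
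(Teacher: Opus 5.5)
Your proposal is correct and matches the paper's proof. Part (a) is the same negation of the $\mathit{clustered}$ formula combined with $\kappa_i$-consistency. Part (b) is the same chain: $I \subseteq \mathit{residual}$, then agreement on $\mathrm{dom}(\kappa_1)$ by consistency, then a contradiction with conformance clause (ii), with $s_2$ handled through $\mathrm{dom}(\kappa_2) = \mathrm{dom}(\kappa_1)$.

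The only difference is that the paper first proves an extra fact: every field of $\mathit{distinguishing}$ takes disjoint value sets on $s_1$ and $s_2$, shown by induction on the least fixed point using clause (i). Its within-side argument for (b) never uses that fact, so your remark that clause (i) plays no role in (b) as stated is accurate.
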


\begin{proof}
(a) $\mathit{clustered}$ false and $\gamma \neq \cdot$ together mean
$\kappa_1(\gamma) = \kappa_2(\gamma)$, some value $v$; $\kappa_i$-consistency
gives $\gamma(o) = v$ for every $o \in s_i$, so every observation of $s_1
\cup s_2$ has $\gamma = v$.

(b) First, $\mathit{distinguishing}$ under-approximates the run's actual
disjointness: for every $f \in \mathit{distinguishing}$, the values $f$
takes on $s_1$ and on $s_2$ are disjoint sets. By induction on the fixed
point computing $\mathit{distinguishing}$. Base case, $f \in
\mathit{differing}$: $\kappa_1(f) = x \neq y = \kappa_2(f)$ are both
defined, so consistency gives $f \equiv x$ on $s_1$ and $f \equiv y$ on
$s_2$, disjoint singletons since $x \neq y$. Step, $f$ added because
$(f \mapsto g) \in D$ and $g$'s values are already known disjoint across
$s_1, s_2$: if some value were common to $f$ on both sides, witnesses
$o_1 \in s_1, o_2 \in s_2$ with $f(o_1) = f(o_2)$ would, by
conformance~(i), force $g(o_1) = g(o_2)$, contradicting $g$'s disjointness;
so $f$'s values are disjoint too.

Now suppose $\mathit{paired}$, i.e.\ $I \subseteq \mathit{residual} = F
\setminus \mathit{distinguishing}$, and suppose two distinct observations
$o \neq o'$ within $s_1$ (the argument for $s_2$ is symmetric, since
$\mathrm{dom}(\kappa_2) = \mathrm{dom}(\kappa_1)$ wherever
$\mathit{design}$ is defined) agree on
every field of $\mathit{residual}$, in particular on every field of $I$
since $I \subseteq \mathit{residual}$. By $\kappa_1$-consistency both also
agree on every field of $\mathrm{dom}(\kappa_1)$, each taking the pinned
value, so they agree on every field of $I \cup \mathrm{dom}(\kappa_1)$,
and conformance~(ii) forces $o = o'$, contradicting distinctness.
\end{proof}

\noindent This soundness statement is deliberately one-directional:
$\mathit{design}$ may still say \emph{clustered} or \emph{paired} where the
realized sample would not have needed it (an unpinned cluster field, or a
residual set larger than $I$ actually requires), the same over-approximation
already licensed for the fourth lattice point above. The theorem rules out
the dangerous direction: $\mathit{design}$ licensing a mechanism whose
alignment then silently misaligns observations. Our 210-pair corpus does not
exercise this: its one identifying field is never compound, and it never
declares a cluster field, so neither the full-key requirement on
$\mathit{paired}$ nor the pinned-apart case of $\mathit{clustered}$ is put
to the test there. Two purpose-built cases close that gap and check the
theorem directly on data rather than by argument alone: a compound key with
one field excluded from the residual through the closure rather than by
direct pinning, where the data does collide and the mechanism's alignment
refuses it; and a cluster field pinned to two different repository values,
where two repositories are visibly present and $\mathit{design}$ agrees.

The theorem rules out less than its name suggests, and
the gap is worth stating plainly. Clause (b) is within-side: it says two
distinct observations inside $s_1$ cannot collide on the residual key, and
symmetrically for $s_2$, so the residual key is an injective label for each
sample on its own. It says nothing about the two sides
against each other: that a $\mathit{residual}$ key shared across $s_1$ and
$s_2$ denotes the \emph{same} experimental unit on both, the fact a paired
test's correspondence actually rests on, follows only from what $I$ is
declared to mean, not from anything clause (b) proves. A schema is free to
declare an $I$ that is a genuine key of the run (satisfying
Definition~\ref{def:conforms}) and still not identify observations across
selections the way the analyst intends, if the field it names measures
different quantities on the two sides; Theorem~\ref{thm:design} is silent on
that fact by construction, because $\mathit{design}$ reads $\Sigma$ and
the constraints alone, never what an identifying field is supposed to
identify. The runtime alignment additionally checks within-side
injectivity for itself (below); cross-side meaning is not checked
anywhere, and is taken on trust the same way the artifact and schema
declarations of Section~\ref{sec:declarations} are.

Theorem~\ref{thm:design} is conditional on Definition~\ref{def:conforms},
and nothing in $\mathcal{T}$ checks it: a schema whose declared $D$ omits a
real dependency, or whose $I$ fails to key the observations a selection
keeps, voids the theorem silently, the same way a dishonest declared bound
voids Theorem~\ref{thm:static}. The theorem's other premise,
$\kappa_i$-consistency of the two samples, carries no such trust: every
estimation term's denotation is consistent with its own constraint
(Lemma~\ref{lem:footprint}, Appendix~\ref{app:estimation}), so for samples
the language itself produced the premise holds by construction, and it
remains an assumption only for data that reached a mechanism around the
language. The alignment itself is no longer silent:
whichever mechanism the design licenses re-derives its own residual key from
the actual observations and refuses, rather than collapses, if that key is
not injective, for every mechanism that aligns by key, so a schema that
violates Definition~\ref{def:conforms} in a way that would corrupt an
alignment is caught there even though it is never caught here.

\subsection{mFDR Control}
\label{sec:mfdr}

The guarantee the title's accounting carries is a statement about the
runtime monitor:
the process that prices claim $j$ at the stake $\alpha_j = i\,W_{j-1}$,
observes the realized $p$-value, and pays out $\omega$ on a rejection. Let
$R(t)$ count the rejections among the first $t$ claims, $V(t)$ the
rejections whose null hypothesis is true, and let $\mathcal{F}_{j-1}$ be the
$\sigma$-algebra generated by everything the account records before claim
$j$ is priced. The wealth $W_{j-1}$, and with it the stake $\alpha_j$, is
$\mathcal{F}_{j-1}$-measurable: the level of test $j$ is fixed before $p_j$
is seen, from the recorded past alone, and the mechanization proves the
wealth process adapted rather than assuming it. (H3) below is stated with respect to this
same, narrow filtration (what the account records) and is only as
plausible as that record is complete: an analyst whose choice of test $j$
depends on something the account never logged is conditioning on
information $\mathcal{F}_{j-1}$ does not contain, and (H3) can hold for the
account's filtration while failing for the analyst's true one. (H1)'s
completeness closes this gap.

\begin{thm}[mFDR control]
\label{thm:mfdr}
Let $0 < i \le 1$ and $\omega \ge 0$, let $\pi$ be a declared plan
(Appendix~\ref{app:plan}), and let $c_1; \dots; c_n$ be
the claims the monitor prices against it, whether or not the checker
accepts $\pi$. Suppose:
\begin{enumerate}
\item[(H1)] \emph{Complete account.} $\pi, c$ is a \emph{complete account}
  (Definition~\ref{def:complete}, Appendix~\ref{app:plan}): the claims
  priced are all the claims made.
\item[(H2)] \emph{Within-claim validity.} Every claim's raw observations
  are dependent only in the way its own comparison's design
  declares, whether or not that claim was ever pre-registered: the
  antecedent of Theorem~\ref{thm:folding}, under which each
  mechanism's $p$-value is valid for its declared estimand.
\item[(H3)] \emph{Conditional super-uniformity at the stake.} For every $j$
  whose null hypothesis is true,
  $\Pr[\,p_j \le \alpha_j \mid \mathcal{F}_{j-1}\,] \le \alpha_j$.
\item[(H4)] \emph{Bounded refund.} $0 \le \omega \le \alpha \le 1$ and
  $0 \le w_0 \le \alpha\,\eta$, with $\eta > 0$.
\end{enumerate}
Then for every horizon $t$,
$\mathrm{mFDR}_\eta(t) = \mathbb{E}[V(t)]\,/\,(\mathbb{E}[R(t)]+\eta) \le
\alpha$.
\end{thm}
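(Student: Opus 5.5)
The plan is the standard Foster--Stine argument: build a submartingale from the wealth process, then read the bound off its expectation. Write $R_j = [\,p_j \le \alpha_j\,]$ for the rejection indicator of claim $j$. Write $V_j = R_j \cdot [\,H_{0,j}\text{ true}\,]$ for its false-rejection part. Under Equation~\ref{eq:transformer}, taken as the monitor's update, the wealth evolves as $W_j = (1-i)W_{j-1} + \omega R_j = W_{j-1} - \alpha_j + \omega R_j$, with $W_0 = w_0$.

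(H1) is what licenses identifying $R(t) = \sum_{j\le t} R_j$ and $V(t) = \sum_{j \le t} V_j$ with the quantities in $\mathrm{mFDR}_\eta$. A complete account means no claim escapes the sums. (H2) is used only to justify that each $p_j$ is a valid $p$-value for its declared null, which is what makes (H3) meaningful. The proof itself consumes (H3) directly.

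Define the process
\[
  A(t) \;=\; \alpha R(t) - V(t) + \alpha\eta - W(t).
\]
The first step is to show that $A$ is a submartingale with respect to $\mathcal{F}_t$. Its increment is
\[
  A(j) - A(j-1) \;=\; (\alpha - \omega) R_j - V_j + \alpha_j .
\]
I would condition on $\mathcal{F}_{j-1}$ and split on whether $H_{0,j}$ is true.

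\begin{itemize}
\item If $H_{0,j}$ is false, then $V_j = 0$. The increment is then $(\alpha-\omega)R_j + \alpha_j \ge 0$ pointwise, using $\omega \le \alpha$ from (H4) and $\alpha_j = i W_{j-1} \ge 0$.
\item If $H_{0,j}$ is true, then $V_j = R_j$. Here $\alpha_j$ is $\mathcal{F}_{j-1}$-measurable (the adaptedness the paper notes is proved in the mechanization), so (H3) gives $\mathbb{E}[V_j \mid \mathcal{F}_{j-1}] \le \alpha_j$. Together with $(\alpha-\omega)R_j \ge 0$, the conditional increment is nonnegative.
\end{itemize}

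Hence $\mathbb{E}[A(t)] \ge A(0) = \alpha\eta - w_0 \ge 0$, by the second clause of (H4).

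The second step turns this into the bound. Nonnegativity of wealth, $W(t) \ge 0$, follows by induction from $w_0 \ge 0$, $0 < i \le 1$ and $\omega \ge 0$. Rearranging $\mathbb{E}[A(t)] \ge 0$ then gives
\[
  \mathbb{E}[V(t)] \;\le\; \alpha\,\mathbb{E}[R(t)] + \alpha\eta - \mathbb{E}[W(t)] \;\le\; \alpha\bigl(\mathbb{E}[R(t)] + \eta\bigr).
\]
Dividing by $\mathbb{E}[R(t)] + \eta > 0$ yields the claim for every horizon $t$. Integrability is free throughout: every $R_j, V_j$ is an indicator, and $W_j$ is bounded by $\max(w_0, \omega/i)$, which is the ceiling stake of Section~\ref{sec:grade}.

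The step I expect to be the main obstacle is not the algebra but the measure-theoretic bookkeeping behind the true-null case. Three things must be in place there:
\begin{itemize}
\item which hypotheses are null must be fixed, or at least $\mathcal{F}_{j-1}$-predictable, so that the case split is legitimate inside the conditional expectation;
\item $\alpha_j$ must genuinely be $\mathcal{F}_{j-1}$-measurable, so that (H3) applies at the random stake rather than at a fixed level;
\item the sum over $j$ must interchange with expectation.
\end{itemize}
I would handle the first by treating the null/non-null labelling as deterministic, or as a fixed parameter of the model. I would handle the second by an explicit induction showing that $W_{j-1}$ is a measurable function of $R_1,\dots,R_{j-1}$. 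I would then conclude by the tower property, summing the conditional increments over $j \le t$, rather than by invoking optional stopping. This keeps the argument at a finite horizon, and is why the statement is made for every fixed $t$.
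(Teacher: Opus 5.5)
Your proof is correct and is the paper's own argument: your $A(t)$ is exactly $\alpha\eta - X(t)$ for the paper's Foster--Stine supermartingale $X(t) = V(t) - \alpha R(t) + W(t)$. It uses the same split into true and false nulls, and the same ingredients in the same places: $\omega \le \alpha$, (H3) at the $\mathcal{F}_{j-1}$-measurable stake, pathwise $W(t) \ge 0$, and $w_0 \le \alpha\eta$. Your bookkeeping (a fixed set of null indices, adapted wealth, and one tower-property induction at a fixed horizon rather than optional stopping) also matches the mechanization. The only slip is terminological: $\omega/i$ is the ceiling \emph{wealth} $W^*$ of Section~\ref{sec:grade}, and the ceiling stake is $iW^* = \omega$.
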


\noindent The independence hypothesis (H2) is stated with care: read
literally as ``mutually independent'' it would exclude the paired and
clustered designs Section~\ref{sec:evaluation} actually serves, so the
antecedent Theorem~\ref{thm:folding} needs is the weaker one stated
above (dependent only in the declared way), and our design check
(Section~\ref{sec:lambdat}) lets a program compute that declaration
statically instead of assuming it, soundly, by Theorem~\ref{thm:design}, provided the
schema's declared dependencies and identifying key are genuine facts about
the run (Definition~\ref{def:conforms}). That proviso keeps this from
being a full discharge: it moves the assumption from ``these artifacts
are independent'' to ``this schema's declarations are true of the data'',
smaller and checkable in different places, but still an assumption.

\begin{proof}[Proof sketch]
Foster and Stine's supermartingale argument~\cite{foster-stine}, run on
$X(t) = V(t) - \alpha R(t) + W(t)$. One claim changes wealth by $\Delta W_j
= -\alpha_j + \omega\,[p_j \le \alpha_j]$, so on a true null (H3) gives
$\mathbb{E}[\Delta X_j \mid \mathcal{F}_{j-1}] \le (1 + \omega -
\alpha)\,\alpha_j - \alpha_j \le 0$ exactly when $\omega \le \alpha$, and on
a non-null $\Delta X_j = (\omega - \alpha)\,[p_j \le \alpha_j] - \alpha_j
\le 0$ holds pathwise. Hence $\mathbb{E}[X(t)] \le X(0) = w_0$. Because the
stake never exceeds the pool ($0 < i \le 1$), $W(t) \ge 0$ on every path,
so $\mathbb{E}[V(t)] \le \alpha\,\mathbb{E}[R(t)] + w_0 \le
\alpha\,(\mathbb{E}[R(t)] + \eta)$ by (H4). The argument is machine-checked
in Lean~4 (Section~\ref{sec:mechanization}): the mechanized wealth process
steps by the transformer $T$ of Section~\ref{sec:grade} itself, (H3) enters in its
standard online-testing form $\mathbb{E}[\,[p_j \le \alpha_j] \mid
\mathcal{F}_{j-1}] \le \alpha_j$ and is consumed once through the tower
property, and the supermartingale computation is a single induction on the
horizon: $\mathbb{E}[V(t)] +
\mathbb{E}[W(t)] \le \alpha\,\mathbb{E}[R(t)] + w_0$.
\end{proof}

\noindent Without (H3) or (H4) the statement is false. (H3) is the
hypothesis Foster and Stine's argument
actually runs on, and it licenses choosing hypothesis $j+1$ after seeing
outcome $j$; (H2) is a \emph{within-claim} condition and says
nothing across claims, so on its own it would admit a mechanism returning
$p \equiv 0$, which satisfies every other antecedent and refutes the
conclusion. Benjamini and Yekutieli's extension of FDR control to arbitrary
dependence addresses the across-claim analogue in the classical
setting~\cite{benjamini-yekutieli}: a condition on dependence across the
family of tests, not within one test's own artifacts, so it is (H3)'s
territory, not (H2)'s. (H4) makes $X$ a supermartingale: at $\alpha = 0.05$,
$w_0 = 0.025$, $i = 0.5$ and $\omega = 0.5$, the first test runs at level
$0.0125$, a rejection lifts wealth to $0.5125$ and the next level to
$0.256$, and so on; under a complete null $\mathbb{E}[V] = \mathbb{E}[R]$
grows without bound and $\mathrm{mFDR}_\eta \to 1$. \sys{Tacet}'s defaults,
$w_0 = \omega = \alpha/2$ with $\eta = 1 - \alpha$, satisfy (H4) for every
$\alpha \le 1/2$, mechanically. The stronger side
condition $w_0 \le \alpha\eta(1-\omega)$ our earlier analysis derived is
sufficient but not necessary: the mechanized proof needs only $\omega \le
\alpha$ and $w_0 \le \alpha\eta$.

Two of the theorem's obligations are discharged by the calculus itself.
\emph{Every claim passes through a mechanism}: $\clc{\mathsf{Claim}}$'s
only introduction forms are \clc{\textsc{T-Note}}, which contributes to
neither $R$ nor $V$, and \clc{\textsc{T-Claim}} /
\clc{\textsc{T-OneSided}}, which both demand a mechanism satisfying the
design premise. \emph{The investment rule never stakes more than the pool
holds}: $\alpha_j = i \cdot W_{j-1} \le W_{j-1}$ by construction, since
$0 < i \le 1$, so that arithmetic alone makes $W(t) \ge 0$ pathwise, with no
extra assumption: the non-negativity Foster and Stine's own rule instead
has to earn with a cap on the stake (Appendix~\ref{app:semantics}).

(H1) is the obligation nothing in Section~\ref{sec:lambdat} or
Section~\ref{sec:grade} can discharge on its own. $\mathrm{mFDR}_\eta$ is a
guarantee about the sequence of claims actually priced, and
Section~\ref{sec:limitations} measures what an incomplete account costs an
analyst who runs a claim, dislikes its price, and deletes it before it is
ever logged.
Definition~\ref{def:complete} names the assumption that rules this out, and
it is unchecked by any rule above, and an external, timestamped record has
to supply it (Section~\ref{sec:implementation});
\clc{\textsc{T-Plan}} contributes legibility instead
(Appendix~\ref{app:plan}).

(H3) is likewise the analyst's to meet, and our own running example does
not meet it: the 8{,}911 leaderboard comparisons of Section~\ref{sec:leaderboard}
are all computed from the same pool of SWE-bench Verified instances, so
that $p$-value sequence is heavily dependent, and we have not verified (H3)
for it or for the BIG-Bench Hard study (Section~\ref{sec:evaluation}
states what the case studies do and do not certify). Making the dependence
structure a typed input relocates the assumption into the open without
discharging it, and our reading of QUDE's independence gap in the related
work should be taken in that light.

\subsection{Composing the Two Theorems}
\label{sec:bridge}

Theorem~\ref{thm:static} and Theorem~\ref{thm:mfdr} do not compose by
conditioning, and the reason belongs before the theorem that does
compose them. Theorem~\ref{thm:static}'s conclusion holds on the
data-dependent event that every declared bound is honest, whereas
Theorem~\ref{thm:mfdr} is an unconditional statement about expectations;
conditioning a true null's $p$-value on $\{p_j \le \bar p_j\}$ makes it
stochastically \emph{smaller} than uniform, which is the direction that
destroys (H3). The composite below therefore never conditions. Honesty
enters as a property of the analyst's declaration \emph{procedure} (the
bounds $\bar p_j$ are fixed by the plan, before any data), and the monitor
prices every claim at its own stake on the realized $p$-value whether or
not its declared bound was honest: a violated bound is charged first and
refused after (Section~\ref{sec:implementation}), never skipped. Error control is therefore
enforced by the monitor unconditionally, and what checker acceptance adds
is the static feasibility guarantee on honest paths.

\begin{thm}[Checker acceptance implies mFDR control]
\label{thm:bridge}
Assume the hypotheses of Theorem~\ref{thm:mfdr}, and let the declared
bounds $\bar p_1, \dots, \bar p_n$ be fixed by the plan. If the checker
accepts the declared program from $w_0$, then:
\begin{enumerate}
\item[(a)] the executed analysis controls $\mathrm{mFDR}_\eta$ at $\alpha$,
  with no conditioning on any honesty event; and
\item[(b)] on every sample path on which each realized $p_j$ respects its
  declared bound, every declared claim is also accepted by the runtime
  monitor (Theorem~\ref{thm:static}).
\end{enumerate}
\end{thm}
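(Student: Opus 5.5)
The plan is to prove the two parts separately, and to make the point explicit that neither part conditions on the other's event. The key structural observation is that the runtime monitor's behaviour, meaning which stake it charges, which claims it rejects, and what refund it pays, is a function of $w_0$, the invest fraction $i$, the refund $\omega$, and the realized $p$-values alone. The monitor reads neither the declared bounds $\bar p_j$ nor the checker's verdict. A violated bound is charged and then refused, never skipped (Section~\ref{sec:bridge}), so the sequence of claims the monitor prices is the same whether or not the checker accepts $\pi$.

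For (a), I would first fix the filtration $\mathcal{F}_{j-1}$ of Section~\ref{sec:mfdr}. Since the bounds $\bar p_j$ are fixed by the plan before any data is read, they are $\mathcal{F}_0$-measurable. Including them in the account therefore changes nothing about the adaptedness of $W_{j-1}$ or about the meaning of (H3). Checker acceptance is a deterministic function of $\pi$, the schema, and $w_0$, so the event ``the checker accepts'' is either the whole sample space or empty. In particular it is not a data-dependent event. Conditioning on it is therefore vacuous, which avoids the stochastic-ordering problem the paper warns about for the honesty event. Theorem~\ref{thm:mfdr} explicitly covers the claims the monitor prices ``whether or not the checker accepts $\pi$''. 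Its hypotheses (H1)--(H4) are assumed, and the executed analysis is exactly the monitor's run. So (a) follows by invoking Theorem~\ref{thm:mfdr} directly, with nothing conditioned.

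For (b), I would fix a sample path on which $p_j \le \bar p_j$ for every declared claim. That is precisely Definition~\ref{def:honest} holding on that path. Theorem~\ref{thm:static} is a pathwise statement: its invariant $\hat W_j \le W_j$ is proved by induction on the deterministic trajectory, using the monotonicity and antitonicity lemmas together with the no-refund bound. It therefore applies on this path. Checker acceptance of the whole program means every declared claim is certified, so by Theorem~\ref{thm:static} each one is accepted by the monitor.

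I expect the main obstacle to be the measurability bookkeeping in (a). The task is to justify that adding plan-fixed bounds and the checker's verdict to the account does not enlarge $\mathcal{F}_{j-1}$ in a way that breaks (H3). I would handle this by showing that both quantities are measurable with respect to the trivial $\sigma$-algebra $\mathcal{F}_0$, since the plan is chosen before any $p_j$ exists. I would also note that (H3) is assumed for the account's filtration as stated, so it already accommodates them.
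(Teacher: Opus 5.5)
Your proposal is correct and follows essentially the paper's route. Checker acceptance reads only the plan-fixed bounds, so it is a deterministic ($\mathcal{F}_0$-measurable) hypothesis and (a) is a direct application of Theorem~\ref{thm:mfdr}; (b) is Theorem~\ref{thm:static} applied pathwise. The one ingredient the paper states explicitly, and mechanizes as a separate lemma, is something you assert rather than argue. That lemma identifies the wealth process of Theorem~\ref{thm:mfdr}, pathwise, with the fold of the monitor's step relation, and it is what justifies your sentence that the executed analysis is exactly the monitor's run.
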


\begin{sloppypar}
\noindent The theorem is machine-checked (Section~\ref{sec:mechanization}),
and two ingredients carry it. First, the checker's verdict is proved to read
only the declared bounds, never the realized prices, so ``the checker accepts'' is a
deterministic hypothesis: trivially $\mathcal{F}_0$-measurable, carrying
no information about the data that could reweight any $p$-value. Second,
the wealth process Theorem~\ref{thm:mfdr} is proved about is identified,
pathwise, with the fold of the runtime monitor's step relation over the
program it executes, so parts (a) and
(b) are statements about a single process. The asymmetry in
the statement is deliberate and is the resolution of the composition
problem: honesty upgrades feasibility, never error control, so a dishonest
bound can cost the analyst a claim the certificate promised was affordable,
but cannot cost the reader the $\mathrm{mFDR}_\eta$ bound.
\end{sloppypar}

\subsection{The Mechanization}
\label{sec:mechanization}

The Lean development\footnote{Available at
\url{https://github.com/abuach/tacet-mech}.} (about 1{,}150 lines of
Lean~4.33.1 against a pinned Mathlib) covers the accounting layer of
Sections~\ref{sec:grade}--\ref{sec:metatheory}, the design premise, and
part of the estimation sublanguage's own soundness. It defines the two transformers and proves the three
grade lemmas of Appendix~\ref{app:proof}; it defines the runtime monitor
and the static checker as step relations over claim sequences, with
honesty, checker acceptance, and monitor acceptance each defined by
recursion over the program, exactly the definitions of
Appendix~\ref{app:semantics}, a mechanized claim being a realized price
and an optional declared bound with the mechanism left as the external
oracle that appendix already takes it to be; and it proves
Theorem~\ref{thm:static} by the induction of Appendix~\ref{app:proof},
generalized over the two starting wealths so that the invariant
$\hat W \le W$ can be carried and restored at each step, proved per claim
with whole-program acceptance as its special case. Theorem~\ref{thm:design} is proved with
$\mathit{Field}$, $\mathit{Val}$, and $\mathit{Obs}$ left abstract, since
the theorem is a statement of predicate logic about a schema, two
constraints, and a run, and needs no decidability or finiteness
assumption; $\mathit{Distinguishing}$ is an inductive proposition
mirroring Section~\ref{sec:design-premise}'s two closure rules, and in
Lean an inductive \texttt{Prop} \emph{is} the least fixed point of its
own constructors. Parts (a) and (b) are proved exactly as stated, over
Definition~\ref{def:conforms}'s two clauses.

Appendix~\ref{app:estimation}'s two lemmas are now partly mechanized. The
development defines the sample and estimation sorts, their denotation (Definition~\ref{def:denote}), and the footprint,
constraint, and purity-bit recursions of
Sections~\ref{sec:schemas}--\ref{sec:estimation} directly as structural
recursion over the syntax, since the estimation fragment is syntax-directed
(Appendix~\ref{app:semantics}'s ``Decidability of checking'' makes this
precise): defining $\varphi$, $\kappa$, and $t$ this way \emph{is}
mechanizing \est{\textsc{T-Observe}} through \est{\textsc{T-Arith}}, not a
stand-in for them. Clauses (a) and (b) of Lemma~\ref{lem:footprint} are
proved outright, by induction on the term, one
constructor at a time, exactly as the paper's proof does; clause (b) is the
$\kappa$-consistency Theorem~\ref{thm:design} assumes of its two samples,
so this closes the one appendix-level gap between the two theorems'
hypotheses that was previously taken on faith. Lemma~\ref{lem:clean} is
proved in full: a clean-typed term's denotation is
provably unchanged under \emph{any} replacement of the outcome assignment,
because a clean purity bit means \est{\textsc{T-SelectOutcome}} occurs
nowhere in the term, so the denotation's own recursion never inspects an
outcome to begin with (mirroring the paper's own proof: ``$t_1 \sqcup t_2 =
\clean$ forces $t_1 = t_2 = \clean$''). Clause (c) of Lemma~\ref{lem:footprint} remains unmechanized, a genuine two-run substitution
property (invariance of $\den{e}$ under changing the outcome of an
observation \emph{outside} $\varphi$, on two otherwise-identical runs)
that needs machinery beyond a single term's structural recursion, and the
plan machinery of Appendix~\ref{app:plan}: the mechanized checker consumes
the claim-sequence normal form rather than $\mathcal{T}$'s full grammar,
so the connection between $\mathcal{T}$'s typing rules and the mechanized
statement is itself a paper-level correspondence for the plan layer, even
though it is now a Lean-level one for the estimation layer's footprint and
purity bit. The 210-pair agreement of Section~\ref{sec:metatheory}
still matters alongside the proof for the same reason: a Lean proof of
Theorem~\ref{thm:design} is a proof about the mathematical statement, and
the 210 pairs check that the concrete
\lstinline[style=tacetcode]|static_design| the library ships actually
computes that statement's $\mathit{design}$ function.

The probabilistic layer behind Theorems~\ref{thm:mfdr}
and~\ref{thm:bridge} is measure-theoretic rather than a finite stand-in:
an arbitrary probability space, a filtration over $\mathbb{N}$, and
Mathlib's Bochner integral and conditional expectation. The wealth
process is defined by recursion through $T$ itself; its adaptedness and
integrability are proved rather than assumed; the $p$-values are
arbitrary random variables, each revealed at time $j+1$; and the set of
true nulls is a fixed set of indices, matching the fixed-horizon scope
Table~\ref{tab:assumptions} records. (H3) enters as an almost-everywhere
bound on the conditional expectation of the rejection indicator and is
consumed once, through the tower property; the supermartingale induction
carries the invariant of Theorem~\ref{thm:mfdr}'s proof; and a separate
lemma identifies the stochastic process pathwise with the fold of the
monitor's step relation, which entitles Theorem~\ref{thm:bridge} to speak
of one process. Every theorem cited in this section reports exactly
the axioms \texttt{propext},
\texttt{Classical.choice}, and \texttt{Quot.sound} under
\texttt{\#print axioms}, the standard classical trio, with no
\texttt{sorry} anywhere in the development, and \texttt{lake build}
reproduces the check.

The folding remains, and it is sound only relative to a further declaration,
described next.

\subsection{Non-Conservativity of Folding}
\label{sec:folding}

The statement one would like is that folding is conservative: it reduces the
sample from observations to artifacts, so it should only make evidence look
weaker, and the type system's restriction would then carry no soundness risk
at all. That statement is false, and the counterexample is small enough to
give in full.

Consider two groups, each of 25 artifacts carrying 4 observations apiece;
each group has exactly ten failed observations, and the groups differ only
in how those failures are distributed (Table~\ref{tab:folding}).

\begin{table}[t]
\centering
\small
\caption{The folding counterexample: two groups of 100 raw observations
(25 artifacts $\times$ 4 each), identical in raw failure count and
differing only in the distribution of those failures, under Fisher's exact test on the
raw and on the folded counts.}
\label{tab:folding}
\begin{tabular}{LNN}
\toprule
failures & \multicolumn{1}{R}{raw} & \multicolumn{1}{R}{folded} \\
\midrule
spread over 10 artifacts & 90/100 & 15/25 \\
concentrated in 3 & 90/100 & 22/25 \\
\midrule
\multicolumn{1}{R}{comparison} & $p = 1.0000$ & $p = 0.0507$ \\
\bottomrule
\end{tabular}
\end{table}

The groups are indistinguishable observation by observation.
However, folding manufactures a near-significant difference between them.
We swept 4{,}000 random configurations, each drawing two independent groups
of $n \in [8, 40]$ artifacts with $k \in [2, 5]$ observations apiece, group
pass probabilities drawn uniformly from $[0, 1]$, and each observation an
independent Bernoulli draw at its group's rate; raw and folded Fisher
$p$-values are then compared on the same configuration. Folding is
conservative (folded $p$ larger) 86.9\% of the time, anti-conservative
(folded $p$ smaller) 6.3\% of the time, and leaves $p$ unchanged, an exact
tie, the remaining 6.8\%. In the worst anti-conservative case swept, folding
moves a $p$-value from $0.044$ to $0.0016$, a factor of 27, and in 11 of
the 4{,}000 configurations it carries the $p$-value across the conventional
threshold outright, in the sharpest case from $0.31$ to $0.038$.

The reason is that folding is all-or-nothing, so a group whose failures are
spread loses more than one whose failures are concentrated: folding
\emph{changes the estimand}, from the fraction of observations that passed
to the fraction of artifacts that passed completely. These are different
quantities, and the second is the one the declaration named.

\begin{thm}[Folding is sound relative to its declaration]
\label{thm:folding}
If the raw observations are dependent only in the way
$\mathit{design}(\Sigma, \kappa_1, \kappa_2)$ declares (across the two
sides of the same artifact when paired, within a shared cluster value when
clustered, and mutually independent otherwise), then the folded per-artifact
outcomes are dependent only in that same pattern or less, and any mechanism
valid for it yields a $p$-value valid for the folded estimand.
\end{thm}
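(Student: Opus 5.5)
The proof rests on one observation: folding is a deterministic, coordinatewise map. Each folded outcome is a measurable function of the raw outcomes of a single artifact, namely the indicator that every observation of that artifact passed. We will therefore formalize the declared dependence pattern as a partition of raw observations into \emph{dependence blocks}, with mutual independence across blocks. In the unpaired, unclustered case each block is one artifact. When paired, a block is the union of the two sides' observations of one residual key. When clustered, a block is everything sharing one value of $\gamma$. The hypothesis is then that the block-indexed families of raw outcomes are mutually independent. The folding map must respect this partition, in the sense that every artifact's observations lie inside a single block; this is the step that needs the schema.

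First, we check that artifacts nest inside blocks. For the cluster blocks this follows from the declared dependency $\mathit{artifact} \mapsto \gamma$, i.e.\ conformance~(i) of Definition~\ref{def:conforms}, since all observations of one artifact share a key and hence a $\gamma$ value. For the paired blocks we use Theorem~\ref{thm:design}(b). Within each side the residual key labels artifacts injectively, so one artifact contributes to exactly one residual-key block on each side. Observations that share an artifact share its key fields, so they cannot straddle two residual keys. Second, we invoke the standard fact that measurable functions of independent families are independent: if $(X_B)_B$ are mutually independent and each folded outcome $Y_a = g_a(X_{B(a)})$ depends only on the block $B(a)$ containing $a$, then $(Y_a)_{a \in B}$ for distinct blocks $B$ are mutually independent. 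The folded outcomes therefore exhibit dependence only within blocks, which is the declared pattern or a coarsening of it. This is what \emph{or less} means: an artifact alone in its block becomes independent of everything else, and pairing survives because the folded pair keeps the same residual key.

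Third, we transfer validity. A mechanism valid for the pattern is, by the exact-test nulls stated in Section~\ref{sec:background}, valid for any joint law whose independence structure is at least as fine as the one it assumes. For the cluster sign-flip, exchangeability of each cluster's net sign is a function of the cluster block alone, so it is preserved under the fold. For McNemar, the discordant pairs of distinct folded artifacts are independent by step two, and under the folded null each is symmetric. For Fisher, mutual independence together with a common folded pass rate gives the hypergeometric null. In each case the null refers to the folded estimand, the probability that an artifact passes completely, so the super-uniformity of the raw test carries over to the folded data with that estimand as its null.

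The main obstacle is the third step. The statement ``valid for the pattern'' must be read as ``valid for every distribution with that independence structure whose null holds for the estimand the test is applied to.'' The proof has to state this quantification explicitly; otherwise Table~\ref{tab:folding} looks like a counterexample. It is not one, because the raw null (equal observation pass rates) and the folded null (equal artifact pass rates) are different hypotheses. I would make this explicit by defining mechanism validity relative to a class of joint laws closed under blockwise measurable maps, so that the theorem reduces to closure of that class under folding.
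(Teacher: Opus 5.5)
Your proposal is correct and is the paper's own argument made explicit. Each folded bit is a deterministic function of one artifact's raw observations, which share that artifact's key and so lie in a single declared block; functions of independent blocks stay independent; and any dependence the fold erases only weakens the pattern, which a mechanism valid for the declared design tolerates. Your block partition and your explicit reading of validity as relative to the folded estimand's null spell out what the paper compresses into one paragraph. Two slips are cosmetic. Nesting in $\gamma$-blocks needs no appeal to conformance~(i), since $\gamma$ is itself a key field. The folded pattern is a \emph{refinement} of the declared partition, not a coarsening.
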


\noindent The proof is immediate: folding one artifact's raw observations,
on one side, into a single bit is a deterministic function of exactly those
observations, so it cannot create dependence between two folded bits whose
raw observations were disjoint and independent. It can certainly destroy
dependence a deterministic reduction is free to erase (folding two
correlated raw observations into one bit apiece may leave the two bits
less correlated than the observations were, or none at all), and that
direction is harmless for validity: a mechanism sound for the declared
pattern stays sound when the true dependence is weaker than declared, the
same over-approximation Section~\ref{sec:design-premise} already licenses
for the design premise itself.

Note that all of the content is in the antecedent, which the analyst
supplies. We consider this the right shape for the guarantee. It certifies
that the declared question was answered at the stated budget; it offers no
correction to a $p$-value computed at the wrong unit. Only the first was
ever available, because choosing the unit chooses the question, and only the
analyst can do that.

\section{Implementation}
\label{sec:implementation}

\sys{Tacet}'s core\footnote{Available at
\url{https://github.com/abuach/tacet-python}, \sys{Tacet}'s reference
implementation and the case study replication code.} is 1{,}994 non-blank, non-comment-only lines of
dependency-free Python (1{,}566 for the runtime monitor and 428 for the
static front end), importing nothing outside \lstinline|math|,
\lstinline|itertools|, \lstinline|dataclasses|, \lstinline|ast|, and
\lstinline|sys|. \lstinline|tacet.qude|, the five optional investment rules
outside the certified perimeter (Section~\ref{sec:grade}), adds 186 more and
imports nothing beyond the package itself. \kw{observe} folds observations
onto their artifact and records the schema; \kw{Tracked} propagates
footprints through ordinary arithmetic; key indexing and \kw{where} implement
\est{\textsc{T-SelectKey}}'s $v[k]$, and \kw{select} implements
\est{\textsc{T-SelectOutcome}}'s $v\langle q \rangle$, with one difference
of surface convenience: the calculus applies $q$ to the outcome alone,
while \kw{select}'s Python predicate receives the key too, since a
selection that reads the key is already priced as $\est{\rd}$ by the form
it entered through and gains nothing from also being denied the key's
value; a \kw{Mechanism}
carries the \kw{Design} it
assumes; and \kw{spend} prices a $p$-value from any test, so the budget is
indifferent to which instrument produced it, while the design premise is
not.

\textbf{Threat model.} The guarantee is against an honest-but-fallible
analyst, one who makes mistakes but is not
trying to defeat the accounting. The monitor is a library object living inside the analyst's own Python
process: a pool's wealth is a plain mutable float, \kw{spend} will price
any $p$-value it receives (above), and nothing stops an analyst from
assigning to a pool's wealth directly. The trusted computing base is
therefore all of \sys{Tacet}, the Python interpreter running it, and the
Lean development only insofar as it describes what the shipped code
does, a correspondence tested on 599 random programs across two corpora
(Section~\ref{sec:metatheory}), not proved. This is the same threat model
every language-embedded accounting system in Section~\ref{sec:related}
adopts, and we state it explicitly because the
rest of this paper argues from proofs, and a proof about a model leaves
open what the process running it actually does.

\textbf{Two declarations.} In the worked analysis, the artifact key and the
cluster level are the only declarations; the third declared input of
Section~\ref{sec:declarations}, the bound, is optional and arrives only
with a plan. Four derived statistics and three
selections follow with nothing further stated, because footprints are
relative to declared artifacts rather than to variables, and propagation is
therefore a union over operands. 

\textbf{Value-level footprint tracking.} A \kw{Tracked} value prints its
footprint, so a few lines of ordinary arithmetic show what propagation
does:

\begin{whole}\begin{lstlisting}[style=tacetcode]
qwen.rate, llama.rate, qwen.rate - llama.rate
\end{lstlisting}\end{whole}
\begin{quote}\ttfamily\small
Tracked(1, 15 artifacts)\\
Tracked(0.8667, 15 artifacts)\\
Tracked(0.1333, 30 artifacts)
\end{quote}

\noindent Two disjoint models, fifteen artifacts each; the difference unions
them to thirty, with nothing declared to make that happen. A footprint is a
set and union is idempotent, so an artifact that enters an expression twice
is counted once, and no chain of operations grows a footprint past the
artifacts the analysis declared. A numeric grade behaves differently: a
sensitivity bound multiplies along the way, and squaring a tracked value is
enough to send it to infinity~\cite{dduo}.

The interesting pair narrows the same board in two ways:

\begin{whole}\begin{lstlisting}[style=tacetcode]
zoo["Stable-Code", "fim"].rate
\end{lstlisting}\end{whole}
\begin{quote}\ttfamily\small
Tracked(1, 6 artifacts)
\end{quote}
\begin{whole}\begin{lstlisting}[style=tacetcode]
stable.select(lambda k, ok: not ok).rate
\end{lstlisting}\end{whole}
\begin{quote}\ttfamily\small
Tracked(0, 15 artifacts)
\end{quote}

\noindent The first narrows to six artifacts, because a key literal is
decided before any outcome is read. The second keeps eight of the fifteen
and reports a rate of zero, because every artifact it kept is one that
failed. Its footprint is still fifteen: it read all fifteen to decide which
eight to keep, and \textsc{T-SelectOutcome} charges for the reading rather
than for the survivors. The two calls look like the same kind of narrowing,
and they leave different footprints, different constraints, and different
purity bits.

\textbf{Escapes and untracked control flow.} Calling \kw{float} on a tracked
value returns a bare number, and the footprint is gone. We count that one
escape route rather than blocking it (a class-level counter increments on
every such call, inspectable on request), consistent with the
honest-but-fallible threat model stated above; we do not count or report every route a
value can leave tracking by (a direct \lstinline|.value| read, for one), and
nothing surfaces the counter automatically in \kw{summary}. Data-dependent
control flow is likewise invisible: a claim
suppressed because the number looked wrong leaves no trace, which is the
garden of forking paths~\cite{gelman-loken} in the one place our footprints
cannot reach, and it is the implicit-flow problem under its own
name~\cite{sabelfeld-myers}.

\textbf{Pre-registration as a scoping rule.} \kw{observe} is the scope
boundary: a plan may be opened only before it and is sealed by it, so a
declared claim cannot have been chosen by looking, and the property is
syntactic rather than a fact about the analyst's intentions. This is the
mechanism blind analysis names in the sciences that already
practice it~\cite{maccoun-perlmutter}: hide the answer until the plan is
fixed, so the choice of what to test cannot be shaped by what was seen. The
seal does the enforcing: a registration is a label, an optional bound, a
sidedness flag, and two tuples of key literals, nothing that can hold a
reference to observed data, and declaring one on a plan \kw{observe} has
already sealed raises immediately, so no later registration can be added no
matter what it was computed from. A registration names the
comparison and not merely the claim, which it can do because a key literal
is not data; a subset chosen by reading outcomes pins nothing and therefore
can never satisfy one. Declared claims must be made in declared order, and
each at most once, since
Section~\ref{sec:grade}'s transformers do not commute and the audit charges
each entry a single stake: a reordered or repeated run is a run the audit
never folded. Skipping is permitted and reported, which is
the answer to the omission attack of Section~\ref{sec:limitations}: the deletion stays invisible
to every footprint, but the declaration is on the record.

The plan licenses two prices the monitor cannot. One-sided tests become
available where the direction was fixed in advance, and declared bounds make
Section~\ref{sec:metatheory}'s audit computable with no data in the process
at all: affordability stopped depending on footprints when the budget
collapsed to one pool, so the whole judgment is a fold over the plan. The
checker of Section~\ref{sec:metatheory} is that audit, so
Theorem~\ref{thm:static} is tested against the shipped feature rather than
against a second implementation of it.

A second judgment names what the audit decides:

\begin{mathpar}
\inferrule[Ok]{\clc{\pi \vdash c : \Claim{w}} \\ \ty{\mathit{affordable}(\pi, w_0)}}
  {\ty{\pi \vdash_\alpha c \;\mathsf{ok}}}
\end{mathpar}

\noindent $\pi \vdash c : \Claim{w}$ is Appendix~\ref{app:plan}'s extended
judgment, and $\pi$ is the declared plan (Section~\ref{sec:declarations}).
$\mathit{affordable}$ walks \emph{the plan} (not the claim sequence) in
declared order from the initial wealth $w_0 = \alpha/2$, requiring each
declared bound to be covered by the stake available when it is reached and
charging the stake whether or not a matching leaf appears in $c$:
Appendix~\ref{app:semantics}'s static semantics, read as a single top-level
verdict over $\pi$, exactly as \lstinline|Plan.audit()| computes it, from
the declared bounds alone and before a single leaf of $c$ has been typed,
which lets the audit skip straight to $\pi$ and never touch
\kw{observe}; the identification carries one scope condition, stated as
the fourth caveat below. \ty{\textsc{Ok}} holds exactly when every step at
which \clc{\textsc{S-Claim}} fires is certified, which is also why a
declared label with no matching leaf in $c$ is still charged: order and
omission are properties of $\pi$, not of any derivation of $c$'s type
(Appendix~\ref{app:plan}).

\textbf{The static audit.} The audit requires no execution at all, not even
up to \kw{observe}. \lstinline|python3 -m tacet.static <file.py>| reads a
file's source without running it and answers \ty{\textsc{Ok}} on whatever it
can statically resolve. It is not a second implementation: it walks the AST
for one pattern (a
\lstinline|Study|/\lstinline[style=tacetcode]|preregister|/\lstinline[style=tacetcode]|claim|
sequence built from literal arguments in straight-line code), reconstructs
the \kw{Plan} that pattern would build, and calls
\lstinline[style=tacetcode]|Plan.audit()| on it, the same function
Theorem~\ref{thm:static} was tested against above. Every argument must be a
literal, since a plan entry is never data; a claim built inside a loop, a
branch, or from anything else is reported as excluded rather than guessed
at. Run on the first declared plan of \lstinline|prereg.py|, a six-plan
example file, before a single row of data has been read:

\begin{whole}\begin{lstlisting}[basicstyle=\ttfamily\scriptsize]
$ python3 -m tacet.static prereg.py
scope 'part1', plan 'plan' (line 43, study 'study')
===================================================
static audit of 3 declared claims, opening wealth 0.02500
claim                           bound     stake     verdict    wealth
---------------------------------------------------------------------
Qwen beats CodeQwen            0.0100   0.01250  affordable   0.03750
Qwen beats Stable-Code         0.0100   0.01875  affordable   0.04375
Qwen beats OpenCoder           0.0400   0.02188     REFUSED   0.02188

2 of 3 decided claims are affordable
\end{lstlisting}\end{whole}

\noindent Two of three claims are certified and the third refused, all
before \kw{observe} runs; the file has five more plans behind this one, each
checked and reported the same way, under its own scope header naming the
source line and the \kw{Study} variable it audits. The transcript's
``affordable'' is the static verdict of Section~\ref{sec:analysis}'s
terminology footnote, computed with no data present, and distinct from
both solvency and support. The
process exits
nonzero on any refused or out-of-scope claim, so a pre-commit hook or a CI
step can gate on it; an excluded claim is a separate case, with its own
exit code (the third caveat below).

\lstinline|prereg.py| is a toy; the same audit runs on the paper's
own running example. Declaring the leaderboard's first six adjacent-rank
claims at slack $1.0$ (Table~\ref{tab:validation}'s convention: the declared
bound is the exact cluster-level $p$-value computed for that pair, the
sharpest honest bound possible) and auditing before \kw{observe} runs:

\begin{whole}\begin{lstlisting}[basicstyle=\ttfamily\scriptsize]
$ python3 -m tacet.static leaderboard_prereg.py
static audit of 6 declared claims, opening wealth 0.02500
claim                           bound     stake     verdict    wealth
---------------------------------------------------------------------
rank 1 beats rank 2            1.0000   0.01250     REFUSED   0.01250
rank 2 beats rank 3            0.8906   0.00625     REFUSED   0.00625
rank 3 beats rank 4            0.5000   0.00313     REFUSED   0.00313
rank 4 beats rank 5            1.0000   0.00156     REFUSED   0.00156
rank 5 beats rank 6            0.7539   0.00078     REFUSED   0.00078
rank 6 beats rank 7            1.0000   0.00039     REFUSED   0.00039

0 of 6 decided claims are affordable
\end{lstlisting}\end{whole}

\noindent All six are refused with no row of data read, exactly predicting
Section~\ref{sec:leaderboard}'s runtime finding on the same six comparisons.

Declaring the schema adds the design premise to the same audit.
\lstinline|leaderboard_schema.py| opens the plan with the
leaderboard's $\Sigma$ (the key's three fields, $I = \{\mathit{instance}\}$,
$\mathit{instance} \mapsto \mathit{repo}$, and $\gamma = \mathit{repo}$)
and re-declares the six claims above with their submissions as pins, so
\lstinline[style=tacetcode]|Plan.audit| decides \clc{\textsc{T-Claim}}'s
design premise from the schema and each entry's own pins, still before any
data is read. With the affordability columns of the transcript above
elided, the audit adds, for every one of the six claims:

\begin{whole}\begin{lstlisting}[basicstyle=\ttfamily\scriptsize]
$ python3 leaderboard_schema.py
static audit of 6 declared claims, opening wealth 0.02500
claim                        design             mechanism
---------------------------------------------------------
rank 1 beats rank 2          paired+clustered   cluster sign-flip
  ...
rank 6 beats rank 7          paired+clustered   cluster sign-flip

value-level route agrees with the declared schema on
6 of 6 comparisons
\end{lstlisting}\end{whole}

\noindent The last line is the same script's runtime half: once the 134
submissions' resolved sets are loaded, \kw{required\_design}, the
value-level route every other example uses, is asked the same six
questions on the realized keys and returns the same design and the same
demanded mechanism for each, the agreement Theorem~\ref{thm:design}
requires. The schema's pins are the submission names and its bounds are
the transcript above's, so nothing in the static half reads an outcome;
the design premise, like affordability, is decided from declarations
alone.

Four limits apply to the audit. First, sealing is per \kw{Study}
instance, so nothing stops a second, fresh \kw{Study} later in the same
file from re-observing the same rows under a plan written after the first
\kw{Study}'s results were already read; that case still requires a
timestamped external record, and no language feature replaces one
(Section~\ref{sec:threats}'s account-boundary threat). Second, a violated
bound voids only the static verdict: the claim is priced at its real
$p$-value first, and only then does the run either continue with the
claim marked or abort (\texttt{on\_violation}); neither setting can skip
the charge, since an uncharged refusal would let an analyst read a price
without paying for it. Theorem~\ref{thm:bridge} leans on this accounting: a dishonest bound
can cost the analyst a certified claim, never the reader the
$\mathrm{mFDR}_\eta$ bound, though a single dishonest bound can still
strand every \emph{honest} certified claim behind it, since the shadow
wealth is a running total. Third, a claim the walker cannot resolve to a
literal is excluded, and an excluded claim is distinct from a refused
one: a file with at least one declared plan and nothing resolvable exits
a third code, separate from both a clean pass and a refusal, so a CI step
reading only the exit code cannot mistake an uncheckable file for a
passing one. A loop over a literal collection written in the source is
unrolled and audited exactly as if written out by hand (demonstrated on
this section's own six-claim excerpt, with identical output); a
collection derived from the run itself (the ranked board, the released
task files) does not exist before \kw{observe} runs, so the leaderboard's
8{,}911-claim family and BIG-Bench Hard's 26-claim family remain
excluded, and for them the static fragment and the runtime monitor that
priced every claim in Section~\ref{sec:evaluation} remain two different
tools. Fourth, the certificate speaks about runs whose priced claims are
the plan's own, in declared order, with skips permitted. A skipped label
only leaves the run richer than the shadow, since the fold charged a
stake the run never paid; an \emph{exploratory} claim, one with no
registered label, priced while declared entries remain unclaimed drains
the pool by a factor the fold never charged, so a certified claim behind
it can be refused at run time with every declared bound honest. The
checker of Appendix~\ref{app:semantics} covers the mixed program, because
it walks the full claim sequence and charges the exploratory claim
$T_\bot$ (\clc{\textsc{S-Unbound}}); the certificate
\lstinline|Plan.audit| computes from $\pi$ alone holds for a run exactly
when no unplanned claim is priced before the plan's last certified one.

\section{Evaluation}
\label{sec:evaluation}

The four questions posed in Section~\ref{sec:intro} are answered by
construction, in the type system and its metatheory
(Sections~\ref{sec:lambdat}--\ref{sec:metatheory}). Two empirical questions
remain. Does the checker refuse the mistakes it was built to catch, on
programs constructed to make each mistake (Section~\ref{sec:gallery})? And
on real, honest analyses that made no such mistake, does the budget survive
contact with real effect sizes, keeping most of a paper's claims when the
evidence is strong and refusing most of them when it is marginal, instead
of doing one or the other regardless of the data
(Sections~\ref{sec:leaderboard}--\ref{sec:comparing-case-studies})?

One boundary applies to every count below and is stated once here rather
than per table: (H3) is unverified for both corpora
(Table~\ref{tab:assumptions}, Section~\ref{sec:mfdr}), and both claim
families are reconstructions made with the published outcomes already in
hand, so each case study prices a family chosen retrospectively rather than
declared in advance. The case studies demonstrate that the inputs the
procedure needs are recoverable from a typed analysis and that the resulting
prices are informative; they are not certified runs of the guarantee.
Section~\ref{sec:certified} exhibits one, on synthetic data built so that
(H1)--(H4) hold by construction.

\subsection{Refusal Case Studies}
\label{sec:gallery}

Section~\ref{sec:analysis} already showed two of the refusals: a claim the
pool cannot cover (\lstinline[style=tacetcode]|Unaffordable|), and a
two-sided test aimed at data that is paired and clustered
(\lstinline[style=tacetcode]|DesignMismatch|). Five more follow, each a
minimal program with one defect, run against the real library. Whether
the budget also prices real, honest analyses correctly is the harder
question, and
Sections~\ref{sec:leaderboard}--\ref{sec:comparing-case-studies} take it
up.

Every program below shares one setup: two systems, sixteen tasks each,
folded onto a \lstinline|Key(system, task)| artifact, with system A stronger
than system B by design.

\begin{whole}\begin{lstlisting}[style=tacetcode]
study = Study(alpha=0.05)
board = study.observe(rows, artifact=lambda r: r["key"],
                      passed=lambda r: r["passed"])
\end{lstlisting}\end{whole}

\textbf{An opaque artifact key.} Declaring a cluster level the key does not
carry is neither silently dropped nor silently assumed independent. This is
not a point on the design lattice at all (the lattice's bottom is
\emph{independent}, a specific design that receives Fisher's exact test
(Section~\ref{sec:lambdat}), not a default any mechanism can claim); it is
the case where no $\Sigma$ was ever formed: Definition~\ref{def:schema}'s
$\gamma \in F \cup \{\cdot\}$ is part of what makes a tuple a schema at all,
so a cluster field absent from the key does not yield a schema with an
undefined design, it yields no schema. \kw{observe} presupposes a
well-formed $\Sigma$; nothing in Python enforces that at construction time,
so this is the check that does, and Section~\ref{sec:threats}
names refusing over guessing as the policy for exactly this case:

\begin{whole}\begin{lstlisting}[style=tacetcode]
study.observe(
    rows, artifact=lambda r: r["key"],
    cluster="repo", passed=lambda r: r["passed"])
\end{lstlisting}\end{whole}
\begin{whole}\begin{lstlisting}
OpaqueKey: cluster level 'repo' was
declared, but the artifact key has fields
('system', 'task') and none of them is it.
  add 'repo' to the key, or drop the
  cluster declaration.
\end{lstlisting}\end{whole}

\textbf{An unplanned one-sided test.} The cheaper one-sided price is sound
only if the direction was fixed before the data was seen, and nothing in
this program says it was:

\begin{whole}\begin{lstlisting}[style=tacetcode]
study.claim("A beats B",
            board.where(system="A") > board.where(system="B"))
\end{lstlisting}\end{whole}
\begin{whole}\begin{lstlisting}
OutOfScope: 'A beats B' is one-sided, which is the cheaper
price, and it is sound only if the direction was fixed
before the data was seen. nothing here says it was.
\end{lstlisting}\end{whole}

\textbf{A late plan.} \kw{observe} is the scope boundary (Section~\ref{sec:implementation}); a plan
can only be opened before it:

\begin{whole}\begin{lstlisting}[style=tacetcode]
board = study.observe(
    rows, artifact=lambda r: r["key"],
    passed=lambda r: r["passed"])
study.preregister()
\end{lstlisting}\end{whole}
\begin{whole}\begin{lstlisting}
OutOfScope: cannot pre-register after
observe: the data is already in hand, so
a claim declared now could have been
chosen by looking at it.
\end{lstlisting}\end{whole}

\textbf{A reordered claim.} Section~\ref{sec:grade}'s transformers do not
commute, so a plan fixes an order and not merely a set:

\begin{whole}\begin{lstlisting}[style=tacetcode]
plan.claim("first", bound=0.01)
plan.claim("second", bound=0.01)
...
study.claim("second", a.vs(b))   # lands
study.claim("first", a.vs(b))    # declared earlier, claimed later
\end{lstlisting}\end{whole}
\begin{whole}\begin{lstlisting}
OutOfScope: 'first' is declared before 'second' in the plan
but is being claimed after it.
\end{lstlisting}\end{whole}

\textbf{An outcome-selected sample against a registration.} A registration
pins the whole comparison down to its constraints, and a sample narrowed by
outcome carries no pin to match it:

\begin{whole}\begin{lstlisting}[style=tacetcode]
plan.claim("A beats B", bound=0.01, left="A", right="B")
...
cherry = board.where(system="B").select(lambda k, ok: not ok)
study.claim("A beats B", board.where(system="A").vs(cherry))
\end{lstlisting}\end{whole}
\begin{whole}\begin{lstlisting}
OutOfScope: 'A beats B' declared its right side as ['B'], but
the sample given is selected by outcome, so it carries no
pinned key.
\end{lstlisting}\end{whole}

This message also marks the one deliberate divergence between the library
and $\mathcal{T}$. The library reads the design off observed key values
(Section~\ref{sec:schemas}), so it can afford to drop the pin at
\kw{select} and refuse on the empty pin above; $\mathcal{T}$ computes the
design from constraints, so \est{\textsc{T-SelectOutcome}} must leave
$\kappa$ intact, and the $\est{\clean}$ premise of
\clc{\textsc{T-OneSided}} and \clc{\textsc{T-Registered}}
(Appendix~\ref{app:plan}) carries the refusal instead. The two enforce the
same judgment at different doors, and not merely by construction:
Appendix~\ref{app:plan}'s ``Why dropping the pin cannot mismatch'' shows
the library's simplification can only ever coincide with, never fall short
of, $\mathcal{T}$'s refusal, because Section~\ref{sec:design-premise}'s own
non-degeneracy conditions already make an empty-pinned registration
permanently unmatchable by anything, honest samples included.

Each of these five is the direct, predictable consequence of a rule already
stated earlier in the paper (the design lattice's bottom element, the
one-sided price, the scope boundary, non-commutative composition, and the
pin a registration checks), exercised on the smallest program that trips
it. The guarantee is intended to operate at this level, one specific way of
getting it wrong at a time, well below a verdict on whether a whole
leaderboard turned out fine. Each program above also has a
one-line accepting twin: adding \lstinline|cluster="repo"| to the key,
registering the direction before \kw{observe} runs (or asking for the
honest two-sided price instead), calling \kw{preregister} before
\kw{observe} rather than after, claiming in the order the plan declared, or
comparing against the unfiltered sample instead of the outcome-selected one
is enough to turn every refusal here into an ordinary, priced claim: this
suite tests only that the checker refuses what it should, not how often it
wrongly refuses what it should not; Section~\ref{sec:threats} states that
limitation directly.

\subsection{Case Study: the SWE-bench Verified Leaderboard}
\label{sec:leaderboard}

The leaderboard is the easy case, because a table presents its claim
family directly. We took every public submission to
SWE-bench Verified~\cite{swebench} in the benchmark's own experiments
repository at the time of writing, 134 in total, each with the set of
instances it resolved, and priced every comparison the table's total order
implies. Treating the table as licensing all of them is the consistent
reading, since a reader who accepts that rank 3 beats rank 4 and rank 4
beats rank 5 has already accepted that rank 3 beats rank 5, by the
transitivity the order itself asserts: 8{,}911 claims over the 468 instances
resolved by at least one submission, the 36 pairs whose resolved counts tie
included, since the table prints one name above the other and ``no more than
chance'' is itself the finding for such a pair.\footnote{The narrower reading agrees: of the 133
adjacent-rank claims the full family contains, naive $p < 0.05$ supports 2
and the budget keeps 0, and $0.05/133$ is already below the sharpest
$p$-value this data can produce, so no adjacent claim rejects under
Bonferroni at the smaller family either.} The artifact
(Section~\ref{sec:declarations}) is the patch one submission produced for
one instance, keyed by its (submission, repo, instance) triple.

Neither of this corpus's two findings is novel on its own. That benchmark
comparisons in this literature are routinely underpowered is established
directly~\cite{card-power,dror-replicability}, and that a sequential budget's
power depends on the order its claims arrive in is a known property of
alpha-investing~\cite{foster-stine,aharoni-rosset}, one QUDE~\cite{qude}
already discusses as order-gaming. \sys{Tacet} recovers both inputs from
the program instead of assuming them: the unit of evidence comes
from the declared nesting, and the order comes from the claim sequence the
analysis itself executes.

Both matter here. Respecting the declared nesting, 12 repositories admit
only $2^{12}$ sign assignments, so the sharpest two-sided $p$-value any
comparison on this data can produce is $2/4{,}096 = 0.000488$, above
Bonferroni's threshold for 8{,}911 claims and above the largest stake the
budget can offer after five refusals; the family is unanswerable at this
resolution, and reading order goes bankrupt at the sixth claim. Run instead
at the instance level, where each of the 468 instances counts as its own
unit, the same analysis moves individual adjacent-rank $p$-values by up to
$48\times$ and supports 5{,}518 claims. Nothing in the $p$-values says which
of the two runs read the design correctly; the artifact keys do.
Tables~\ref{tab:ordering} and~\ref{tab:baselines} report both families in
full, and Section~\ref{sec:comparing-case-studies} reads them against the
second case study.

\subsection{Case Study: BIG-Bench Hard}
\label{sec:bbh}

The harder and more representative case is a paper that asserts its
comparisons in sentences rather than in a table. BIG-Bench Hard~\cite{bbh} is one:
its abstract argues that prompting without chain-of-thought substantially
underestimates model capability, and its authors released the per-example
outputs for \texttt{code-davinci-002} under both prompting conditions, which
makes the claim priceable at all.

The family we reconstruct is one claim per task: does chain-of-thought beat
direct prompting on this task? The design follows from the keys without our
help. Both conditions run on the same examples, so each comparison is
paired; and examples nest within tasks, so any pooled claim is a question
about tasks. The design demands McNemar's exact test on the task's
discordant examples, priced two-sided at $\alpha = 0.05$ under
\lstinline|fixed_fraction|'s defaults (Section~\ref{sec:grade}), the same
choice as every other numeric example in this paper: BIG-Bench Hard's own
abstract already states a direction, so pricing these claims one-sided
would spend the cheaper price on a direction this case study did not fix
before looking, precisely what Section~\ref{sec:gallery}'s unplanned
one-sided example refuses.

Our scoring of the released outputs matches the paper's own results table
on every point we can check: 22 of 22 comparison points agree to one
decimal place, the precision the original table reports, and everything
below rests on this. A point is not always one task, since BBH's table
averages some related tasks into a single entry; unpacking each average
back into its tasks expands the 22 checkable points to 26 individual task
files, one claim per file. This covers 26 of the 27 released tasks; the
27th, \texttt{word\_sorting}, is not one of the 22 verifiable points, so we
exclude it by the same rule. Nothing turns on the choice: including it
leaves 19 of 27 supported, under the ordering below.

Across 26 tasks and 6{,}261 examples, taken alphabetically by task name and
so predictable in Section~\ref{sec:analysis}'s sense, 19 of 26 claims are
supported on a budget.
Chain-of-thought scores higher on 21 tasks and 21 tasks clear $p < 0.05$,
both independently of ordering, but not the same 21: on
\lstinline[style=tacetcode]|causal_judgement| and
\lstinline[style=tacetcode]|ruin_names|, direct prompting outscores
chain-of-thought on the raw counts while the two-sided McNemar test still
clears $0.05$ ($p = 0.0474$, $p = 0.0498$), since a two-sided test rejects
on a large discordance in either direction and reports nothing about
which. \kw{spend} never checks that a rejected claim's observed sign
agrees with its label (Section~\ref{sec:threats}); neither task is among
the 19 the budget supports here, so the headline is unaffected, but the
gap is real and this is where it surfaces. Sorted instead by each task's
discordant-pair effect size (an oracle ordering, read off the data under
test rather than one the guarantee covers), 17 are supported, tying
Table~\ref{tab:ordering}'s random-draw median by coincidence rather than
by the same computation.

This analysis mostly survives, which is the substance of the result: a
corpus on which every case study returned zero would measure the
instrument's own severity and nothing about the analyses. A paper with real, large, and plentiful
effects keeps most of its claims and pays two for the accounting: the
two tasks above $0.025$, the largest stake the
ceiling wealth offers. This is not a corpus coincidence: 19 is exactly the
count clearing $0.025$, the ceiling stake Section~\ref{sec:grade}'s fixed
point fixes independently of $i$, so any discovery-rich, predictably-ordered
family that reaches the ceiling pays the same tax, provided it reaches the
ceiling at all, since a family whose early claims are weaker can still
bankrupt first (Section~\ref{sec:comparing-case-studies}'s actual divide on
the leaderboard). Whether a plain correction would have kept as much is
answered directly there too.

Clustering has its largest effect on the pooled claim. Treating the 6{,}261
examples as independent trials gives $p = 1.76 \times 10^{-192}$. Respecting
the task structure gives $p = 8.37 \times 10^{-6}$. Both are significant, so
the paper's conclusion is not in doubt, and we are not disputing it. The
size of the evidence changes, by 187 orders of magnitude: a $p$-value of
$10^{-192}$ is not something an experiment with 26 tasks can produce. In
other words, the evidence here amounts to 26 units, one per task, however
many examples underlie each one.

\subsection{Comparing the Two Case Studies}
\label{sec:comparing-case-studies}

\begin{table}[t]
\centering
\small
\caption{Claims supported under five orderings of the same claim family, on
each case study. Random is the median [min--max] over 500 independent
draws (seed 0); ``best first'' and ``worst first'' sort by
the realized $p$-values themselves.}
\label{tab:ordering}
\begin{tabular}{LNN}
\toprule
ordering & \multicolumn{1}{R}{leaderboard} & \multicolumn{1}{R}{BIG-Bench Hard} \\
\midrule
best first (oracle bound, not a valid ordering) & 6{,}545 of 8{,}911 & 19 of 26 \\
chronological / alphabetical (predictable) & 0 of 8{,}911 & 19 of 26 \\
random, median [min--max] & 28 [0--302] & 17 [14--19] \\
as a reader meets them (predictable) & 0 of 8{,}911 & 19 of 26 \\
worst first (oracle bound, not a valid ordering) & 0 of 8{,}911 & 14 of 26 \\
\bottomrule
\end{tabular}
\end{table}

\noindent ``Best first'' and ``worst first'' sort the claims by the very
$p$-values they go on to spend, so neither is an ordering \sys{Tacet}'s
guarantee covers; they are oracle bounds on what the most and least
favorable ordering could achieve. The rows the guarantee covers are the
predictable ones: the leaderboard's submission-date order, BBH's alphabetical order (also its reading
order, since the released task files carry no submission date or other
natural sequence, so the right column's two predictable rows are the
same run), and each individual random draw (interquartile ranges 9--57
and 16--18). Every count in Table~\ref{tab:ordering}, like those above, is measured by running
\lstinline[style=tacetcode]|Study.spend| over an ordering and counting
the calls that return: \emph{supported} claims, distinct from the
static \ty{\textsc{Ok}} verdict of Section~\ref{sec:metatheory}.
Nothing in the checker or the monitor inspects how an ordering was
constructed, a gap Section~\ref{sec:threats} returns to.

On these two corpora, order dominates when effects are marginal and is
nearly irrelevant when they are strong: plentiful discoveries keep
refunding the budget whatever sequence they arrive in, whereas an
analysis scraping for adjacent-rank distinctions is at the mercy of
early failures draining the pool, and we would have overclaimed from
either study alone. The leaderboard's gap between reading order (0) and
a random draw (median 28) is a threshold effect. Under
\lstinline|fixed_fraction| with no rejections, wealth decays
geometrically, and once the largest offerable stake falls below the
sharpest $p$-value the data's own discreteness can produce ($0.000488$
at the cluster level, Section~\ref{sec:leaderboard}), no claim can ever be
kept again. Reading order meets its weakest, hardest-to-separate
evidence first and crosses that threshold by the sixth claim; a random
draw sometimes meets a few strong, well-separated pairs early, and
those few refunds are worth an order of magnitude more support, while
the random maximum (302) still falls three orders of magnitude short of
the strongest-first ceiling (6{,}545). Since every ordering fixed
independently of the evidence is covered by Theorem~\ref{thm:mfdr}
equally, a study whose claim order does not otherwise matter to the
analyst has a meaningful amount to gain by shuffling the family with a
seed fixed before any claim is priced; the library ships this as
\lstinline[style=tacetcode]|shuffled_order|. It is not the study
default, because the reading-order and chronological orderings are the
ones an analyst would write without thinking about ordering at all, and
reporting them shows what an unconsidered sequence costs; at
no seed we tried does shuffling close the gap to
Benjamini--Yekutieli's 2{,}038 (Table~\ref{tab:baselines}).

\textbf{Baselines.} Table~\ref{tab:baselines} places the budget beside
five fixed multiple-comparison corrections, on both claim families,
with the leaderboard priced at both the instance level and the cluster
level. Two readings matter. First, no procedure in the table catches
the design mistake on its own: Bonferroni, the most conservative of the
five fixed corrections, still endorses 6{,}467 of 8{,}911 comparisons
once every SWE-bench instance is treated as an independent unit, so the
multiplicity correction and the design correction answer different
questions, and a corpus can fail the second while every procedure
passes the first. Second, on the leaderboard's own dependent family,
Benjamini--Yekutieli, the honest offline comparator
Section~\ref{sec:background} names, keeps 2{,}038 of 8{,}911 while the
budget matches Holm and Bonferroni at 0 under both orderings we ran;
and on BIG-Bench Hard the budget's 19 of 26 is identical to what plain
Benjamini--Hochberg keeps without any of \sys{Tacet}'s machinery, a
count even the strongest ordering available to the budget does not
exceed. The budget adds its accounting
(Section~\ref{sec:limitations}) and a design
read automatically off the keys rather than assumed by whichever
procedure the analyst reaches for; neither property makes it the
stronger procedure on these corpora. The five QUDE investment rules the
implementation ships (\lstinline|tacet.qude|) behave consistently with
this, at QUDE's own example parameters: on the leaderboard in reading
order every rule keeps 0 of 8{,}911, and on BIG-Bench Hard the thrifty
$\beta$-farsighted rule keeps 19 of 26, tying
\lstinline|fixed_fraction|, while the three rules anchored to the
initial wealth keep 15, tying Bonferroni. These counts are measurements
of unproved rules: Theorem~\ref{thm:mfdr} covers
\lstinline|fixed_fraction| alone (Section~\ref{sec:threats}).

\begin{table}[t]
\centering
\small
\caption{The same claim families under five fixed multiple-comparison
corrections and \sys{Tacet}'s budget, at two orderings each fixed
independently of the claims' own $p$-values. Holm is Bonferroni's
step-down, uniformly-more-powerful cousin, at the same family-wise
guarantee; Benjamini--Yekutieli is Benjamini--Hochberg's extension to
arbitrary dependence among the tests, the setting the cluster-level
leaderboard is actually in (Section~\ref{sec:background}). The leaderboard
is shown at both the instance level, where McNemar's exact test treats
each of the 468 instances as an independent unit of evidence, and the
cluster level, where the sign-flip treats each of the 12 repositories as
one unit, the correction Section~\ref{sec:analysis} argues for.}
\label{tab:baselines}
\begin{tabular}{LNNN}
\toprule
procedure & \multicolumn{1}{R}{leaderboard (instance)} & \multicolumn{1}{R}{leaderboard (cluster)} & \multicolumn{1}{R}{BIG-Bench Hard} \\
\midrule
naive $p < 0.05$ & 7{,}835 of 8{,}911 & 6{,}987 of 8{,}911 & 21 of 26 \\
Benjamini--Hochberg & 7{,}808 of 8{,}911 & 6{,}813 of 8{,}911 & 19 of 26 \\
Benjamini--Yekutieli & 7{,}361 of 8{,}911 & 2{,}038 of 8{,}911 & 15 of 26 \\
Holm & 6{,}616 of 8{,}911 & 0 of 8{,}911 & 15 of 26 \\
Bonferroni & 6{,}467 of 8{,}911 & 0 of 8{,}911 & 15 of 26 \\
budget, predictable ordering & 7{,}611 of 8{,}911 & 0 of 8{,}911 & 19 of 26 \\
budget, reading order & 5{,}518 of 8{,}911 & 0 of 8{,}911 & 19 of 26 \\
\bottomrule
\end{tabular}
\end{table}

\textbf{Synthetic validation.} A real corpus's ground truth is unknown, so
Table~\ref{tab:baselines} cannot show which kept claims are false
discoveries; a simulation can. We generate synthetic leaderboards shaped
like Section~\ref{sec:leaderboard}'s (six systems, twelve clusters, largest
carrying 47\%, close to Django's 46\%), with pass probability set by a
system's skill, a per-cluster \emph{affinity}, and independent noise, so
two equal-skill systems form a genuine null that can still look, within
one cluster, as though one beats the other.

\begin{table}[t]
\centering
\small
\caption{Empirical $\mathrm{mFDR}_\eta$ under a complete null (2{,}000
synthetic leaderboards, 15 pairwise comparisons each, sharing the same
generative model), at the instance and cluster resolutions, for the same
four procedures as Table~\ref{tab:baselines}, beside the realized false
discovery rate and family-wise error rate the same trials measure.
Under a complete null every rejection is a false one, so a trial with any
rejection at all contributes a false-discovery share of 1 and one with none
contributes 0; FDR (the trial-averaged share, $\widehat{\mathbb{E}}[V/\max(R,1)]$) and
FWER (the fraction of trials with $V \ge 1$) therefore coincide exactly and are
reported as one column. Bold exceeds $\alpha = 0.05$.}
\label{tab:mfdr-sim}
\begin{tabular}{LNNNN}
\toprule
 & \multicolumn{2}{C}{instance-level} & \multicolumn{2}{C}{cluster-level} \\
\cmidrule(lr){2-3}\cmidrule(lr){4-5}
procedure & \multicolumn{1}{R}{$\mathrm{mFDR}_\eta$} & \multicolumn{1}{R}{FDR\,=\,FWER} & \multicolumn{1}{R}{$\mathrm{mFDR}_\eta$} & \multicolumn{1}{R}{FDR\,=\,FWER} \\
\midrule
naive $p < 0.05$ & \textbf{0.8280} & \textbf{0.9095} & \textbf{0.3601} & \textbf{0.3285} \\
Benjamini--Hochberg & \textbf{0.7592} & \textbf{0.6825} & 0.0251 & 0.0180 \\
Bonferroni & \textbf{0.6713} & \textbf{0.6605} & 0.0181 & 0.0160 \\
sequential budget & \textbf{0.6445} & \textbf{0.5005} & 0.0145 & 0.0125 \\
\bottomrule
\end{tabular}
\end{table}

At the instance level, every procedure in Table~\ref{tab:mfdr-sim},
the sequential budget this paper proves sound included, exceeds its own
$\alpha$ under a complete null, by a wide margin, on all three
measures. This is Theorem~\ref{thm:mfdr} at work under a violated
hypothesis: (H2) requires the correlation structure the design premise
assumes to be true of the run, and at the instance level it is declared
away rather than met. At the cluster level, naive thresholding still
exceeds $\alpha$ because it was never a correction to begin with, and
Benjamini--Hochberg, Bonferroni, and the budget all hold, on all three
measures. Once there is something real to discover,
$\mathrm{mFDR}_\eta$ and FWER part ways even at the cluster level: at
skill gap 2, Benjamini--Hochberg's $\mathrm{mFDR}_\eta$ is a controlled
$0.0130$ against an FWER of $0.1110$, which is no defect in a procedure
never built to control FWER; the budget's own FWER stays under
$\alpha$ at the gaps we swept, and Theorem~\ref{thm:mfdr} promises
nothing about that number. A correctly declared design controls what
the design premise is priced to control, with no other error rate
bundled in, and the design read, more than the procedure applied on top
of it, decides whether any of these numbers mean what they say.

\subsection{A Certified Run}
\label{sec:certified}

The two case studies above are retrospective, and (H3) is unverified for
both, for the same reason: nobody knows the true dependence structure of
SWE-bench Verified or BIG-Bench Hard, so it can only be declared, not
checked. A synthetic analysis has no such excuse. This section builds one
whose true generative process is known exactly, checks every hypothesis of
Theorem~\ref{thm:mfdr} against it directly, and runs it once, end to end,
as an ordinary pre-registered study rather than as a reconstruction.

Three synthetic systems, A, B, and C, are compared pairwise, each pairing
over its own disjoint pool of items nested in 16 clusters. A and B share a skill level,
so \lstinline[style=tacetcode]|"A beats B"| is a genuine null; C is ahead
by a fixed gap. Per-cluster, per-system affinity and per-item difficulty
are drawn independently within each pairing's own pool, so observations
are dependent only in the way the declared schema says: paired within an
item across the two systems being compared, mutually independent across
clusters, exactly Theorem~\ref{thm:folding}'s antecedent, met by
construction. That discharges (H2). Giving every pairing its own disjoint
pool is stronger than (H2) alone requires, and it lets (H3) hold
outright: (H3) is \emph{conditional} super-uniformity, given everything
the account has recorded before a claim is priced, including the
$p$-values of the claims before it, and a claim built from data no earlier
claim touched cannot be biased by conditioning on those earlier $p$-values,
whatever that conditioning would otherwise do. With the three pairings
drawing on mutually independent randomness, the three realized $p$-values
are themselves independent, so (H3) for each claim's own null reduces to
the same super-uniformity every exact test enjoys under its own null
(Section~\ref{sec:background}), with no argument about conditioning
required. (H4) holds under the study defaults for
every $\alpha \le 1/2$, mechanically, and (H1) is a fact about the source rather than an assumption about the
analyst: the study script is straight-line code
with no conditional claim-skipping logic of any kind, so every plan entry
is claimed exactly once, in declared order, by construction.

The plan is declared with a schema, before a single row exists: two claims
(\lstinline[style=tacetcode]|"C beats A"|,
\lstinline[style=tacetcode]|"C beats B"|) at a bound of $0.01$ each, chosen
by a power analysis over 400 independent replicates of the same design
(clearing on 88\% and 89\% of
them respectively), and a third (\lstinline[style=tacetcode]|"A beats B"|)
left undeclared, because nothing in that analysis suggested a small
$p$-value was likely there; the partial-pre-registration corollary of
Appendix~\ref{app:proof} is why leaving one claim unbound costs the other
two nothing. The static audit
reports both declared claims affordable (stakes $0.0125$ and $0.0188$) and
the third \emph{unknown}, before \kw{observe} runs. The study is then run
once, on data generated after the plan was written, at a seed independent
of the power analysis: both declared claims are honest and kept
($p=0.00012$ against the declared $0.01$ for
\lstinline[style=tacetcode]|"C beats A"|, $p=0.00244$ for
\lstinline[style=tacetcode]|"C beats B"|), and
\lstinline[style=tacetcode]|"A beats B"|, the genuine null, is correctly
refused at $p=0.9048$.

Every hypothesis of Theorem~\ref{thm:mfdr} holds for this run, so it
controls $\mathrm{mFDR}_\eta$ at $\alpha$, and the honesty of both declared
bounds additionally puts it under Theorem~\ref{thm:bridge}: the static
audit's affordability verdict, computed before a row of data existed, is not
merely consistent with what the runtime monitor went on to do, but is the
fact Theorem~\ref{thm:static} proves it has to be. This is a synthetic
demonstration, and it is not a claim that any real analysis in this paper
meets these hypotheses: the leaderboard and BIG-Bench Hard remain
retrospective, and (H1) for either would still need an external record no
script here supplies. It establishes something narrower and still worth
having: the hypotheses are not vacuous, and a checker-accepted, honestly run
analysis really does inherit the guarantee the theorem promises.

\textbf{Checker runtime.} Appendix~\ref{app:runtime} times
\lstinline[style=tacetcode]|tacet.static| against StatWhy's reported
SMT solver latency (the cost trade of Section~\ref{sec:bhl}) and against a
synthetic sweep of up to 10{,}000 claims.
On the two real case studies' own scale the check is not measurable as a
serious cost at all: \lstinline[style=tacetcode]|prereg.py|'s six
declared plans audit in under 2~ms, and the repository-wide design gate
completes in under a second.

\section{Limitations and Threats to Validity}
\label{sec:limitations}
\label{sec:threats}

Table~\ref{tab:assumptions} collects every assumption the guarantee
depends on, defined or discharged where it is defined or discharged, and
stated as still open where it is not. The paragraphs below add the
limits and threats that do not reduce to a row of the table, each with
the sharpest number we have for it.

\begin{table}[t]
\centering
\footnotesize
\caption{Every assumption the guarantee depends on, defined once here rather
than repeated at each point in the paper that relies on it.}
\label{tab:assumptions}
\begin{tabular}{@{}P{0.28\linewidth}P{0.64\linewidth}@{}}
\toprule
assumption & status \\
\midrule
(H1) complete account (Def.~\ref{def:complete}) &
Assumed, unchecked; the deletion attack
(Table~\ref{tab:deletion}; Appendix~\ref{app:crossing}) measures its cost. \\
(H2) within-claim validity &
Discharged by Theorem~\ref{thm:folding}, given the design is genuine
(next row). \\
(H3) conditional super-uniformity at the stake &
Assumed; \emph{unverified for both case studies} (Section~\ref{sec:mfdr}),
since each family is drawn from one shared, dependent pool. \\
claim order fixed independently of the evidence &
A precondition for (H3), stated separately because it is a fact about the
\emph{program}; taken on trust the same
way (H1) is (``Data-dependent ordering'' below). \\
(H4) bounded refund &
An assumption of Theorem~\ref{thm:mfdr}, which is proved for every $\omega
\le \alpha$; the shipped defaults ($w_0 = \omega = \alpha/2$, $\eta = 1 -
\alpha$) satisfy it for every $\alpha \le 1/2$, mechanically. $\omega > \alpha$
falls outside the
theorem, and the unbounded-refund paragraph below records what that
costs. \\
honesty of declared bounds (Def.~\ref{def:honest}) &
Assumed; dishonesty costs feasibility (Theorem~\ref{thm:static}), never
$\mathrm{mFDR}_\eta$ control (Theorem~\ref{thm:bridge}), by construction. \\
conformance: declared $D$, $I$ genuine (Def.~\ref{def:conforms}) &
Assumed, unchecked by the type system; the runtime alignment check
independently refuses a residual key that is not actually injective. \\
\lstinline[style=tacetcode]|spend|'s $p$-value is valid for the design &
Assumed; \lstinline[style=tacetcode]|spend| prices any $p$-value it
receives, unchecked against the design. \\
fixed horizon (Theorem~\ref{thm:mfdr}) &
Bounds $\mathrm{mFDR}_\eta(t)$ at a horizon fixed in advance rather than
chosen from the data, and the uncovered case is the ordinary one:
\texttt{on\_violation}'s abort is one instance, but an analyst who simply stops
once the pool empties, the behavior this paper's own bankruptcy claims
narrate, is exercising the same uncovered, data-dependent stopping rule.
Optional stopping is the standard, unattempted extension: the
confidence-sequence literature's time-uniform,
anytime-valid bounds~\cite{howard-confseq} are built precisely to survive
a data-dependent stopping rule, and are the natural place to look for it. \\
\bottomrule
\end{tabular}
\end{table}

\textbf{The meaning of $\mathrm{mFDR}_\eta$.}
Alpha-investing~\cite{foster-stine} controls $\mathrm{mFDR}_\eta =
\mathbb{E}[V] / (\mathbb{E}[R] + \eta)$, which differs from the raw
share of discoveries that are false: with the standard $\eta = 1 -
\alpha$, in a simulation of 4{,}000 trials with models of equal skill,
so that every discovery is false by construction, the raw share reached
$1.00$ while $\mathrm{mFDR}_\eta$ read $0.0031$ and passed. Nothing is
broken; this is Foster and Stine's guarantee, which QUDE inherits and so
do we, and because our target analyses are small we state plainly what
we control: $\mathrm{mFDR}_\eta$ at $\alpha$, and never a bound on the
share of an analyst's discoveries that are false. It is also the guarantee the static story
reaches today: the fixed-fraction rule sets its stake from the current
pool alone, which the checker's invariant $\hat W_j \le W_j$
(Theorem~\ref{thm:static}) and the Lean development mirror, whereas
rules controlling FDR itself (LORD and SAFFRON~\cite{lord,saffron}) set
their stakes from the verdict history and have no obvious analogue of
that static under-approximation. An e-value foundation would close
several of this paper's open gaps at once, including (H3)'s unverified
dependence and Theorem~\ref{thm:mfdr}'s fixed-horizon scope, since
e-BH~\cite{wang-ramdas-ebh} controls FDR under arbitrary dependence and
e-processes survive optional stopping; it does not yet exist for the
three mechanisms this paper serves, because turning an exact test into a
composable e-value is itself unsolved work, per mechanism. The
design-from-schema computation and the footprint's purity bit are
orthogonal to which foundation carries them, so we regard the migration
as the separable next step.

\textbf{Weakness of the guarantee.} $\mathrm{mFDR}_\eta$ is a ratio of
expectations and implies neither the false discovery rate nor the
family-wise error rate. On the running example the budget supports
exactly as many claims as Bonferroni, which is none, under both
predictable orderings we tried, and Bonferroni controls the stronger
guarantee; a random ordering typically does better (a median of 28 of
8{,}911, Table~\ref{tab:ordering}), but a random draw is not a claim
about how any particular analysis will actually be run. The budget
contributes its accounting: it is incremental, requiring
no advance knowledge of how many comparisons the analysis will make,
and it reports \emph{when} the analysis ran out. These are properties
of the accounting rather than of its strength.

\textbf{Completeness of the account.} The guarantee is conditional on
being told about every test that was run, and this is the limit no
monitor can detect: a refund arrives only for a supported claim, so
deleting a claim that failed returns its stake, and for an analyst who
never pays for failures $\mathbb{E}[R]$ grows linearly and
$\mathrm{mFDR}_\eta \to 1$, crossing $\alpha$ at 5 tests under a
uniform null and at 4 to 15 on nulls built from the real corpora
(Appendix~\ref{app:crossing}, Tables~\ref{tab:deletion}
and~\ref{tab:crossing}). A claim that was never run leaves no trace for
a process that only observes claims as they are made, and Hardt and
Ullman show that even with every query observed, no computationally
efficient procedure can certify validity against an adaptively chosen
sequence of that many queries~\cite{hardt-ullman}; this is why we
assume (H1) rather than try to verify it, and why the footprint's job
is to make a deletion visible to an external, timestamped record
(Section~\ref{sec:implementation}'s pre-registration scope).

\textbf{The account boundary.} A \kw{Study} instance is the unit every
guarantee is stated over, and nothing stops an analyst from opening a
fresh one per claim: $\alpha$ resets every time, a cheaper version of
the deletion attack, and the per-\kw{Study} sealing gap
Section~\ref{sec:implementation} states is the same boundary seen from
the pre-registration side. Neither gap has a language-level fix; (H1)
is a claim about every \kw{Study} an analysis opened, and an external,
timestamped record supplies it.

\textbf{Claim family reconstruction.} The claim family is a modeling
choice: both case studies reconstruct what an analysis asserts, a
different reader would enumerate differently, and the guarantee is
relative to that choice. Section~\ref{sec:leaderboard} reports the
leaderboard headline under both the full 8{,}911-claim family and the
133-claim adjacent-rank family it contains; we take the visibility of
the disagreement as an argument for the work, since prose leaves the
family implicit and unarguable.

\textbf{No false-refusal rate, no annotation-burden measurement, and no
non-author analyst.} Section~\ref{sec:gallery}'s five refusals show
that the implementation matches the specification, and nothing more: we
never assemble a corpus of analyses known to be valid and measure how
many the checker wrongly turns away, so nothing here separates the
leaderboard's 0 of 8{,}911 into the experiment's real lack of
resolution versus an overly strict checker. Annotation burden is
unmeasured on the same terms: two declarations is an API-surface count,
with no user study behind it. Finally, no third-party analyst and no
non-author codebase appear anywhere in Section~\ref{sec:evaluation}.
All three gaps run in the same direction, toward an evaluation that
demonstrates the checker on cases selected for it and stops short of
what it does in independent use.

\textbf{Predictive validity.} Nothing in this paper tests whether
support under the budget predicts anything external to the accounting
itself, such as whether a supported comparison is more likely to
replicate than a refused one. The evidence here bears on whether the
accounting is internally sound, and the external question is open.

\textbf{Pre-\kw{observe} filtering.} The footprint begins at
\kw{observe}, so data narrowed before that call (a pandas filter on the
raw table, a row dropped in a preprocessing script) is invisible to
every mechanism in this paper, the same blind spot \sys{Tea} and
\sys{Tisane} share (Section~\ref{sec:related}).

\textbf{One investment rule.} The operational semantics, the proof, and
the mechanization cover \lstinline|fixed_fraction| and no other rule;
the five shipped QUDE rules set their stakes from verdict history, and
stating the condition an investment rule must satisfy for the induction
to go through is open. The e-value foundation above is the candidate
way past a rule-by-rule argument, and we have not attempted it.

\textbf{Unbounded refund.} Nothing constrains $\omega$ beyond
$\omega \ge 0$; Theorem~\ref{thm:mfdr}'s accounting and
\lstinline|Plan.audit|'s identification of $T_{1.0}$ with $T_\bot$
(Appendix~\ref{app:semantics}) both hold for the shipped
$\omega = \alpha/2$ and are unproved for arbitrary $\omega$.

\textbf{Unchecked bounds and post-selection $p$-values.} Declared
bounds and realized $p$-values are both unchecked at run time: a
dishonest bound breaks Theorem~\ref{thm:static}, and \kw{spend} prices
whatever number it receives, so a $p$-value invalid for its design
voids Theorem~\ref{thm:mfdr} the same way. In particular, an
outcome-selected sample is refused the one-sided and confirmatory
prices, and that refusal is the whole of the protection: a two-sided
claim built from such a sample is still priced at its unconditional
$p$-value, without the conditioning post-selection inference shows
validity requires~\cite{berk-posi,taylor-tibshirani-si}.

\textbf{Directional labels.} A claim's label names a direction, but a
two-sided rejection certifies only that the two samples differ, and by
default nothing checks the observed sign against the label: two
BIG-Bench Hard tasks reject at $p < 0.05$ with direct prompting ahead
on the raw counts, the opposite of what
\lstinline[style=tacetcode]|"CoT beats direct"| asserts (neither is
among the supported claims, so the headline counts are unaffected).
\kw{claim} and \kw{test} take an opt-in
\lstinline[style=tacetcode]|expect=| argument naming the asserted side
and raising on a rejected claim whose sign disagrees, after the pool
has been charged; it is opt-in because a label is free text, and
inferring the asserted direction from the string is a separate,
unattempted problem.

\textbf{Data-dependent ordering.} Neither the checker nor the monitor
can tell a predictable claim sequence from one built by sorting on the
$p$-values under test
(Section~\ref{sec:comparing-case-studies}'s oracle orderings are the
case in point), and choosing a claim's own position from the evidence
it is about to spend is exactly the dependence (H3) excludes. The
retry after refusal is the one-claim version of the same channel: a
refusal message prints the realized $p$-value by default, and a refused
comparison can be claimed again under a fresh label once refunds
refill the pool. Two opt-in settings
(\lstinline[style=tacetcode]|disclose_realized_price=False| and
\lstinline[style=tacetcode]|dedupe_comparisons=True|) close the channel
when both are on; leaving either off, the default, leaves it open as
stated.

\textbf{Opaque keys and optional clusters.} The design is read off the
key's fields, so a key exposing none is refused rather than defaulted
to \emph{independent}, which would silently assume away whatever
structure the opacity hides; the surviving threat is the override,
since an analyst who knows the structure may declare it, on trust like
every other declaration. An undeclared nesting is likewise invisible by
default, and Section~\ref{sec:leaderboard}'s factor of 48 is its measured
cost on the leaderboard; \kw{observe}'s off-by-default
\lstinline[style=tacetcode]|warn_undeclared_clusters=True| heuristic
flags a key field with few distinct values and never refuses, because
whether a flagged field is a genuine nesting or a comparison target is
a modeling judgment a program cannot make.

\textbf{Partial pairing.} Partial pairing is treated as full pairing on
the intersection, which is conservative and wasteful.

\textbf{Binary outcomes.} \kw{observe}'s fold averages every outcome
sharing an artifact into a pass/fail bit, and Fisher's, McNemar's, and
the cluster sign-flip all consume counts of such bits. The
implementation also ships
\lstinline[style=tacetcode]|observe_continuous| (folding to a mean) and
a paired sign-flip mechanism, so a paired continuous comparison, the
case both case studies need, prices through \kw{spend}; an unmatched
continuous comparison has no mechanism here and remains outside this
paper's scope.

\textbf{Folding.} Folding flatters concentrated failures
(Section~\ref{sec:folding}): a group whose failures cluster on few
artifacts scores better after folding than one whose failures are
spread, at identical raw counts. The declaration makes this visible
rather than preventing it.

\textbf{Per-footprint budgets.} The natural alternative of giving
disjoint footprints disjoint budgets is unsound at scale, because the
$\mathrm{mFDR}_\eta$ cushion is granted per pool and $k$ pools collect
$k$ cushions: measured $\mathrm{mFDR}_\eta$ crosses $\alpha = 0.05$ by
10 pools under a complete null (0.056 at 10 pools, 0.40 at 100), while
a single pooled budget over the same families holds at every $k$
tried. We record why it fails because the idea is a natural one to reach
for.

\section{Related Work}
\label{sec:related}

\textbf{Statistical procedures.} Sequential control of expected false discoveries is Foster
and Stine's~\cite{foster-stine}, and the online FDR literature that followed
it~\cite{lord,saffron,aharoni-rosset} has produced better investment rules
than the one we use, already packaged for a practitioner in the
\lstinline|onlineFDR| R package~\cite{robertson-onlinefdr}: our answer to
``why not just use this'' is that the package still takes $k$ correctly-typed
$p$-values as input, and the two facts our design check and footprint
recover (how the observations are arranged, and what was read) are exactly
what deciding whether a given $p$-value belongs in that sequence at all
requires. The same literature has begun to reach the dependence (H3)
leaves with the analyst: Zrnic, Ramdas, and Jordan handle locally
dependent $p$-value streams~\cite{zrnic-async}, and Wang and Ramdas's
e-BH procedure controls FDR under arbitrary dependence by working with
e-values~\cite{wang-ramdas-ebh}, so the assumption
Section~\ref{sec:mfdr} does not discharge has candidate replacements on
the shelf; adopting either would change the investment rule while
leaving unchanged the two inputs this paper recovers.
Benjamini and Yekutieli~\cite{benjamini-yekutieli}
extended false discovery control itself to the dependent case our design
check is built to recognize, and Benjamini and
Hochberg~\cite{benjamini-hochberg} is the standard against which any of
this is measured. Within machine learning,
Dietterich~\cite{dietterich1998} and Demšar~\cite{demsar2006} are the
standard references for comparing classifiers across datasets, and
Demšar's recommended Friedman-then-Nemenyi procedure is offline in
exactly Bonferroni's sense: the family of classifiers is fixed before
any test runs. Miller shows how to attach a valid confidence interval to
a single language-model evaluation score~\cite{miller-error-bars}, and
Bouthillier et al.\ measure how much of a reported improvement is
attributable to nuisance variation rather than to the intervention under
test~\cite{bouthillier-variance}; both sharpen the estimate a single
comparison reports, which is complementary to the multiplicity question
a whole leaderboard raises. Koehn's paired bootstrap~\cite{koehn2004} is
machine translation's own answer to the design question
Section~\ref{sec:background} states abstractly, and precisely the kind
of test Marie et al.'s survey found papers running less and less often.
Adaptive data analysis~\cite{adaptive-data-analysis} established that
reusing a dataset costs something quantifiable, the reusable
holdout~\cite{reusable-holdout} and Hardt and Ullman's matching hardness
result~\cite{hardt-ullman} mark what a computationally efficient monitor
can and cannot certify, and the Ladder~\cite{ladder} demonstrated what a
leaderboard protocol looks like when that cost is taken seriously. The
same arithmetic already runs by hand outside machine learning: clinical
trials spend an alpha budget across interim analyses by written
protocol~\cite{pocock1977,obrien-fleming1979,lan-demets1983}, and continuous
experimentation platforms spend one across sequential A/B tests inside a
runtime statistics engine~\cite{johari-abtest}; both enforce the same
composition rule this paper does, and neither enforces it with a type
system. Our contribution is upstream of all of it: these procedures
need inputs, and two of those inputs are properties of a program. In the
other direction, large-scale measurements of benchmark
reuse~\cite{roelofs-meta,recht-imagenet,mania-testoveruse} found
substantially less adaptive overfitting in practice than the theory
predicts as a worst case, so we state the guarantee narrowly, as an
accounting property; Section~\ref{sec:leaderboard}'s leaderboard is a case
where the worst case is realized.

\textbf{Meta-science.} Two independent annotation efforts, by researchers
with no stake in our solution, establish that the problem is real: 180 ACL
and TACL papers~\cite{dror-hitchhiker} and 769 machine translation
papers~\cite{marie-credibility}, different fields and years, each finding
the same pattern of ignored or misapplied significance testing. Both are
human annotation; \lstinline|statcheck|~\cite{nuijten-statcheck} checks
a related fact, a $p$-value inconsistent with its own reported test
statistic, mechanically across a quarter of a million printed values,
and is evidence for the same point from a different direction: a
validity property that requires a human referee to notice by hand mostly
goes unnoticed. Dror et al.'s replicability
analysis~\cite{dror-replicability} poses the closest version of
Section~\ref{sec:intro}'s question already: how many comparisons a
paper's evidence actually supports. Gelman and Loken's garden of forking
paths~\cite{gelman-loken} names the failure mode our
\est{\textsc{T-SelectOutcome}} rule prices, and the preregistration
movement~\cite{nosek-preregistration} and the disclosure-based response
to undisclosed analytic flexibility~\cite{simmons-false-positive} name
exactly the discipline Section~\ref{sec:implementation} encodes as a
scoping rule. Multiverse analysis~\cite{steegen-multiverse} and
specification-curve analysis~\cite{simonsohn-spec-curve} answer the same
forking-paths problem by reporting every reasonable analytic path and
its result, a defensible answer to ``which choices would have changed
the conclusion''; \sys{Tacet} answers a different question, ``what did
this specific, already-chosen path cost'', which a multiverse of
unexecuted paths cannot state.

\textbf{Interactive data exploration.} QUDE~\cite{qude} is the closest prior
work and the fairest point of comparison, because it states in print what it
cannot do. It already draws the descriptive-versus-inferential boundary,
though by UI heuristic rather than by declaration, since it observes
interactions and not programs. It already offers five investment rules
designed for exploration, which we do not attempt to improve on. And it is
already correct about order-gaming under mFDR, the reason this paper frames
order dependence as a statement about power (Section~\ref{sec:leaderboard}).
It assumes independence, and names it as an open limitation~\cite{qude};
its suggested mitigation, estimating correlations from the data, cannot
recover a fact like two outcomes sharing a repository.

\textbf{Design-driven test selection.} \sys{Tea}~\cite{tea} and
\sys{Tisane}~\cite{tisane} arrived at design-declaration-to-mechanism first,
in an HCI frame: \sys{Tea} compiles a declared study design into a
constraint satisfaction problem over which tests are valid, and
\sys{Tisane} reasons from a declared conceptual model to the generalized
linear model that respects it (we compare against \sys{Tisane} rather
than its successor \sys{rTisane}~\cite{rtisane}, since \sys{Tisane}'s
released implementation is the one Section~\ref{sec:metatheory}'s gate
can evaluate). Jun et al.'s study of hypothesis
formalization~\cite{jun-hypothesis} is the closest measurement of the
gap this paper also targets, from the declaration side rather than the
multiplicity side. The premise of Section~\ref{sec:lambdat}, that the
design decides the mechanism, predates this paper, appearing first in
these systems; what neither carries over a sequence of such decisions is
why a user of either can still accumulate exactly the family-wise error
of Section~\ref{sec:leaderboard}'s leaderboard, one right test at a time. That
is QUDE's independence gap approached from the other side: QUDE prices a
session without reading the design, and \sys{Tea} and \sys{Tisane} read
the design without pricing the session. Section~\ref{sec:tea-tisane}
gives the full comparison. Probabilistic programming languages such as
Stan~\cite{stan} and Pyro~\cite{pyro} answer what a model believes given
data, a different problem from how many comparisons an already-fitted
analysis can afford to report; nothing in this paper competes with them.

\textbf{Verification of statistical programs.} Belief Hoare Logic~\cite{bhl}
and its tool StatWhy~\cite{statwhy}, extended to Python by
Why3-py~\cite{why3py}, are programming-languages work squarely in this
space: pre- and postconditions on hypothesis tests, discharged by an SMT
solver, that can already name $p$-hacking as the pattern of testing several
datasets and reporting the smallest $p$-value. That logic checks whether the
stated assumptions of the tests a program runs hold, and it can verify a
classical correction across the tests one specification enumerates; it
cannot hold a family whose size no specification states.
Section~\ref{sec:bhl} develops the comparison, including the weight
difference: a specification per test discharged by a solver against two
declarations per analysis discharged by a fold.

\textbf{Graded and linear type systems.} The machinery we borrow is the
graded-type lineage from \sys{Fuzz}~\cite{fuzz} and \sys{DFuzz}~\cite{dfuzz} through
\sys{Duet}~\cite{duet}, \sys{Solo}~\cite{solo}, and \sys{Jazz}~\cite{jazz}, with
\sys{DDuo}~\cite{dduo} as the precedent for tracking a quantity on values at run
time. \sys{Solo}'s design in particular is why footprint inference costs one
annotation: because the units are declared, propagation is a union over
operands, with no linear-type reconstruction to perform. All of these are
instances of a more general framework this paper has not, until now,
named: grading a type system by an arbitrary semiring or resource monoid,
sensitivity being one choice among many, in the sense of coeffects~\cite{coeffects,brunel-coeffect} and their categorical
account as graded (co)monads combining with
effects~\cite{gaboardi-grading}. $\mathcal{T}$'s footprint is such an instance,
but an unusual one for the lineage: the grade is a set of declared units
under union, a where-provenance label with no sensitivity bound in it,
which is why Section~\ref{sec:grade}'s composition can be non-monotone in a way
none of \sys{Fuzz} through \sys{Jazz}'s numeric grades are.
$\clc{\mathsf{Claim}}$'s own grade departs from the same lineage on the
opposite axis: it is numeric, like every prior system's including
\sys{Solo}'s, but composes under function composition rather than under an
addition or scaling action that always commutes, which is where
Figure~\ref{fig:order}'s order dependence comes from
(Section~\ref{sec:claims}'s ``Grades outside the numeric lineage'').

\textbf{Relational cost analysis.} The design point we are answering is
RelCost~\cite{relcost}, which types the difference in cost between two runs
using relational refinements, and BiRelCost~\cite{birelcost}, which exists
to reduce RelCost's annotation burden and frames the problem as relational
type systems not yet achieving the practical appeal of their unary
counterparts. That relationship is the same one \sys{Duet} had to
HOARe$^2$~\cite{hoare2}: a refinement system got there first and was more
expressive, and the graded system that followed traded expressiveness for
automation and inference. We are making the same trade again, for a
different resource.

Note that differential privacy appears throughout that lineage and nowhere
in our subject. It is a tool other researchers used to obtain generalization
guarantees~\cite{adaptive-data-analysis}; here it is neither the mechanism
nor the goal.

\subsection{Comparison with \sys{Tea} and \sys{Tisane}}
\label{sec:tea-tisane}

\sys{Tea} and \sys{Tisane} are the only prior systems that already hold
Section~\ref{sec:lambdat}'s central premise, that the instrument is a
consequence of the study's declared structure rather than a choice the
analyst makes by hand, so the comparison below states what each system
does, what \sys{Tacet} adds, and where each is ahead. Everything below
is read off the released implementations.

\textbf{Declarations.} The declarations line up almost item for item.
\sys{Tea}'s dataset takes an optional key, which its documentation is
explicit makes a design within-subjects, and a study design names which
variable varies within that subject: together those are $\mathcal{T}$'s
$I$, the same information \clc{\textsc{T-Claim}} reads when it decides a
comparison is paired. \sys{Tisane} declares a \lstinline|ts.Unit|, hangs
measures off it, and relates units with \lstinline|nests_within|, which
is $D$ read in the same parent-to-child direction and decides where its
inferred model places a random effect, which is $\gamma$.
Table~\ref{tab:correspondence} lines the three systems up: the upper
block predates this paper, and the ``distributional'' row belongs to
\sys{Tea} and \sys{Tisane} alone.

\begin{table}[t]
\footnotesize
\caption{What each system asks the analyst to declare, and what it does with
the answer. The upper block is the design; the lower block is the
session-level account. The instrument row reads: we refuse a mechanism that
assumes the design away, \sys{Tea} selects one from its test space, \sys{Tisane} infers
a model.}
\label{tab:correspondence}
\begin{tabular}{@{}LLLL@{}}
\toprule
 & \sys{Tacet} & \sys{Tea} & \sys{Tisane} \\
\midrule
unit & artifact & \texttt{key=} & \texttt{ts.Unit} \\
nesting & $D$ & -- & \texttt{nests\_within} \\
cluster level & $\gamma$ & -- & random effect \\
instrument & refuse & select & infer \\
distributional & -- & checked & family/link \\
\midrule
what was read & $\varphi$ & -- & -- \\
price of a claim & $T_p$ & -- & -- \\
family scope & session & one call & one model \\
static verdict & $\bar p$ audit & -- & -- \\
\bottomrule
\end{tabular}
\end{table}

\textbf{Family scope.} In \sys{Tea}, the family is as wide as one call:
a \lstinline|hypothesize| call naming several comparisons corrects for
them, but no call can see the call before it. Two calls are two families
of one, and the leaderboard is 8{,}911 of them, each individually
corrected against a family of size one, the arithmetic
Section~\ref{sec:background} computes as a false-positive probability
above $0.999999$. This is exactly Section~\ref{sec:background}'s
classical setting rather than an oversight in it: the family is fixed
before the analysis starts, and the analysis is one call long.
\sys{Tisane}'s scope is one model for one design, a different and
defensible boundary we return to below.

\textbf{Provenance.} In both systems the dataframe arrives already
chosen, and whatever produced it is upstream and unrepresented: an
analyst who writes Section~\ref{sec:intro}'s narrowing (keep the
instances one system failed, then compare a second system on those) does
it in pandas and produces a table whose provenance neither language
can ask about. This follows from taking a dataframe as the input,
because a value that arrives as a rectangle carries no record of the
rows that are not in it. \kw{observe} is the boundary that makes
\est{\textsc{T-SelectOutcome}} expressible at all, because every
narrowing after it goes through the language rather than around it; the
same analysis is well-formed in all three systems, and only one of them
charges for having looked. Naming this plainly: a footprint is
where-provenance in the database sense~\cite{buneman-provenance}, and
its algebra (a set of declared units, closed under union, tracked
through arithmetic) is the Boolean specialization of the
provenance-semiring structure identified for exactly this
purpose~\cite{provenance-semirings}; Titian~\cite{titian} and
Smoke~\cite{smoke-provenance} are the same idea at production scale,
built for lineage queries rather than for pricing. \sys{Tacet} did not
need to invent a representation for ``what did this value depend on'';
it needed the typing discipline that makes that representation
load-bearing for a statistical budget.

\textbf{Composition.} Neither system carries state between calls, so
nothing composes: \sys{Tea} re-reads $\alpha$ per hypothesis,
\sys{Tisane} infers a model per design, and with no object that survives
a call neither can express Figure~\ref{fig:order}, in which the same two
claims in the other order buy one finding instead of two. Beyond the
fact that ordering can be gamed (QUDE is already correct about that,
and we frame our order result as power rather than soundness), a
running balance lets an analysis say \emph{when} it ran out:
``exhausted by the sixth of 8{,}911'' is a sentence about a quantity
that persists across claims, and a system without one cannot say it at
any severity.

\textbf{Static checking.} Both systems need the data in hand: \sys{Tea}
loads the dataset and checks the analyst's declared assumptions against
it, and \sys{Tisane} assigns data to the design before inferring the
model. \sys{Tacet}'s plan audit runs before \kw{observe} with no data
present, and \lstinline[style=tacetcode]|tacet.static| runs on
unexecuted source, which Section~\ref{sec:grade}'s antitonicity buys. The same gap shows in miniature on one-sided tests: \sys{Tea}
accepts a one-sided hypothesis at the moment of testing, with nothing
recording whether the direction was fixed before the data was seen,
whereas \sys{Tacet} refuses a one-sided claim that no plan declared.

\textbf{Composing the systems.} \sys{Tea} and \sys{Tisane} compose with
\sys{Tacet}, and the seam already exists: \kw{spend} takes a label, a
$p$-value, and a footprint, indifferent to which instrument produced
the number, so a front end that selects the mechanism from the declared
design supplies the part \sys{Tacet} does worst, and \sys{Tacet} prices
the result and carries the balance. The design premise of
\clc{\textsc{T-Claim}} is discharged by construction in such a pairing,
and the declarations would not be written twice, since \sys{Tisane}'s
\lstinline|ts.Unit| is the artifact and its \lstinline|nests_within| is
$D$. Neither half alone is sufficient: an analysis can run the right
test on every comparison and still accumulate the full family-wise
error of Section~\ref{sec:leaderboard}'s leaderboard.

\subsection{Comparison with Belief Hoare Logic and StatWhy}
\label{sec:bhl}

Where \sys{Tea} and \sys{Tisane} choose the instrument, Belief Hoare
Logic proves its use. BHL is a program logic with an epistemic modality,
its judgments concern the statistical beliefs a program is entitled to
after running a test, and StatWhy discharges those judgments over OCaml
programs through Why3 and an SMT solver. The released tool's
specification for a paired $t$-test requires, among other conditions, that
the two datasets are paired, that both populations are normal in the
current belief state, and that each dataset was sampled from the
population it represents; dropping one requirement, as the tool's own
negative example does, causes verification to fail. This is verification
of exactly the judgment \kw{spend} takes on trust, and its docstring
concedes as much. Everything below is read off the released
implementation and its tool paper.

\textbf{Composition across tests.} BHL does carry a price across tests:
StatWhy's running example executes two one-sided $t$-tests and licenses
belief in the disjunction of the two hypotheses at $p_1 + p_2$, the
union bound, verified, and the library carries verified specifications
for Tukey's, Dunnett's, and Steel's procedures and for Fisher's and
Stouffer's combinations. The gap is that the family is as wide as the
specification: $k$ is fixed the moment the ensures clause is written,
which is Section~\ref{sec:background}'s classical setting with a proof
attached. A fully verified leaderboard analysis is available today, and
it is the leaderboard's Bonferroni row (Section~\ref{sec:leaderboard}): a verified conclusion
that supports nothing, at any effect size, because the experiment lacks
the resolution for the family, the same diagnosis power analysis makes
of underpowered comparisons in NLP~\cite{card-power}. What no
specification in StatWhy's released style (one precondition per test,
fixed before the solver runs) can express is any sequential row of the
same table, because such a specification cannot enumerate a family whose
next member the analyst has not decided on yet. The budget's
contribution over a verified Bonferroni lies elsewhere:
Section~\ref{sec:limitations}'s accounting properties,
bought at the price of the weaker $\mathrm{mFDR}_\eta$ guarantee.

\textbf{The design premise.} The design enters BHL's system as an axiom
and \sys{Tacet}'s as a computation: the paired $t$-test's precondition
literally reads \lstinline|paired y1 y2|, an assertion the analyst
writes and the solver takes as given, whereas
Section~\ref{sec:lambdat} computes \emph{paired} and \emph{clustered}
from the artifact keys, so on structured keys the premise is derived
rather than trusted. The correspondence is sharpest on one-sided tests:
StatWhy's \lstinline|Up| alternative requires the belief state to
already exclude the opposite direction, an epistemic rendering of ``the
direction was fixed in advance'' that the analyst asserts, while
\sys{Tacet}'s \kw{preregister} scope makes the same fact syntactic,
dated by its position before \kw{observe}. Neither system can check the
analyst's mind; one trusts a stated belief, the other checks where a
declaration appears in the program, and the second is checkable by a
machine that has never met the analyst.

\textbf{Weight and audience.} A StatWhy user writes a specification per
test in a program logic and waits on a solver (the CAV tool paper's
contribution is partly constructs and libraries to make that burden
bearable, which concedes where the obstacle is); our analyst writes two
declarations, and the check is a fold. Against
Section~\ref{sec:intro}'s population, where most published comparisons
run no test at all, that cost difference is the difference between a
tool for the analysts who already care and a gate cheap enough to put
in front of the ones who do not, a difference
Appendix~\ref{app:runtime} makes a number rather than an assertion. The
expressiveness runs the other way, as it did for HOARe$^2$ against
\sys{Duet}: nothing prevents a Hoare-style logic from carrying wealth
as ghost state and verifying every budget invariant in this paper, at a
specification per claim. The graded system states less and states it
automatically, and we are making that trade knowingly, a third time.

\textbf{Selection, verification, accounting.} The three systems answer three
different quantifiers about the same analysis: \sys{Tea} and \sys{Tisane}
answer \emph{which} instrument the design admits, BHL and StatWhy answer
whether \emph{this} use of the instrument is sound, and \sys{Tacet} answers
\emph{how many} uses the analysis can afford, in the order it makes them.
The three compose in principle because all three meet at the same interface,
the mechanism: in \clc{\textsc{T-Claim}}, $\mathit{assumes}(m)$ is today a
declaration taken on trust, a \sys{Tea} or \sys{Tisane} front end would
supply $m$, and a StatWhy certificate is precisely a proof of the
requirements that declaration abbreviates. Figure~\ref{fig:three-layers}
(Appendix~\ref{app:threelayers}) lays the three surfaces out on the smallest
analysis all three can express; no software connects the three panes today.
As the family grows, \sys{Tisane} still names the right instrument, StatWhy
still proves each use and its union bound, and only \sys{Tacet} can report
that the analysis went bankrupt at claim six.

\section{Conclusion}
\label{sec:conclusion}

We have presented \sys{Tacet}, a language in which an analysis declares what it
generated, states what it expects to find, and is refused any claim it
cannot afford or cannot properly test. The wealth accounting descends from
Foster and Stine's alpha-investing, by way of Aharoni and Rosset's
generalized rule (Section~\ref{sec:grade}). What we add precedes it: two
of the inputs a multiple-comparison procedure needs are properties of the
program rather than of its results, and a type system is therefore in a
position to supply them. Whether a narrowing predicate consulted an outcome
is syntactic. Whether a comparison is paired or clustered is computed from
declared dependencies between key fields, before the analysis runs, and we
prove that computation sound (never licensing a mechanism whose alignment
can silently misalign observations) given that the declared dependencies
and identifying key are genuine facts about the run (Theorem~\ref{thm:design}), a proviso the type
system cannot check and a runtime alignment check catches
independently of it. Neither is recoverable from a list of $p$-values,
which is all a statistician usually receives.

\bibliographystyle{alphaurl}
\bibliography{refs}

\appendix

\section{Syntax of \texorpdfstring{$\mathcal{T}$}{T}}
\label{app:syntax}

Section~\ref{sec:lambdat} gives this grammar without the metavariables it
ranges over, since Section~\ref{sec:metatheory}'s appendices need them
named. $r$ ranges over runs (sets of observed rows), $a$ over artifact-key
expressions, $k$ over key literals, $q$ over predicates given the outcome, $s$
over free text, $\ell$ over claim labels, $m$ over mechanisms, $\bar p \in
[0,1] \cup \{\bot\}$ over declared bounds ($\bot$ meaning none was
declared), $\pi$ over plans, $\sigma \in \{1,2\}$ over a declared
claim's sidedness ($1$ one-sided, $2$ two-sided), $\kappa$ over constraints
$F \rightharpoonup \mathit{Val}$ (Section~\ref{sec:lambdat}), and $\oplus$
over $\mathit{Ops}$, the fixed signature of arithmetic operators
Section~\ref{sec:estimation} names. $v$ ranges over the sample sort:
Section~\ref{sec:lambdat} restricts selection to it precisely so that
$[k]$ and $\langle q \rangle$ can never be written over an arithmetic
combination, which has no per-observation structure left to select from.

{\setlength{\arraycolsep}{2pt}
\[
\begin{array}{llll}
\est{\text{sample}} & \est{v} & \est{::=} &
  \est{\kw{observe}_\Sigma(r, a) \mid v[k] \mid v\langle q \rangle} \\
\est{\text{estimation}} & \est{e} & \est{::=} &
  \est{v \mid e_1 \oplus e_2} \\
\clc{\text{claim}} & \clc{c} & \clc{::=} &
  \clc{\kw{note}(s) \mid \kw{claim}^m_{\bar p}(\ell, v_1 \mathbin{\kw{vs}} v_2) \mid c_1 \,;\, c_2} \\
\clc{\text{plan}} & \clc{\pi} & \clc{::=} &
  \clc{\cdot \mid \pi, \ell{:}\langle \bar p, \sigma, \kappa_1, \kappa_2 \rangle} \\
\clc{\text{program}} & \clc{P} & \clc{::=} &
  \clc{\kw{plan}\ \pi\ \kw{in}\ c}
\end{array}
\]}

Appendix~\ref{app:semantics} steps through the normal form
$c_1 \,;\, \dots \,;\, c_n$ this grammar associates every program into.
Appendix~\ref{app:plan} gives the typing rules $\pi$ adds.

\section{Soundness of the Footprint and the Purity Bit}
\label{app:estimation}

Section~\ref{sec:lambdat} states that no typing rule consumes the
footprint, and that the purity bit is set by the form of a selection
rather than by an analysis of its predicate. On their own those are facts
about syntax. The two guarantees the rest of the paper reads off the
tracking (that a footprint is a faithful report of what was read, and that
a $\est{\clean}$ comparison cannot have been chosen by looking) are facts
about what estimation terms denote, and this appendix states and proves
both.

An estimation term selects observations from a run and computes over them,
so its denotation is the set of observations the resulting value rests on.
For a run $r$ and an artifact key expression $a$, $U(r,a)$ is the set of
folded observations of Section~\ref{sec:analysis}: one observation $u$ per
artifact key that $a$ assigns to the rows of $r$, carrying that artifact's
key fields and its folded outcome $\mathit{out}(u)$, so that
$\est{\mathit{units}(r,a)}$ (Section~\ref{sec:estimation}) is exactly the
key set of $U(r,a)$. Matching of an observation against a key literal is
Section~\ref{sec:schemas}'s.

\begin{defi}[Sample denotation]
\label{def:denote}
$\den{\est{\kw{observe}_\Sigma(r, a)}} = U(r,a)$;
$\den{\est{v[k]}} = \{u \in \den{\est{v}} \mid u \text{ matches } k\}$;
$\den{\est{v\langle q \rangle}} = \{u \in \den{\est{v}} \mid
q(\mathit{out}(u))\}$; and
$\den{\est{e_1 \oplus e_2}} = \den{\est{e_1}} \cup \den{\est{e_2}}$, the
observations the derived value rests on, with the value itself
$\est{\oplus}$ applied directly to $e_1$'s and $e_2$'s own values, as
Section~\ref{sec:estimation} fixes. The first three clauses define
$\den{\cdot}$ on the sample sort $v$; the last extends it to all of $e$,
agreeing with the first three wherever $e$ is itself a sample. For a set
$X$ of observations, $\mathit{keys}(X)$ is the set of artifact keys they
carry, the same notion $\est{\mathit{units}(r,a)}$ names for $U(r,a)$
(Section~\ref{sec:estimation}); a footprint $\varphi$ is always a set of
keys, so a claim comparing it against $\den{\est{e}}$, a set of
observations, is stated through $\mathit{keys}(\cdot)$.
\end{defi}

\begin{lem}[Footprint soundness]
\label{lem:footprint}
If $\est{\vdash e : \Stat{\Sigma}{\varphi}{\kappa}{t}}$, then:
(a) $\mathit{keys}(\den{\est{e}}) \subseteq \varphi$;
(b) every observation of $\den{\est{e}}$ satisfies every pin $\kappa$
declares; and
(c) whether an observation belongs to $\den{\est{e}}$ is a function of the
artifact keys and that observation's own outcome alone. In particular,
altering the outcomes of observations outside $\varphi$ changes neither
$\den{\est{e}}$ nor any value computed from it.
\end{lem}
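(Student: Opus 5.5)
The plan is structural induction on the derivation of $\est{\vdash e : \Stat{\Sigma}{\varphi}{\kappa}{t}}$. Because the estimation fragment is syntax-directed, this is the same as induction on $e$, with one case per rule from \est{\textsc{T-Observe}} through \est{\textsc{T-Arith}}. I will prove the three clauses simultaneously.

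For (c), I strengthen the induction hypothesis. The claim is that there is a predicate $P_e(\mathit{key}, o)$ such that $u \in \den{\est{e}}$ iff $u \in U(r,a)$ for the underlying run and $P_e(\mathit{key}(u), \mathit{out}(u))$ holds. For sums the run is the union of the operands' runs; in fact \est{\textsc{T-Arith}}'s shared $\Sigma$ does not force the same run, so I state (c) pointwise on each $U(r,a)$ that occurs.

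\textbf{Base case.} For \est{\textsc{T-Observe}}, $\den{\est{e}} = U(r,a)$.
(a) holds because $\mathit{keys}(U(r,a)) = \mathit{units}(r,a) = \varphi$.
(b) is vacuous since $\kappa = \emptyset$.
(c) holds with $P \equiv \top$.

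\textbf{Key selection.} For \est{\textsc{T-SelectKey}}, $\den{\est{v[k]}}$ filters $\den{\est{v}}$ by matching $k$.
(a) follows from the IH: any kept $u$ has its key in $\varphi$ and matches $k$, so its key lies in $\psi$. This needs the observation that matching is a property of the key alone, which holds because $k$ is a pair $(f,x)$ over key fields $F$.
(b) holds because old pins hold by the IH and the new pin $f \mapsto x$ is exactly the matching condition. The side condition on $\uplus$ ensures no conflicting overwrite.
(c) takes $P_{v[k]} = P_v \wedge \text{matches}(k)$.

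\textbf{Outcome selection.} For \est{\textsc{T-SelectOutcome}}, the footprint and $\kappa$ are unchanged while the denotation shrinks. So (a) and (b) follow immediately from the IH by monotonicity of subsets, and (c) takes $P_{v\langle q\rangle} = P_v \wedge q(o)$.

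\textbf{Arithmetic.} For \est{\textsc{T-Arith}}, the denotation is a union.
(a) holds because keys distribute over union and $\varphi_1 \cup \varphi_2$ covers both.
(b) holds because each $u$ lies in some $\den{\est{e_i}}$ and so satisfies every pin of $\kappa_i$, and every pin of $\kappa_1 \sqcap \kappa_2$ is a pin of both.
(c) takes a disjunction of the two membership predicates.

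\textbf{The "in particular".} This is where I expect the main obstacle: it is a two-run substitution statement rather than a single-term invariant. Fix $r, r'$ whose folded observations have the same keys and agree on outcomes at every key in $\varphi$. I will show $\den{\est{e}}_r = \den{\est{e}}_{r'}$ by a second induction.

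The first delicate point is that the footprint of a subterm may be larger or smaller than that of the whole term. Outcome selection keeps the input footprint, so the argument goes through. Key selection shrinks $\varphi$ to $\psi$, however, so observations in $\varphi\setminus\psi$ may differ in outcome. I handle this by proving the stronger statement: for every $u$ whose key is outside $\varphi$, the status of $u$ is determined (it is excluded). Then observations inside $\varphi$ are fixed by agreement, and observations outside $\varphi$ are never members in either run, by (a).

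Concretely, membership of $u$ is $u \in U \wedge P_e(\mathit{key}(u),\mathit{out}(u))$. By (a), membership requires $\mathit{key}(u)\in\varphi$, where the two runs agree on outcomes. So the denotations coincide, and any value computed by $\est{\oplus}$ from them coincides as well, since the values in $\mathit{Ops}$ are functions of the operand values alone.

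I will check carefully that (a) holds in both runs. Together with (c)'s shape, that is what makes the "outside $\varphi$" observations irrelevant.
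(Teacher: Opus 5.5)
Your proposal is correct and follows the paper's own proof: induction on the typing derivation one rule at a time, with the closing ``in particular'' obtained directly from (a) and (c) rather than from a separate induction. Your explicit membership predicate $P_e$, and your remark that \textsc{T-SelectKey}'s shrinking footprint would block a naive two-run induction, spell out what the paper leaves implicit. One small tightening: for the value of $e_1 \oplus e_2$, apply the argument to each operand separately, using $\varphi_i \subseteq \varphi$, because an unchanged union $\den{e_1} \cup \den{e_2}$ does not by itself fix the operands' own values.
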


\begin{proof}
By induction on the typing derivation, one rule at a time.
\est{\textsc{T-Observe}}: $\den{\est{v}} = U(r,a)$, whose key set is
$\varphi$ by definition; the constraint is empty; and membership consults
no outcome at all. \est{\textsc{T-SelectKey}}: the selection keeps exactly
the members of $\den{\est{v}}$ matching $k$, so (a) follows from the
induction hypothesis and $\psi$'s definition as the matching subset of
$\varphi$; the survivors match $k$ and satisfied $\kappa$ already, giving
(b) for $\kappa \uplus k$; and the filter consults the key alone,
preserving (c). \est{\textsc{T-SelectOutcome}}: the sample shrinks while
$\varphi$ and $\kappa$ are unchanged, so (a) and (b) are inherited, and the
filter consults exactly $u$'s own outcome, $q(\mathit{out}(u))$,
preserving (c). \est{\textsc{T-Arith}}: for (a),
$\mathit{keys}(\den{\est{e_1}} \cup \den{\est{e_2}}) =
\mathit{keys}(\den{\est{e_1}}) \cup \mathit{keys}(\den{\est{e_2}}) \subseteq
\varphi_1 \cup \varphi_2$ by the induction hypotheses; for (b), a member of
$\den{\est{e_1}}$ satisfies $\kappa_1$, which agrees with $\kappa_1 \sqcap
\kappa_2$ everywhere the latter is defined, and symmetrically for
$\den{\est{e_2}}$; for (c), membership in a union of sets with the
property has the property. For the final sentence: by (c), changing
$\mathit{out}(u)$ for $u \notin \varphi$ (reading $u \notin \varphi$ as
$\mathit{key}(u) \notin \varphi$) changes no observation's membership,
since membership depends only on that observation's own outcome and, by
(a), every member's key lies in $\varphi$; the value is then computed
from the keys and the outcomes of $\den{\est{e}}$'s members alone.
\end{proof}

\begin{lem}[Purity soundness]
\label{lem:clean}
If $\est{\vdash e : \Stat{\Sigma}{\varphi}{\kappa}{\clean}}$, then
membership in $\den{\est{e}}$ is a function of the artifact keys alone:
replacing the run at each \kw{observe} leaf by one folding to the same
artifact keys, with any outcomes whatever, leaves the selection unchanged.
\end{lem}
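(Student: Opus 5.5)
We prove Lemma~\ref{lem:clean} by structural induction on the typing derivation of $\est{e}$, in the same rule-by-rule style as Lemma~\ref{lem:footprint}. The invariant we carry is slightly stronger than the statement. For runs $r, r'$ assigned to each \kw{observe} leaf, call them \emph{key-equivalent} when they fold to the same artifact keys, leaf by leaf. We show that for key-equivalent runs, $\mathit{keys}(\den{\est{e}}_r) = \mathit{keys}(\den{\est{e}}_{r'})$, and that the observations so selected coincide on their key fields. We phrase the invariant through keys rather than observations because the two runs' folded observations differ in $\mathit{out}(u)$, and so are not literally equal.

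The base case, \est{\textsc{T-Observe}}, is immediate: $\den{\est{\kw{observe}_\Sigma(r,a)}} = U(r,a)$, and key-equivalence says exactly that $U(r,a)$ and $U(r',a)$ carry the same key set. For \est{\textsc{T-SelectKey}}, the premise has the same bit $t$ as the conclusion, so a $\est{\clean}$ conclusion gives a $\est{\clean}$ premise. The induction hypothesis then applies, and the filter ``$u$ matches $k$'' reads only the key field named by $k$, so it keeps the same keys in both runs. For \est{\textsc{T-Arith}}, a $\est{\clean}$ conclusion forces $t_1 \sqcup t_2 = \clean$, hence $t_1 = t_2 = \clean$, because $\est{\rd}$ is absorbing. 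Both induction hypotheses then apply, and the union of two key-invariant sets is key-invariant.

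The case that needs care is \est{\textsc{T-SelectOutcome}}, which is the only place an outcome is consulted, and it is excluded by the typing itself. Its conclusion always carries $\est{\rd}$, so no derivation concluding $\est{\clean}$ can end in it. The same holds for any subderivation of a $\est{\clean}$ derivation: every rule propagates the bit monotonically along $\est{\clean \sqsubset \rd}$, so a $\est{\rd}$ anywhere below would reach the root. This is the vacuous case, and it carries the whole content of the lemma. The proof should state it explicitly as an inversion on the last rule, not leave it implicit.

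We expect the main obstacle to be bookkeeping rather than mathematics. The statement quantifies over replacing the run at \emph{each} leaf, and a term may contain several \kw{observe} leaves. The induction must therefore be over a vector of runs, one per leaf, and the \est{\textsc{T-Arith}} case must split that vector between the two operands. We would first try to sidestep this by making the denotation parametric in an environment mapping leaves to runs, which makes the split trivial. This also matches the mechanized proof described in Section~\ref{sec:mechanization}, where the denotation's recursion simply never inspects an outcome in a $\est{\clean}$ term.
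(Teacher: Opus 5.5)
Your proposal is correct and follows the paper's proof: induction on the derivation, with \textsc{T-Observe} and \textsc{T-SelectKey} preserving key-determination, \textsc{T-SelectOutcome} vacuous because its conclusion is always $\rd$, and \textsc{T-Arith} using that $t_1 \sqcup t_2 = \clean$ forces both operands to be $\clean$. Phrasing the invariant over keys and key fields, and threading a per-leaf run environment through \textsc{T-Arith}, is bookkeeping the paper leaves implicit; the mechanization handles the same point by parametrizing the denotation over an outcome assignment, so this is not a different route.
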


\begin{proof}
By induction on the derivation. \est{\textsc{T-Observe}} selects every
folded observation and consults no outcome. \est{\textsc{T-SelectKey}}
filters by a key literal, preserving the property.
\est{\textsc{T-SelectOutcome}} concludes $\est{\rd}$, so the case is
vacuous. \est{\textsc{T-Arith}} concludes $\est{\clean}$ only when both
operands do, since $\est{t_1 \sqcup t_2} = \est{\clean}$ forces
$\est{t_1} = \est{t_2} = \est{\clean}$, and a union of key-determined sets
is key-determined.
\end{proof}

\noindent Lemma~\ref{lem:clean} is the noninterference statement behind
the purity bit, in the sense the Dependency Core Calculus citation of
Section~\ref{sec:estimation} promises: $\est{\clean}$ in a type means the
selection is fixed by the keys, before any outcome exists, which is the
property pre-registration needs, and it entitles
\clc{\textsc{T-OneSided}} and \clc{\textsc{T-Registered}}
(Appendix~\ref{app:plan}) to read their $\est{\clean}$ premises as ``this
comparison could not have been chosen by looking''. The bit is set
syntactically and conservatively, since a predicate that ignores the
outcome argument still types $\est{\rd}$
(Section~\ref{sec:estimation}); the lemma is the sound direction of that
discipline, and no derivation can launder an outcome-dependent selection
into a $\est{\clean}$ type, because \est{\textsc{T-SelectOutcome}} is the
only rule whose selection consults an outcome and its conclusion is always
$\est{\rd}$.

The value of a $\est{\clean}$ term is a different matter, deliberately: a
board's pass rate reads every outcome it averages, and must. The bit
governs which observations a value rests on, and a registered comparison's
$p$-value is then computed from outcomes through its mechanism, over a
selection fixed in advance, which is the shape blind analysis prescribes
(Section~\ref{sec:implementation}). Lemma~\ref{lem:footprint}(b) is the
$\kappa_i$-consistency Theorem~\ref{thm:design} assumes of its two
samples, discharged for every sample the language itself produced
(Section~\ref{sec:metatheory}); clause (c) makes the footprint a report
worth printing, since a report that could omit an outcome the value
depends on would be decoration. Clauses (a) and (b) of Lemma~\ref{lem:footprint}
and Lemma~\ref{lem:clean} in full are mechanized
(Section~\ref{sec:mechanization}); clause (c)
alone remains on paper, since it is a two-run substitution property rather
than a fact a single term's structural recursion states.

\section{Design-Premise Details}
\label{app:design}

Section~\ref{sec:design-premise} defines the design computation; this
appendix records three details of it: the closure's direction, the
single-field restriction on $D$, and the measured cost of pricing the
fourth lattice point.

\textbf{Direction of the closure.} $D$'s edges are written child $\mapsto$
parent, and the closure runs \emph{parent to child}: it adds $f$ because its
parent $g$ is already in, never $g$ because its child $f$ is. This is why
$\mathit{distinguishing}$ is stated as a least fixed point rather than
as ``the closure of $\mathit{differing}$ under $D$'': read literally, the
latter closes forward along $D$'s edges, which is child to parent, the
opposite of what the rule is. A task determines its mode, so two selections
pinning different modes necessarily draw on disjoint tasks, and the task
field is distinguishing too; the converse fails, since two different tasks
may share a mode. Running the closure in the wrong direction makes every
cross-mode comparison look paired, and comparing the static answer against
the value-level one (Section~\ref{sec:metatheory}) catches that mistake
immediately.

\textbf{Single-field dependencies.} $D \subseteq F \times F$ admits only
dependencies with a single field on the left. Functional dependencies in
general are relations on $2^F \times F$: a pair of fields may jointly
determine a third that neither determines alone. The restriction is a
simplification, not something the rest of the development needs. Reading
$D \subseteq 2^F \times F$ and taking the least $X \supseteq
\mathit{differing}$ closed under ``$(G \mapsto f) \in D$ and $G \subseteq X$
imply $f \in X$'' is the same fixed point over a larger rule set, and
everything downstream consumes only the resulting set
$\mathit{distinguishing}$. We keep the single-field form because every
dependency the case studies declare has that shape (an instance determines
its repository, a task determines its mode) and because it keeps the
closure a graph reachability rather than a datalog evaluation.

\textbf{The cost of pricing the fourth lattice point.} Serving the cluster
sign-flip on the refused point (clustered without pairing) has a measured
cost: the alignment key degenerates to
the cluster alone, all but one observation per cluster per side silently
vanishes, and which one survives is dictated by dictionary insertion order.
On a comparison carrying no signal at all, with each side passing 8 of 16
so that Fisher's exact test returns $p = 1.0$, the served test returned
$p = 0.0078$ and the claim was supported; the identical dataset returns
$p = 1.0$ or $p = 0.0078$ depending only on the order its rows arrived in.

\section{Typing Rules for Pre-registration}
\label{app:plan}

Section~\ref{sec:metatheory} states Theorem~\ref{thm:mfdr} (mFDR control)
against the account this appendix formalizes;
Section~\ref{sec:implementation} describes the same feature in prose and
gives one judgment, \ty{\textsc{Ok}}, for the audit it makes possible. The
typing rules that prose describes follow, extending
Section~\ref{sec:lambdat}'s judgment to $\pi \vdash c : \Claim{w}$, a claim
sequence checked against a declared plan. \clc{\textsc{T-Note}},
\clc{\textsc{T-Claim}}, and \clc{\textsc{T-Seq}} carry $\pi$ unused; only
the two rules below consult it.

\begin{mathpar}
\inferrule[T-Plan]{\clc{\pi \vdash c : \Claim{w}}}
  {\clc{\vdash \kw{plan}\ \pi\ \kw{in}\ c : \Claim{w}}}
\end{mathpar}

\noindent There is no premise relating $\pi$ to any estimation term: a plan
entry is a tuple of literals (a label, a bound, a sidedness flag, and two
constraints) and the grammar (Appendix~\ref{app:syntax}) gives it no way to
mention an estimation term. \kw{observe} introduces the only values an
estimate can come from (\est{\textsc{T-Observe}}), and $\pi$ is written
before it does, so ``$\pi$ was fixed without having looked'' is not a fact
this rule checks; the grammar makes it true.

\begin{mathpar}
\inferrule[T-OneSided]{\est{\vdash v_i : \Stat{\Sigma}{\varphi_i}{\kappa_i}{\clean}} \\
  \ty{\mathit{design}(\Sigma, \kappa_1, \kappa_2) = \langle \mathit{pr},
   \mathit{cl} \rangle} \\
  \ty{\mathit{assumes}(m) = \langle \mathit{pr}, \mathit{cl}' \rangle} \\
  \ty{\mathit{cl} \sqsubseteq \mathit{cl}'} \\
  \ty{\mathit{sided}(m) = 1} \\
  \clc{\pi(\ell) = \langle \bar p, 1, \kappa_1, \kappa_2 \rangle}}
  {\clc{\pi \vdash \kw{claim}^m_{\bar p}(\ell, v_1 \mathbin{\kw{vs}} v_2) :
   \Claim{T_{\bar p}}}}

\inferrule[T-Registered]{\est{\vdash v_i : \Stat{\Sigma}{\varphi_i}{\kappa_i}{\clean}} \\
  \ty{\mathit{design}(\Sigma, \kappa_1, \kappa_2) = \langle \mathit{pr},
   \mathit{cl} \rangle} \\
  \ty{\mathit{assumes}(m) = \langle \mathit{pr}, \mathit{cl}' \rangle} \\
  \ty{\mathit{cl} \sqsubseteq \mathit{cl}'} \\
  \ty{\mathit{sided}(m) = \sigma} \\
  \clc{\pi(\ell) = \langle \bar p, \sigma, \kappa_1, \kappa_2 \rangle}}
  {\clc{\pi \vdash \kw{claim}^m_{\bar p}(\ell, v_1 \mathbin{\kw{vs}} v_2) \;
   \mathsf{confirmatory}}}
\end{mathpar}

\noindent \clc{\textsc{T-OneSided}} is the only rule in $\mathcal{T}$ that can
give a one-sided mechanism a type, and $\mathit{sided}(m) = 1$ is the
premise that makes that true rather than a convention: the matching premise
$\mathit{sided}(m) = 2$ on \clc{\textsc{T-Claim}}
(Section~\ref{sec:lambdat}) stops that rule from typing a one-sided
mechanism against no registration at all; stating the restriction in prose
instead would leave the refusal Section~\ref{sec:gallery} advertises (the
unplanned one-sided test) outside the rules. Soundness is the reason the
premise exists: a one-sided $p$-value is
valid only if the direction of the comparison was fixed before the data that
decides it was read, and \clc{\textsc{T-Plan}} is the only rule that can
witness that a choice predates \kw{observe}.

\clc{\textsc{T-Registered}} is a second judgment layered onto the same
leaf's existing typing: a claim already typed by \clc{\textsc{T-Claim}} or
\clc{\textsc{T-OneSided}} is additionally \emph{confirmatory} exactly when
$\pi$ holds a matching label at the same constraints. Sharing the
metavariables $\kappa_1, \kappa_2$ between the estimation and plan premises
forces the match: the registered constraints must be the literal
ones the claim used, or the label alone would be fixed in advance.

It repeats the design premise rather than assuming it. Being a second
judgment on the same leaf is not the same as being derivable only for a leaf
that already has a type: without the repetition, nothing above stops
\clc{\textsc{T-Registered}} from certifying \textsf{confirmatory} for a
claim whose design premise fails, because no premise left standing would
constrain what $m$ \emph{assumes}: $\mathit{sided}(m) = \sigma$ still
constrains $m$'s sidedness, but says nothing about $\mathit{assumes}(m)$,
the design-lattice fact the repeated premise pins down. The
$\mathit{sided}(m) = \sigma$ premise ties the same $m$ to the sidedness
$\pi$ declared, which is the other half of what a registration is supposed
to fix. It does not fix which mechanism among those with the same sidedness
is used: $\pi(\ell)$ constrains $\bar p, \sigma, \kappa_1, \kappa_2$ but not
$m$ itself, so a registration is silent on whether the confirmed comparison
is priced by Fisher's exact test or by any other two-sided instrument
satisfying the design premise. Naming $m$ in the plan entry would close
this, at the cost of a fifth field every registration would have to carry;
we have not made that change.

A claim built from \est{$v\langle q \rangle$} can never be confirmatory, and
the purity bit makes that true. Sharing $\kappa_1, \kappa_2$ does not
suffice on its own: \est{\textsc{T-SelectOutcome}}
(Section~\ref{sec:lambdat}) propagates its input's constraint unchanged, so
$\est{\kw{observe}_\Sigma(r,a)[\mathit{repo}{=}A]}\allowbreak\est{\langle q
\rangle}$ carries exactly the constraint $\{\mathit{repo} \mapsto A\}$ that
$\est{\kw{observe}_\Sigma(r,a)}\allowbreak\est{[\mathit{repo}{=}A]}$ does,
and a plan naming that constraint would match it.
\est{\textsc{T-SelectOutcome}} separates the two by setting the purity bit
to $\est{\rd}$, and both rules above demand $\est{\clean}$ on both
estimation premises. This is a premise, not a consequence of the constraint
machinery, which is why it is written down: the constraint records which key
literals were pinned, and an outcome-dependent narrowing pins none of them
while changing which observations survive. Setting $\kappa := \emptyset$ in
\est{\textsc{T-SelectOutcome}} instead would close this hole and break the
design premise, since the design is computed from exactly those constraints
and would then read every outcome-filtered comparison as degenerate. The
implementation can and does drop the pin at \kw{select} instead, because it
reads the design off observed key values rather than off constraints
(Section~\ref{sec:schemas}), and Section~\ref{sec:gallery} shows the
resulting refusal.

\textbf{Why dropping the pin cannot mismatch.} Emptying the pin, rather
than leaving it at whatever \kw{select} was called on, only diverges from
$\mathcal{T}$ when some registration's declared side could itself be
$\emptyset$: a registration pinning nothing is exactly what an emptied pin
would then satisfy, which is the scenario worth ruling out rather than
assuming away. It cannot arise. A plan entry $\pi(\ell) = \langle \bar p,
\sigma, \kappa_1, \kappa_2 \rangle$ can only ever be matched by
\clc{\textsc{T-Registered}} against a leaf already typed by
\clc{\textsc{T-Claim}} or \clc{\textsc{T-OneSided}}, both of which share the
same design premise, and Section~\ref{sec:design-premise}'s non-degeneracy
conditions make that premise \emph{undefined} whenever
$\mathrm{dom}(\kappa_1) = \emptyset$: $\mathrm{dom}(\kappa_1) =
\mathrm{dom}(\kappa_2)$ would then force $\mathrm{dom}(\kappa_2) = \emptyset$
too, and both empty forces $\mathit{differing} = \emptyset$, which the same
paragraph refuses outright. So no plan entry declaring an empty pin on
either side can ever be confirmed by \emph{any} claim, honest or
outcome-selected: such an entry is not a laxer registration the library's
simplification might wrongly satisfy, it is one the design premise already
makes permanently unmatchable, for the same reason a comparison of a set
against its own subset is. Every registration the simplification could
possibly affect therefore pins something nonempty on both sides, and an
emptied pin can never equal a nonempty declared one; the two mechanisms
enforce the same judgment at different doors because the design premise's
own non-degeneracy requirement makes the library's simplification sound,
beyond the fact that both were built to agree. A
leaf typed by \clc{\textsc{T-Claim}} or \clc{\textsc{T-OneSided}} whose
label is absent from $\pi$, or whose constraints do not match, is
\emph{exploratory} instead, priced identically and not otherwise
distinguished by the type system.

\textbf{Order, omission, and repetition.} Order and omission are not typing
obligations. Nothing above stops a derivation from placing a plan's later-declared label
before its earlier-declared one in $c_1 ; \dots ; c_n$, or from omitting a
declared label entirely: $\clc{\mathsf{Claim}}$'s grammar has no production
that could express either constraint, because both are properties of a
whole program relative to $\pi$, beyond the reach of any single leaf's
derivation. Repetition is the same story once more: nothing above stops two
distinct leaves from both matching label $\ell$ and both being certified
confirmatory, each independently, while $\mathit{affordable}$'s fold charges
$\ell$'s stake once regardless, so a derivation with a repeated label
describes a run the audit never priced, the reordering mismatch reached by
repetition instead. The implementation therefore refuses both: a second
plan \emph{entry} under an already-declared label, and a second
\emph{claim} against an already-claimed one (\lstinline|OutOfScope|), so a
plan's labels are claimed at most once and the fold's once-per-label
charge matches the claims the monitor can see. At the audit, all three of
order, omission, and repetition are
discharged the way Section~\ref{sec:implementation} discharges affordability:
$\mathit{affordable}$ folds $\pi$'s declared bounds in declared order, so
evaluating it on a differently-ordered $c$ answers a question about a
program that was not checked, and it charges every declared label's stake
whether or not a matching leaf appears in $c$, so an omitted claim recovers
no wealth. A plan's declared order is load-bearing without being a premise
of any rule above, in the same way that a key literal is admissible in $\pi$
without being reachable by any of Section~\ref{sec:lambdat}'s estimation
rules: neither is data. Insertion is the residual case: a \kw{claim} leaf
whose label appears nowhere in $\pi$ (an exploratory claim, typed by
\clc{\textsc{T-Claim}} like any other) can stand between two declared
ones, and $\mathit{affordable}$'s fold never charges it, where the checker
of Appendix~\ref{app:semantics}, walking the claim sequence itself,
charges it $T_\bot$. Section~\ref{sec:implementation}'s fourth caveat
states what that difference costs the shipped audit's certificate.

\begin{defi}[Complete account]
\label{def:complete}
$\pi, c$ is a \emph{complete account} if, for every $\ell \in
\mathrm{dom}(\pi)$, $\ell$ is absent from $c$ only for a reason independent
of the price any mechanism consistent with the registration would have
returned on the registered comparison, in
particular, not because an earlier attempt at $\ell$ was made, produced an
unfavorable price, and was deleted before being logged.
\end{defi}

\noindent Nothing in $\mathcal{T}$'s grammar or rules can check
Definition~\ref{def:complete}: it is a claim about the process that produced
$c$, external to any property of $c$ itself, in exactly the sense of
Appendix~\ref{app:semantics}'s honest bounds: neither is decidable from the
term, and an external, timestamped record has to supply both
(Section~\ref{sec:implementation}). \clc{\textsc{T-Plan}} contributes legibility instead: a label $\pi$ declares with no matching leaf in $c$ is a hole
$\mathit{affordable}$ prices anyway and Section~\ref{sec:implementation}'s audit names in its
report, so an incomplete account leaves a trace where an unplanned analysis
leaves none. Section~\ref{sec:limitations}'s deletion experiment measures what an incomplete
account costs in the one case that has no $\pi$ at all to leave a trace in.

\section{Operational Semantics of the Runtime Monitor and the Checker}
\label{app:semantics}

The runtime monitor and the static checker are two labelled transition
systems over the same claim sequence, differing only in where the $p$-value
that drives $T$ comes from. Both are stated here for a single wealth pool,
which is the design throughout (Section~\ref{sec:threats} reports why the
per-footprint alternative is unsound).

A \emph{runtime configuration} is a pair $\langle W, c \rangle$ of the current
wealth and the remaining program. Let $\mathit{price}(m, v_1, v_2)$ be the
actual $p$-value mechanism $m$ returns when run on the concrete data $v_1$ and
$v_2$ denote. It is an external oracle this appendix does not model further,
exactly as Section~\ref{sec:related} treats the arithmetic itself as Foster and Stine's.

\begin{mathpar}
\inferrule[E-Note]
  { }
  {\clc{\langle W,\; \kw{note}(s) \,;\, c \rangle \longrightarrow \langle W,\; c \rangle}}

\inferrule[E-Claim]
  {\est{p = \mathit{price}(m, v_1, v_2)} \\ \ty{\alpha_j = i \cdot W} \\
   \ty{b = [\, p \le \alpha_j \,]}}
  {\clc{\langle W,\; \kw{claim}^m_{\bar p}(\ell, v_1 \mathbin{\kw{vs}} v_2) \,;\, c \rangle
   \overset{b}{\longrightarrow} \langle T_p(W),\; c \rangle}}
\end{mathpar}

\noindent $T_p$ is Equation~\ref{eq:transformer}: the stake $\alpha_j = i
\cdot W$ at this step $j$ (distinct from $\alpha$, the overall mFDR target
level Theorem~\ref{thm:mfdr} states the guarantee at) is deducted from the
pool directly, whether or not the claim rejects, with $\omega$ refunded on
rejection. This is \lstinline|tacet.fixed_fraction|, the study default and
the only rule this section, Appendix~\ref{app:proof}, and the mechanization
cover (Section~\ref{sec:grade}). Because $\alpha_j = iW$ and $0 < i \le 1$,
$T_p(W) = (1-i)W + \omega \cdot [\,p \le \alpha_j\,] \ge 0$ for every $W \ge 0$
by arithmetic alone: $(1-i)W \ge 0$ since $i \le 1$, and the refund term is
non-negative. \clc{\textsc{E-Claim}} therefore carries no solvency premise,
and no non-empty runtime configuration is stuck under it.

At $W = 0$ the stake $\alpha_j$ is itself $0$, and $b = [\, p \le 0 \,]$ is
true only if $\mathit{price}$ can return exactly $0$. It cannot for any of
the three served instruments: Fisher's, McNemar's, and the cluster sign-flip
are exact tests over a finite outcome space, so every achievable $p$-value
is a sum of strictly positive probabilities of discrete events and is
itself strictly positive. A mechanism that could return $p = 0$ would be
accepted and refunded at zero cost, repeatably; none of the three this
paper serves can.

A different rule can need one. The five QUDE investment rules the
implementation also ships (\lstinline|tacet.qude|) price a claim at
$\alpha_j$ but deduct
\[
  \ty{\mathit{cost}(\alpha_j) \;=\; \frac{\alpha_j}{1 - \alpha_j}}
\]
on a loss (Foster and Stine's original deduction, strictly more than
$\alpha_j$ itself, \lstinline|price_of| in \lstinline|tacet.qude|) and
refund a flat $\omega$ on a win. Because $\mathit{cost}$ is superlinear, a
stake the pool nominally covers can still price the pool negative, so those
rules carry their own solvency premise, $\mathit{cost}(\alpha_j) \le W$, and
each keeps it true by capping $\alpha_j$ at $W/(1+W)$
(\lstinline|_cap_nonneg|). \lstinline|fixed_fraction| and the five QUDE
rules are two different, independently sound choices of $(\alpha_j,
\text{deduction})$, not the same rule under two names; Theorems~\ref{thm:static}
and~\ref{thm:mfdr}, and everything mechanized in Lean, are stated for
\lstinline|fixed_fraction| alone, exactly as Section~\ref{sec:grade}'s scope
paragraph says.

\clc{\textsc{E-Claim}} labels the step with $b$, true exactly when the
runtime monitor \emph{accepts} that claim; this is \kw{spend} (Section~\ref{sec:implementation})
taken one call at a time. Iterating from
$W_0 = w_0$, the initial budget, over $c_1 \,;\, \dots \,;\, c_n$ produces a
wealth trace $W_0, \dots, W_n$ and verdicts $b_1, \dots, b_n$; claim $j$ is
accepted by the runtime monitor iff $b_j$.

\textbf{Solvency, support, affordability.} These are the three facts
Section~\ref{sec:implementation} names, one word apiece, and this is the
machinery that states them precisely.
\emph{Solvency} is $\mathit{cost}(\alpha_j) \le W$: the pool can pay for this
claim's stake if the claim loses, in a currency (the superlinear deduction
$\mathit{cost}$) that \lstinline|fixed_fraction| does not use. Under $T_p$,
a claim deducts $\alpha_j = iW \le W$ directly, so solvency holds at every
reachable wealth by arithmetic alone and is not a premise of
\clc{\textsc{E-Claim}}; it becomes a real condition, one the five QUDE rules
must enforce with their own cap, only under a rule that deducts
$\mathit{cost}(\alpha_j)$ instead. \emph{Support} is $b$, the verdict
$[\,p \le \alpha_j\,]$: the null was rejected at the level the stake bought.
\emph{Affordability} is the whole-program static verdict \ty{\textsc{Ok}}
(Section~\ref{sec:implementation}): every declared bound clears the stake
available when its claim is reached, computed by $\mathit{affordable}$
folding $\pi$ from $w_0$ with no data in reach. Support is a fact about the
data and cannot be known statically; affordability is a fact about the
budget and the declared bounds alone.

Two consequences follow. First, $b$ is a rejection
verdict, distinct from any affordability condition; the vocabulary above
keeps the two apart. Second, the counts
reported in Section~\ref{sec:evaluation} (``6{,}545 of 8{,}911 claims are supported'' under the strongest-first oracle bound, ``0 of 8{,}911'' in reading order, ``0 of 8{,}911'' under a predictable one) are measured by running
\lstinline|Study.spend| and counting the calls that return, and
\lstinline|spend| raises \lstinline|Unaffordable| exactly when
$p > \alpha_j$, never on solvency, which \lstinline|fixed_fraction| never
fails. Those counts are therefore \emph{support at the level the
budget afforded}: the conjunction of a stake the pool could still assign and
a $p$-value that cleared it. They are not the static judgment
\ty{\textsc{Ok}}, which Section~\ref{sec:metatheory}'s ``static ok''
column reports, and the two answer different questions on the same corpus.
The exception's name is a misnomer we have kept only because it is a shipped
API name.

A \emph{static configuration} $\langle \hat W, c \rangle$ replaces the real
price with the declared bound. Where a bound is declared it steps by the same
transformer \clc{\textsc{T-Claim}} assigns, $T_{\bar p}$, evaluated at that
bound; where none is, it falls back on the no-refund one:

\begin{mathpar}
\inferrule[S-Note]
  { }
  {\clc{\langle \hat W,\; \kw{note}(s) \,;\, c \rangle \Longrightarrow \langle \hat W,\; c \rangle}}

\inferrule[S-Claim]
  {\ty{\bar p \ne \bot} \\ \ty{\hat b = [\, \bar p \le i \cdot \hat W \,]}}
  {\clc{\langle \hat W,\; \kw{claim}^m_{\bar p}(\ell, v_1 \mathbin{\kw{vs}} v_2) \,;\, c \rangle
   \overset{\hat b}{\Longrightarrow} \langle T_{\bar p}(\hat W),\; c \rangle}}

\inferrule[S-Unbound]
  {\ty{\bar p = \bot}}
  {\clc{\langle \hat W,\; \kw{claim}^m_{\bar p}(\ell, v_1 \mathbin{\kw{vs}} v_2) \,;\, c \rangle
   \Longrightarrow \langle T_\bot(\hat W),\; c \rangle}}
\end{mathpar}

\noindent $T_{\bar p}$ is Equation~\ref{eq:transformer} and $T_\bot$ is
Equation~\ref{eq:static-transformer}. \clc{\textsc{S-Claim}}
fires only when a bound is declared, and labels the step with $\hat b$, true
exactly when the checker certifies that claim statically. It steps by
$T_{\bar p}$ rather than $T_\bot$: the declared bound is the only $p$-value
the checker has, and pricing the step at it makes the shadow trace
mechanize the exact grade \clc{\textsc{T-Claim}} assigns and the wealth
\kw{Plan.audit} (Section~\ref{sec:implementation}) threads, with no extra
pessimism. This stays sound only because $T$ is antitone in $p$ (Lemma~\ref{lem:anti}): an honest run's real price is at most
$\bar p$, so the real transformer refunds at least as often as the shadow one
credits. \clc{\textsc{S-Unbound}}
still advances the shadow wealth ($T_\bot$ needs no $p$-value to evaluate)
but produces no verdict, which is the formal reading of Section~\ref{sec:declarations}'s
statement that an analysis omitting a bound still runs under a monitor: the
checker tracks wealth through an undeclared claim without certifying it. The
checker \emph{certifies} claim $j$ iff $\hat b_j$ holds at its step, and
\emph{accepts the program} iff it certifies every claim
\clc{\textsc{S-Claim}} fired on; Theorem~\ref{thm:static} is stated per
certified claim, so whole-program acceptance appears in it only as the
special case in which no step is refused.

\textbf{The identification of $T_{1.0}$ with $T_\bot$.}
\clc{\textsc{S-Unbound}} steps by $T_\bot$. The implementation
(\lstinline|Plan.audit|) instead steps an unbounded claim by $T_{1.0}$: it
declares the vacuous bound $\bar p = 1$ and runs the same code path as a
declared one, on the grounds that no stake ever reaches $1$, so
$[\,1 \le i \hat W\,]$ is false and $T_{1.0} = T_\bot$ pointwise. This is an
identification of the two transformers' \emph{arithmetic} alone, not of
their bookkeeping: a truly undeclared claim and a claim the analyst
declares at the literal bound $\bar p = 1$ step wealth identically, but
only the latter is \emph{decided} in the audit's sense and counted in its
affordability tally, correctly, as \textsc{REFUSED}, since no positive
stake can ever clear $1$. Section~\ref{sec:implementation}'s leaderboard
transcript exercises exactly this case: its six bounds are declared,
literal cluster-level $p$-values, one of which happens to be $1.0000$, and
that row is reported \textsc{REFUSED} and counted among the six decided
claims, unlike the genuinely unbound
\lstinline[style=tacetcode]|"A beats B"| claim of
Section~\ref{sec:certified}, reported \emph{unknown} and
excluded from its plan's tally. The two
agree exactly when $i \hat W < 1$ at every step, i.e.\ under the invariant
\[
  \ty{\hat W_j \;<\; 1/i \qquad \text{for every } j.}
\]
We state this as an assumption rather than a fact, because nothing in the
rules enforces it. The static trace does not step by $T_\bot$ alone: at a
declared claim, \clc{\textsc{S-Claim}} steps by $T_{\bar p}$, which can add
$\omega$ on certification, so $\hat W$ is not simply non-increasing: the
shipped audit trace of Section~\ref{sec:implementation} shows it rising,
$0.025 \to 0.0375 \to 0.04375$. The recursion's fixed point bounds it instead: $\hat W_{j+1} \le (1-i)\hat W_j + \omega$ at every step (with
equality possible only when \clc{\textsc{S-Claim}} certifies; $\hat W$ is
unchanged or smaller under $T_\bot$), and $(1-i)x + \omega = x$ at
$x = \omega/i$, so $\hat W_j \le \max(w_0, \omega/i)$ for every $j$. The
invariant above therefore holds whenever $\max(w_0, \omega/i) < 1/i$, i.e.\
$w_0 < 1/i$ \emph{and} $\omega < 1$, both of which hold for the shipped
defaults ($w_0 = \alpha/2$, $i = 0.5$, $\omega = \alpha/2 \le 1/2$). The runtime side does not guarantee it even under this bound: $T_p$
adds $\omega$ on every rejection and $\omega$ is
constrained only by $\omega \ge 0$, so a long enough run of rejections can
in principle carry $W$ past $1/i$, at which point a claim would be certified
by arithmetic rather than by evidence. The same unbounded-$\omega$ gap is
the one Section~\ref{sec:threats} records against Theorem~\ref{thm:mfdr}. We have not
proved a bound on $\omega$ that closes it, and we do not claim one; the
identification of $T_{1.0}$ with $T_\bot$ in \lstinline|Plan.audit| is
sound under the stated invariant and is an open gap without it.

One more discrepancy the identification leaves open: read
\clc{\textsc{Ok}} literally, declaring $\bar p = 1$ makes
\clc{\textsc{S-Claim}} fire with $\hat b_j = [\,1 \le i \hat W_j\,]$, false
at every reachable $\hat W_j$ under the invariant above, so a program that
\emph{encodes} an unbounded claim as the literal bound $\bar p = 1$, the
identification \lstinline|Plan.audit| makes, is refused by the formal
system outright, not given the ``no verdict'' \clc{\textsc{S-Unbound}}
assigns a genuinely undeclared claim ($\bar p = \bot$) directly.
\lstinline|Plan.audit|
does not do this: it excludes a vacuous-bound claim from the acceptance
conjunction entirely rather than folding its always-false verdict in, which
is the shipped behavior Section~\ref{sec:implementation} describes and this
appendix's grammar, which permits $\bar p = 1$ as an ordinary declared
bound, does not itself express.

\begin{defi}[Honest bounds]
\label{def:honest}
A run is \emph{honest} for a program $c_1 \,;\, \dots \,;\, c_n$ if, writing
$p_j$ for the price \clc{\textsc{E-Claim}} computes at step $j$ and $\bar p_j$ for
that claim's declared bound, $p_j \le \bar p_j$ whenever $\bar p_j \ne \bot$.
\end{defi}

\textbf{Progress and preservation.} Progress and preservation, for the
\emph{static} configuration above, are immediate, and are not proved
separately. \emph{Progress}: a claim sequence has two leaf forms, each with
exactly one applicable rule (\clc{\textsc{S-Claim}} and
\clc{\textsc{S-Unbound}} between them cover every value of $\bar p$), so no
non-empty static configuration is stuck; the rules above are written
$\langle \hat W, \mathit{leaf} \,;\, c \rangle \Longrightarrow \langle \hat
W', c \rangle$ for a trailing $c$ that may itself be empty, the sequence's
last leaf stepping to the empty configuration that ends the run rather than
to a stuck one. This is a fact about $\langle \hat W, c \rangle$
specifically; the runtime configuration under the five QUDE rules can stick
on an insolvent step by design (Section~\ref{sec:grade}), and nothing above
claims otherwise. \emph{Preservation}: no rule recomputes a $\Claim{w}$ type
mid-execution, since \clc{\textsc{T-Seq}} assigns the whole sequence its
type before any step is taken. Theorems~\ref{thm:static}
and~\ref{thm:mfdr} certify a domain-specific soundness about the grade
rather than freedom from getting stuck, with Theorem~\ref{thm:bridge}
chaining the two, as \sys{Duet} and \sys{Fuzz}'s cost-soundness theorems
play the same role for the same reason.

\textbf{Decidability of checking, relative to an oracle.} The
metatheoretic question this system actually raises is whether
$\vdash e : \Stat{\Sigma}{\varphi}{\kappa}{t}$ is decidable. The obstruction
is \est{\textsc{T-Observe}}, whose conclusion carries
$\est{\mathit{units}(r,a)}$: the footprint of an \kw{observe} is a function
of the run $r$ and the artifact key $a$, so an estimation term's type
mentions data a static checker does not have in general, since $a$ is an
arbitrary host-language expression and deciding what it assigns without
running it is, for a Turing-complete host, the halting problem in
disguise: no general oracle for $\mathit{units}$ can exist, and this is a
fact about embedding in Python rather than a gap this type system's rules
leave open. Fixing that one obstruction as an assumption isolates
everything else as a positive result.

\emph{Claim.} Given an oracle deciding $\mathit{units}(r,a)$ at every
\kw{observe} leaf, and decidable equality on $\mathit{Field}$ and
$\mathit{Val}$, both $\vdash e : \Stat{\Sigma}{\varphi}{\kappa}{t}$ and
$\pi \vdash c : \Claim{w}$ are decidable.

For the estimation judgment, $\varphi$, $\kappa$, and $t$ are total,
syntax-directed functions of the term alone away from
\est{\textsc{T-Observe}} (\est{\textsc{T-SelectKey}}'s $\psi$ filters
$\varphi$ by a decidable key-match test, once $\mathit{Val}$'s equality
is decidable; \est{\textsc{T-SelectOutcome}} is selected by the
form $v\langle q \rangle$ alone and inspects nothing to decide it applies;
\est{\textsc{T-Arith}} joins both sides), so checking reduces to one oracle
call per \kw{observe} leaf plus a bounded number of decidable tests at
every other node. This is no longer merely asserted: the mechanized footprint, constraint,
and purity recursions (Appendix~\ref{app:estimation},
Section~\ref{sec:mechanization})
\emph{are} exactly these functions, total and defined by structural
recursion on the sample sort, one syntax case per rule; the estimation
half of the claim is discharged by their existence rather than by the
syntax-directedness argument alone. For the claim judgment,
$\mathit{design}(\Sigma,\kappa_1,\kappa_2)$ (\S\ref{sec:design-premise}) is
a least fixed point over the finite set $F$, decidable given decidable
equality on $\mathit{Field}$ for the closure step and on $\mathit{Val}$
for $\mathit{differing}$; $\mathit{assumes}(m)$ and $\mathit{sided}(m)$ are
lookups against $m$'s finite, fixed table (\S\ref{sec:design-premise}'s
three served mechanisms plus Fisher's, McNemar's, and the sign-flip); and
plan membership $\pi(\ell)$ is a lookup in a finite list. None of this is
mechanized (the mechanized checker consumes the claim-sequence normal form
with $\mathit{price}$ already an oracle, not $\mathcal{T}$'s design premise
itself), so the claim-level half remains an
argument rather than a proof, but it is a routine one: every ingredient is
a decision over a set fixed in advance, with no recursion of its own.

\lstinline|python3 -m tacet.static| is this relative decision procedure,
instantiated with a specific, conservative oracle: on straight-line code
built from literal arguments it can resolve $\mathit{units}$ symbolically
enough to answer, and the moment a plan entry is not resolvable to a
literal (a claim built inside a loop over data, or from anything the
walker cannot reduce to a literal AST pattern,
Section~\ref{sec:implementation}), it exits its own, third code rather
than guessing, exactly the honest response an oracle that cannot answer
locally deserves. The residual difficulty is therefore located precisely:
not in $\mathcal{T}$'s rules, all of which admit a decision procedure
given the oracle, but in how much of a Turing-complete host language a
particular oracle implementation can see through without running it.

\section{Proof of Theorem~\ref{thm:static}}
\label{app:proof}

Throughout this appendix $0 < i \le 1$ and $\omega \ge 0$, the $i,\omega$
hypotheses shared by Theorem~\ref{thm:static} and Theorem~\ref{thm:mfdr};
the latter's full hypothesis list also includes (H1)--(H4), of which the
Lean development carries (H3) and (H4) directly and leaves (H1) and (H2) as
the meta-level obligations Section~\ref{sec:mfdr} already states they are.
Three facts about $T$ drive the whole proof, all already stated in prose in
Section~\ref{sec:grade} or Section~\ref{sec:mfdr} and property-tested over
400 wealth values; the Lean development
proves all three lemmas below outright.

\begin{lem}[Monotone in $W$]
\label{lem:mono}
For fixed $p$, $W \le W'$ implies $T_p(W) \le T_p(W')$.
\end{lem}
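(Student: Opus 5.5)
Fix $p$ and $W \le W'$, and compare the two summands of Equation~\ref{eq:transformer} separately:
\[
T_p(W) = (1-i)\,W + \omega\,[\,p \le iW\,].
\]

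\textbf{Linear term.} Since $i \le 1$, the coefficient $1-i$ is nonnegative. Multiplying $W \le W'$ by it gives $(1-i)W \le (1-i)W'$.

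\textbf{Refund term.} Since $i > 0$, the map $W \mapsto iW$ is monotone, so $iW \le iW'$. Hence $p \le iW$ implies $p \le iW'$: if the small-wealth stake clears the price, so does the large-wealth one. The indicator therefore satisfies $[\,p \le iW\,] \le [\,p \le iW'\,]$ in $\{0,1\}$. Multiplying by $\omega \ge 0$ preserves this inequality, so $\omega\,[\,p \le iW\,] \le \omega\,[\,p \le iW'\,]$.

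\textbf{Conclusion.} Adding the two inequalities gives $T_p(W) \le T_p(W')$.

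\textbf{Mechanization.} In Lean this is a case split on the two decidable propositions $p \le iW$ and $p \le iW'$. One case is impossible: $p \le iW$ with $\neg(p \le iW')$ contradicts $iW \le iW'$, which follows from \texttt{mul\_le\_mul\_of\_nonneg\_left}. In each remaining case, the inequality is closed by \texttt{linarith} (or \texttt{nlinarith}), supplied with $0 \le 1-i$, $0 \le \omega$, and $W \le W'$.

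\textbf{Main obstacle.} The only real step is the refund indicator, because it is the one place monotonicity could fail: a step function of $W$ that dropped from $1$ to $0$ would break it. This is exactly where $i > 0$ is needed, and it matches the paper's note that all three side conditions are load-bearing. If $i$ were $0$, the indicator would be constant in $W$ and harmless. If $i$ were negative, larger $W$ would lower the stake, the indicator could switch from $1$ to $0$, and the lemma would fail.
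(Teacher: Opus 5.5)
Your proof is correct and follows the paper's own argument: you show each term is monotone separately, using $i \le 1$ to get $1-i \ge 0$ for the linear term, and noting that the indicator $[p \le iW]$ can only switch from false to true as $W$ grows, scaled by $\omega \ge 0$. Two small points: the indicator step actually needs only $i \ge 0$, not $i > 0$, as your own $i = 0$ remark shows; and in the Lean sketch, plain \texttt{linarith} cannot derive the product inequality $(1-i)W \le (1-i)W'$ from the hypotheses you list, so there you need \texttt{nlinarith} or \texttt{mul\_le\_mul\_of\_nonneg\_left} rather than \texttt{linarith}.
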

\begin{proof}
$T_p(W) = (1-i)W + \omega \cdot [p \le iW]$ is non-decreasing in $W$
termwise: $(1-i)W$ is, since $i \le 1$ makes $1 - i \ge 0$; and
$[p \le iW]$ can only go from false to true as $W$ grows (using $i > 0$),
never the reverse, so the indicator term is too, since $\omega \ge 0$.
\end{proof}

\begin{lem}[Antitone in $p$]
\label{lem:anti}
For fixed $W$, $p \le p'$ implies $T_p(W) \ge T_{p'}(W)$.
\end{lem}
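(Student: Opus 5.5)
The plan is to compare $T_p(W)$ and $T_{p'}(W)$ term by term for a fixed wealth $W$, since both share the same deterministic part and differ only in the refund indicator. Write
\[
T_p(W) - T_{p'}(W) = \omega\cdot\bigl([\,p \le iW\,] - [\,p' \le iW\,]\bigr),
\]
because the $(1-i)W$ terms cancel exactly. The claim therefore reduces to showing this difference is non-negative.

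\textbf{Indicator comparison.} Since $p \le p'$, the event $p' \le iW$ implies $p \le iW$ by transitivity. Hence $[\,p' \le iW\,] \le [\,p \le iW\,]$ as $\{0,1\}$-valued quantities: whenever the right indicator fails, the left one does too.

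\textbf{Scaling.} Multiplying this inequality by $\omega \ge 0$, the standing hypothesis of this appendix, preserves its direction. Adding the common term $(1-i)W$ back to both sides then gives $T_p(W) \ge T_{p'}(W)$.

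Unlike Lemma~\ref{lem:mono}, this argument needs neither $i > 0$ nor $i \le 1$, because the threshold $iW$ is the same on both sides. The only hypothesis used is $\omega \ge 0$, so there is no real obstacle. The one point to state carefully is the case split on the indicators. If $p' \le iW$, both indicators equal $1$ and the difference is $0$. Otherwise the right indicator is $0$ and the left is $0$ or $1$, so the difference is $0$ or $\omega$. This split also corresponds directly to the case analysis a Lean proof would perform on the two decidable propositions.
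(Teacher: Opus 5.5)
Your proof is correct and matches the paper's argument: the $(1-i)W$ terms cancel, and the refund indicator $[\,p \le iW\,]$ can only switch on as $p$ decreases. You go slightly further than the paper's one-line proof by naming $\omega \ge 0$ as the hypothesis that carries the indicator comparison over to $T$, which the paper leaves to the appendix's standing assumptions, and by noting that neither bound on $i$ is needed here.
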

\begin{proof}
Only the indicator depends on $p$, and $[p \le iW]$ can only go from false to
true as $p$ shrinks.
\end{proof}

\begin{lem}[No refund is a lower bound]
\label{lem:bot}
For every $p$ and every $W$, $T_\bot(W) \le T_p(W)$.
\end{lem}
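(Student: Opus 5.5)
The plan is a direct pointwise comparison of the two transformers at the same wealth $W$, with no induction involved. By Equation~\ref{eq:transformer} and Equation~\ref{eq:static-transformer}, we can write
\[
  T_p(W) - T_\bot(W) \;=\; \omega \cdot [\, p \le i \cdot W \,].
\]
The linear term $(1-i)\cdot W$ is shared by both transformers, so it cancels exactly. This holds for every real $W$, not only for nonnegative wealth, and it does not depend on the sign of $1-i$. The claim therefore reduces to showing that the residual term is nonnegative.

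The residual is a product of $\omega$ and an Iverson bracket. The bracket takes the value $0$ or $1$, so it is nonnegative whatever $p$ and $W$ are. The hypothesis $\omega \ge 0$, carried throughout this appendix, then makes the product nonnegative, and rearranging gives $T_\bot(W) \le T_p(W)$. It is worth noting that this lemma, unlike Lemma~\ref{lem:mono}, uses neither $i > 0$ nor $i \le 1$: the bracket's truth value never needs to be determined. That is exactly why the bound holds for every $p$, whether or not the claim rejects.

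I expect no genuine obstacle. The only point that needs care is making sure the two transformers are applied with the same $i$ and evaluated at the same $W$, so that the $(1-i)W$ terms cancel syntactically. In the Lean development, I would handle the bracket by case-splitting on the decidable proposition $p \le iW$. In the true branch, the goal becomes $(1-i)W \le (1-i)W + \omega$, which follows from $\omega \ge 0$. In the false branch, the goal becomes $(1-i)W \le (1-i)W + 0$, which holds by reflexivity after simplification.
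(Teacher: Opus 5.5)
Your proposal is correct and is the paper's own argument: both compute $T_p(W) - T_\bot(W) = \omega \cdot [\, p \le i W \,]$ and conclude from $\omega \ge 0$. Your added remark that this lemma needs neither $i > 0$ nor $i \le 1$ is also accurate.
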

\begin{proof}
$T_p(W) - T_\bot(W) = \omega \cdot [p \le iW] \ge 0$, since $\omega \ge 0$.
\end{proof}

\begin{proof}[Proof of Theorem~\ref{thm:static}]
Fix a program $c_1 \,;\, \dots \,;\, c_n$ and an honest run of it, and let
$W_0, \dots, W_n$ and $\hat W_0, \dots, \hat W_n$ be the
runtime and static traces from Appendix~\ref{app:semantics}, both starting at
$w_0$. This proof indexes the claim that steps $W_j \to W_{j+1}$ as ``claim
$j$'' (so claim $0$ is the first); Section~\ref{sec:mfdr} instead calls that
same claim ``claim $j+1$'', priced at $\alpha_{j+1} = i \cdot W_j$, to match
its ordinal position $1, \dots, n$ in the sequence. The trace $W_0, \dots,
W_n$ is the same object either way; only the label attached to a given step
differs. We show by induction on $j$ that $\hat W_j \le W_j$, whatever mix
of certified, refused, and undeclared claims precedes step $j$, and that
whenever \clc{\textsc{S-Claim}} certifies claim $j$ ($\hat b_j$ holds), the
runtime monitor also accepts it, i.e.\ $b_j$ holds.

\textbf{Base case.} $\hat W_0 = W_0 = w_0$.

\textbf{Inductive step.} Assume $\hat W_j \le W_j$. If leaf $j$ is
\kw{note}, \clc{\textsc{E-Note}} and \clc{\textsc{S-Note}} both step by the
identity, so $\hat W_{j+1} = \hat W_j \le W_j = W_{j+1}$ and the invariant
is restored trivially, with no $b_j$ to show since a note carries none.
The remaining case follows, a \kw{claim} leaf with some declared bound $\bar p_j \in [0,1] \cup \{\bot\}$. On the runtime side the leaf
steps by \clc{\textsc{E-Claim}} either way, so $W_{j+1} = T_{p_j}(W_j)$;
the two cases below differ in what the checker charges.

If \clc{\textsc{S-Unbound}} fired ($\bar p_j = \bot$), then
$\hat W_{j+1} = T_\bot(\hat W_j)$, and there is no verdict to show.
Lemma~\ref{lem:mono} and the induction
hypothesis give $T_\bot(\hat W_j) \le T_\bot(W_j)$, and Lemma~\ref{lem:bot}
gives $T_\bot(W_j) \le T_{p_j}(W_j)$. Honesty is not needed here, which is
why an undeclared claim costs the guarantee nothing.

If \clc{\textsc{S-Claim}} fired ($\bar p_j \ne \bot$), certified or not,
then $\hat W_{j+1} = T_{\bar p_j}(\hat W_j)$, and the chain runs through
the declared bound instead of through $T_\bot$:
\[
  \ty{T_{\bar p_j}(\hat W_j) \;\le\; T_{p_j}(\hat W_j) \;\le\; T_{p_j}(W_j).}
\]
The first inequality is Lemma~\ref{lem:anti} applied at the fixed wealth
$\hat W_j$, using honesty's $p_j \le \bar p_j$; the second is
Lemma~\ref{lem:mono} applied at the fixed price $p_j$, using the induction
hypothesis. The chain never consults $\hat b_j$, which carries the
invariant past a refused claim and, with it, the verdicts of every claim
behind the refusal; at a refused step the shadow earns no refund ($\hat
b_j$ false means $\bar p_j > i \cdot \hat W_j$), so the first inequality
there already follows from Lemma~\ref{lem:bot} alone, with honesty
unneeded. This is also the step that requires antitonicity: the
checker charges the claim at the bound the analyst declared, and the run can
only do better, because a smaller realized $p$-value can only turn the refund
indicator on. Were the checker to charge $T_\bot$ here, the shadow trace
would be sound for a checker neither Section~\ref{sec:lambdat} types nor
Section~\ref{sec:implementation} ships.

For the verdict, suppose $\hat b_j$ holds, so $\bar p_j \le i \cdot \hat
W_j$. By honesty, $p_j \le \bar p_j$. Combined with the induction
hypothesis and $i > 0$:
\[
  \ty{p_j \;\le\; \bar p_j \;\le\; i \cdot \hat W_j \;\le\; i \cdot W_j,}
\]
so $p_j \le i \cdot W_j$, which is exactly $b_j$'s defining condition in
\clc{\textsc{E-Claim}}. The runtime monitor accepts claim $j$.

In all cases $\hat W_{j+1} \le W_{j+1}$, restoring the invariant. By
induction, $\hat W_j \le W_j$ for all $0 \le j \le n$, and $b_j$ holds at
every step the checker certified, whether or not it certified the rest.
That is the theorem, with whole-program acceptance the case in which every
\clc{\textsc{S-Claim}} step certified.
\end{proof}

\begin{cor}[Partial pre-registration]
The wealth invariant $\hat W_j \le W_j$ holds for every program in which
every declared bound is honest, regardless of how many claims declare one:
the step for an undeclared claim (\clc{\textsc{S-Unbound}}) needs only
Lemmas~\ref{lem:mono} and~\ref{lem:bot}, not honesty, so mixing in
undeclared claims costs the invariant nothing. A declared claim still needs
its own bound to be honest, exactly as \clc{\textsc{S-Claim}}'s case in the
proof above uses. Only a claim that declares an honest, statically-affordable bound
inherits the runtime guarantee; every other claim runs, as it always did,
under the monitor alone.
\end{cor}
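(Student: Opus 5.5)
The corollary is a rereading of the induction already carried out in the proof of Theorem~\ref{thm:static}, so I would not build a new argument. Instead I would take that induction apart step by step and record exactly which hypothesis each step uses. Fix a program $c_1 \,;\, \dots \,;\, c_n$ with an arbitrary mix of \kw{note} leaves, declared claims ($\bar p_j \ne \bot$) and undeclared claims ($\bar p_j = \bot$). Fix a run that is honest in the sense of Definition~\ref{def:honest}. That definition only constrains indices with $\bar p_j \ne \bot$, so it is exactly the hypothesis ``every declared bound is honest'', and it places no requirement on how many claims declare a bound. The invariant to carry is $\hat W_j \le W_j$, from the common base $\hat W_0 = W_0 = w_0$.

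The inductive step has three cases, one per static rule. For \clc{\textsc{S-Note}} both traces step by the identity, so nothing is needed. For \clc{\textsc{S-Unbound}} the chain $T_\bot(\hat W_j) \le T_\bot(W_j) \le T_{p_j}(W_j)$ uses only Lemma~\ref{lem:mono} (with $1-i \ge 0$) and Lemma~\ref{lem:bot} (with $\omega \ge 0$). No relation between $p_j$ and any bound appears, which is the content of ``mixing in undeclared claims costs the invariant nothing''. For \clc{\textsc{S-Claim}} the chain $T_{\bar p_j}(\hat W_j) \le T_{p_j}(\hat W_j) \le T_{p_j}(W_j)$ uses Lemma~\ref{lem:anti} on that claim's own honesty $p_j \le \bar p_j$, and then Lemma~\ref{lem:mono} with the induction hypothesis. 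Thus the only place honesty enters is local to the declared claim being stepped. I would also note, as the theorem's proof does, that at a refused declared step Lemma~\ref{lem:bot} already suffices, so the invariant never depends on a certification outcome.

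Finally I would read off the verdict clause. At an \clc{\textsc{S-Claim}} step with $\hat b_j$ true, the chain $p_j \le \bar p_j \le i\hat W_j \le iW_j$ gives $b_j$. This is the only route by which a runtime acceptance is inferred from the checker. At an \clc{\textsc{S-Unbound}} step no $\hat b_j$ exists, so such a claim receives no static verdict and is governed by \clc{\textsc{E-Claim}} alone, that is, by the monitor.

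I expect the main obstacle to be bookkeeping rather than mathematics. The point to get right is that the induction hypothesis is stated uniformly over all prefixes, independent of which earlier claims were declared, certified or refused. Otherwise the phrase ``regardless of how many claims declare one'' would silently require a separate induction per declaration pattern. The corresponding generalization in the mechanized proof is over both starting wealths; I would make sure the paper-level statement matches it.
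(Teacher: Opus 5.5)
Your proposal is correct and follows the paper's own route. Like the paper, you read the corollary directly off the single uniform induction in the proof of Theorem~\ref{thm:static}: the \textsc{S-Unbound} step uses only Lemmas~\ref{lem:mono} and~\ref{lem:bot}, honesty enters only locally at \textsc{S-Claim} steps via Lemma~\ref{lem:anti}, and the chain $p_j \le \bar p_j \le i\hat W_j \le iW_j$ gives acceptance at certified steps.
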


\section{Empirical Validation of Theorem~\ref{thm:static}}
\label{app:validation}

Section~\ref{sec:metatheory} summarizes this experiment; the corpus
construction and full counts follow. Random claim sequences are generated
and the checker is compared against the monitor, with bounds declared as
$\bar p = p \times \mathit{slack}$ and each bound priced by the same
mechanism the monitor will choose. Table~\ref{tab:validation}
reports two corpora. The first draws random pairwise comparisons over
the fixed data of a code-model comparison (8 open-weight models scored on
15 paired coding tasks in two prompting modes),
and is a narrow test by construction:
no pairwise comparison in this fixed corpus of 15 paired tasks realizes more
than 8 discordant pairs, so no $p$-value below $0.0078$ is ever available
to test, while the opening stake is $0.0125$ and halves on any refusal, so
only a program's first claim can ever be statically certified, and its 18
certified claims are all at position 1 over near-flat wealth trajectories.
The second corpus is tuned to exercise
what the first cannot: synthetic paired outcomes with 150 to 400 tasks per
program, so exact $p$-values span decades; planted effects in half the
models; a fifth of claims declaring no bound, interleaving
\clc{\textsc{S-Unbound}} with \clc{\textsc{S-Claim}}; account parameters
$\alpha$, $i$, $\omega$ drawn per program within Theorem~\ref{thm:static}'s
hypotheses and shared by both sides; and a third of programs ordered
strongest-first, legitimate here (unlike for the mFDR guarantee) because
the theorem is a per-path statement with no distributional hypothesis on
the claim sequence.

\begin{table}[t]
\centering
\small
\caption{Static checker against runtime monitor on two corpora of randomly
generated programs, at declared bounds $\bar p = p \times \mathit{slack}$:
299 programs over the fixed comparison data (1{,}911 claims, all declaring
bounds) and 300 tuned programs over synthetic paired outcomes with planted
effects (2{,}969 claims, 2{,}367 declaring bounds). The top four rows
declare honest bounds; the bottom two declare dishonest ones (overstating
the evidence) as a negative control.}
\label{tab:validation}
\begin{tabular}{NNNNNNN}
\toprule
 & \multicolumn{3}{C}{fixed corpus} & \multicolumn{3}{C}{tuned corpus} \\
\cmidrule(lr){2-4}\cmidrule(lr){5-7}
\multicolumn{1}{R}{slack} & \multicolumn{1}{R}{static ok} & \multicolumn{1}{R}{runtime ok} & \multicolumn{1}{R}{viol.} & \multicolumn{1}{R}{static ok} & \multicolumn{1}{R}{runtime ok} & \multicolumn{1}{R}{viol.} \\
\midrule
1.0 & 18 & 18 & \textbf{0} & 1120 & 1430 & \textbf{0} \\
1.5 & 18 & 18 & \textbf{0} & 1087 & 1430 & \textbf{0} \\
3.0 & 0 & 18 & \textbf{0} & 1029 & 1430 & \textbf{0} \\
10.0 & 0 & 18 & \textbf{0} & 912 & 1430 & \textbf{0} \\
\midrule
0.10 & 539 & 18 & 521 & 1449 & 1430 & 299 \\
0.01 & 1911 & 18 & 1893 & 2001 & 1430 & 847 \\
\bottomrule
\end{tabular}
\end{table}

Each corpus's program set is fixed once and re-checked at every
slack, so Table~\ref{tab:validation}'s rows differ only in the precision of
the declared bound.
Honest violations stay at zero on both corpora, as
Theorem~\ref{thm:static} requires, and the tuned corpus makes that zero
informative: where the fixed corpus certifies 18 claims, all at
position 1 (its slack $3.0$ and $10.0$ rows certify none and hold
vacuously), the tuned corpus certifies 1{,}120 of 2{,}367 declared claims
at honest bounds, spread across positions (148 at position 1, 972 later),
including 566 certified after an earlier refusal had already drained the
pool and 855 after an earlier refund had refilled it, with zero violations
in every stratum. The 566 are the stratum the per-claim statement of
Theorem~\ref{thm:static} exists for: they occur in programs the checker does
not accept as a whole, so a theorem hypothesized on whole-program
acceptance would leave exactly this stratum untested. The implication is thereby tested across the wealth
trajectories the theorem quantifies over, beyond first claims alone. The
bottom rows are a negative control with dishonest bounds, because a
guarantee that cannot be broken by lying is not being tested; it breaks on
both corpora.

\textbf{The cost of imprecision.} The two corpora disagree about the shape
of the cost of an imprecise bound. On the fixed corpus the cost is a
cliff: declaring within $1.5\times$ of the truth costs nothing, and
declaring threefold high costs everything, because the opening stake is
$0.0125$, a bound overstated threefold requires a $p$-value of $0.00417$,
and the smallest $p$-value any pairwise comparison in this fixed corpus
realizes (Qwen2.5-Coder vs.\ CodeQwen, 8 of 15 tasks discordant) is
$0.0078$. On the tuned corpus, whose $p$-values span decades, the same
overstatement is a slope: 1{,}120 claims certified at slack $1.0$, 1{,}087
at $1.5$, 1{,}029 at $3.0$, and 912 at $10$. The cliff shows what low
resolution does to an imprecise bound, and the slope shows what resolution
buys back.

\section{The Deletion Attack}
\label{app:crossing}

Section~\ref{sec:threats} states that the guarantee is conditional on a
complete account (H1); this appendix measures the cost of violating it.
Under a complete null, consider an analyst who runs every pairwise
comparison among $n$ models and simply does not pay for the ones that
did not reject; Table~\ref{tab:deletion} reports the measured cost.

\begin{table}[t]
\centering
\small
\caption{$\mathrm{mFDR}_\eta$ under a complete null, all pairwise comparisons
among $n$ models, for an honest analyst who reports every test run against
an analyst who deletes tests that failed to reject before reporting. Bold
exceeds $\alpha=0.05$. Monte Carlo, 4{,}000 trials per row, same seed
across rows.}
\label{tab:deletion}
\begin{tabular}{NNNN}
\toprule
\multicolumn{1}{R}{models} & \multicolumn{1}{R}{tests} & \multicolumn{1}{R}{honest $\mathrm{mFDR}_\eta$} & \multicolumn{1}{R}{deleting $\mathrm{mFDR}_\eta$} \\
\midrule
4  & 6   & 0.0091 & 0.0264 \\
6  & 15  & 0.0096 & \textbf{0.0645} \\
8  & 28  & 0.0102 & \textbf{0.1173} \\
12 & 66  & 0.0109 & \textbf{0.2498} \\
20 & 190 & 0.0120 & \textbf{0.5308} \\
\bottomrule
\end{tabular}
\end{table}

The honest analyst is flat in the number of tests, because wealth
decays geometrically and $\mathbb{E}[R]$ saturates. The deleting
analyst never pays for a failure, so wealth never falls below $w_0$,
the stake never falls below $i w_0$, every further test carries at
least that probability of a spurious rejection, $\mathbb{E}[R]$ grows
linearly, and $\mathrm{mFDR}_\eta \to 1$. Because $\eta = 1 - \alpha$,
the ratio crosses $\alpha$ once $\mathbb{E}[R] > \alpha$, so a
perfectly uniform null $p$-value puts the crossing at 5 tests, whatever
the data. Where the crossing actually falls depends on how near the
exact test comes to uniform, a question of how much data each
comparison has; Table~\ref{tab:crossing} reports where it lands on null
$p$-values drawn from the real corpora.

\begin{table}[t]
\centering
\small
\caption{Where the deleting analyst's $\mathrm{mFDR}_\eta$ crosses
$\alpha=0.05$ under a complete null, per average rejection probability at
the opening stake (``per test''), for four sources of null $p$-values;
five tests is the closed-form floor, reached only at perfect uniformity.
Monte Carlo, 4{,}000 trials per row.}
\label{tab:crossing}
\begin{tabular}{LNN}
\toprule
null $p$-values from & \multicolumn{1}{R}{per test} & \multicolumn{1}{R}{crossing} \\
\midrule
uniform, in closed form & 0.0125 & $\le 5$ \\
SWE-bench Verified, permuted & 0.0160 & 4 to 6 \\
BIG-Bench Hard, permuted & 0.0094 & 6 \\
20 tasks per comparison, synthetic (Table~\ref{tab:deletion}'s setup) & 0.0057 & 15 \\
\bottomrule
\end{tabular}
\end{table}

Table~\ref{tab:crossing}'s two corpus rows destroy the real effects by
permutation and
keep everything else: within an instance, which systems resolved it is
shuffled; within an example, the two prompting conditions are swapped.
Instance difficulty, the correlation it induces across systems, and how
discordance is distributed over a task are the corpus's own. BIG-Bench Hard
lands one test above the floor, at a per-test rejection rate close enough
to uniform that the closed-form crossing is nearly exact; the synthetic
sweep is conservative instead, since 20 tasks
leave McNemar's exact test too few discordant pairs to be anything but
conservative, understating the deletion's effect threefold. SWE-bench
Verified lands well clear of the floor: its crossing range straddles it,
because shuffling which system resolved a given instance leaves each
instance's own difficulty in place, so systems permuted this way stay
correlated through the instances they share. That is exactly the (H3)
dependence Section~\ref{sec:mfdr} already concedes is unverified for this
corpus, surfacing here as a per-test rejection rate above the uniform
floor rather than below it; on data drawn from correlated systems, the
uniform floor is not a bound on the deletion attack's severity, only a
convenient closed-form reference point.

\section{Selection, Verification, and Accounting Side by Side}
\label{app:threelayers}

Figure~\ref{fig:three-layers} lays out the three released surfaces
Section~\ref{sec:bhl} compares, on the smallest analysis all three can
express: StatWhy's own two-drug example, since neither the released
selection layer nor the released verification library yet covers the
paired-nominal instrument the leaderboard needs. No software connects the
three panes today, so the figure is an architecture and a statement of
where the seams fall.

\begin{figure}[ht]
\sys{Tisane}\textit{: declares the design and infers the instrument.}
\begin{lstlisting}[basicstyle=\ttfamily\footnotesize]
import tisane as ts

patient = ts.Unit("patient")
drug = patient.nominal("drug", cardinality=3)
improve = patient.numeric("improvement")
drug.causes(improve)

ts.infer_statistical_model_from_design(
    design=ts.Design(dv=improve, ivs=[drug]))
# -> which instrument the declared design admits
\end{lstlisting}

\textit{StatWhy: runs both tests and proves the composed bound.}
\begin{lstlisting}[basicstyle=\ttfamily\footnotesize]
let h1 = Disj (h_new_drug1, h_new_drug2)

let cmp_with_existing_drugs d_new d_drug1 d_drug2 =
  let p1 = exec_ttest_ind_eq ppl_new ppl_drug1 d_new d_drug1 Up in
  let p2 = exec_ttest_ind_eq ppl_new ppl_drug2 d_new d_drug2 Up in
  p1 +. p2
(*@ requires sampled d_new ppl_new /\ independent d_new d_drug1 /\ ...
    ensures (World !st interp) |= StatB (Leq p) h1 *)
\end{lstlisting}

\sys{Tacet}\textit{: prices both claims against one pool.}
\begin{lstlisting}[style=tacetcode, basicstyle=\ttfamily\footnotesize]
from tacet import Study

study = Study(alpha=0.05)
plan = study.preregister()
plan.claim("new beats drug1", sided=1, left="new", right="drug1")
plan.claim("new beats drug2", sided=1, left="new", right="drug2")

trial = study.observe(rows, artifact=patient, passed=improved)
study.claim("new beats drug1", trial["new"] > trial["drug1"])
study.claim("new beats drug2", trial["new"] > trial["drug2"])
\end{lstlisting}
\caption{Selection, verification, accounting: one analysis, StatWhy's own
two-drug comparison, in the three released surfaces. The StatWhy panel is
abridged from the released example, with line breaks added; its elided
requirements include normality and that each direction was already excluded
or possible in the belief state, the one-sidedness \sys{Tacet} instead
declares in a plan that \kw{observe} seals. Each panel is real, and none is
wired to the others; the family here has two members, small enough that
every panel handles it without yet showing where they diverge
(Section~\ref{sec:bhl}).}
\label{fig:three-layers}
\end{figure}

\section{Checker Runtime}
\label{app:runtime}

StatWhy's own CAV tool paper reports execution time
flat around 8--9s per hypothesis pair and per compared group, dominated by
SMT solver startup rather than by what is being checked.
Table~\ref{tab:checker-runtime} reports the analogous number for
\lstinline[style=tacetcode]|tacet.static|, which calls no solver at all: an
AST walk and a fold over declared bounds against one syntactic pattern
(Apple M4, 32\,GB,
Python 3.14, warm process, median of 7 in-process repeats after 2 discarded
warm-up runs). This is not the same measurement as StatWhy's table,
and the two should not be divided into each other: StatWhy's seconds price
an SMT solver discharging a verification condition and returning a proof;
these milliseconds price a syntactic fold that returns a design and a
price, with no theorem behind either, so a raw ratio between them measures
the gap between proving and folding, not a speedup available to anyone who
needs StatWhy's guarantee. We report both because the comparison motivates
the cost trade in Section~\ref{sec:bhl}.

\begin{table}[t]
\centering
\small
\caption{Checker runtime, one representative invocation: medians of 7
in-process repeats, with [min--max] over those repeats for the total.
\emph{Parse} is \lstinline[style=tacetcode]|ast.parse| alone; \emph{fold}
consumes its output; \emph{total} is their sum. The first block of
synthetic rows is literal, hand-unrolled
\lstinline[style=tacetcode]|plan.claim(...)| statements, written flat
rather than in a loop; the second block is the same claim counts written
as a single \lstinline|for| loop over one literal list instead, timed
through the unroller (Section~\ref{sec:implementation}) rather than
around it. The last row is the repository-wide design gate (Section~\ref{sec:metatheory}).}
\label{tab:checker-runtime}
\begin{tabular}{LNNNN}
\toprule
measurement & \multicolumn{1}{R}{claims} & \multicolumn{1}{R}{parse} & \multicolumn{1}{R}{fold} & \multicolumn{1}{R}{total [min--max]} \\
\midrule
\lstinline[style=tacetcode]|prereg.py|, 6 plans & 8 & 0.6~ms & 0.6~ms & 1.3~ms [1.2--1.5~ms] \\
synthetic plan & 10 & 0.05~ms & 0.08~ms & 0.13~ms [0.13--0.13~ms] \\
synthetic plan & 100 & 0.46~ms & 0.62~ms & 1.1~ms [1.1--1.2~ms] \\
synthetic plan & 1{,}000 & 4.7~ms & 6.9~ms & 12~ms [11--12~ms] \\
synthetic plan & 10{,}000 & 64~ms & 113~ms & 176~ms [164--186~ms] \\
\midrule
literal loop, unrolled & 10 & \multicolumn{2}{M}{--} & 0.5~ms [0.5--0.6~ms] \\
literal loop, unrolled & 100 & \multicolumn{2}{M}{--} & 3.4~ms [3.3--3.7~ms] \\
literal loop, unrolled & 1{,}000 & \multicolumn{2}{M}{--} & 32~ms [32--33~ms] \\
literal loop, unrolled & 5{,}000 & \multicolumn{2}{M}{--} & 180~ms [170--233~ms] \\
\midrule
design gate, whole repository & -- & -- & -- & 641~ms [638--677~ms] \\
\bottomrule
\end{tabular}
\end{table}

Two cautions on the synthetic sweep, both pointing the same way: read the
table as orientation, not a benchmark. Between-invocation variance on a
shared machine exceeds the in-process ranges it brackets (three back-to-back
runs of the 10{,}000-claim point gave medians of 169, 173, and 176~ms), and
the \emph{fold} column grows faster than linearly (0.62, 6.9, and 113~ms per
decade of claims, parsing contributing a third to a half of each total);
naming the fold's super-linear term is still open (an earlier $O(n^2)$
duplicate-label scan is now an $O(1)$ set check, which shrank but did not
remove it). Even so, 10{,}000 claims, comparable to the leaderboard's own
8{,}911-claim family and nearly four hundred times BIG-Bench Hard's 26,
complete in a fifth of a second on every invocation measured. The
literal-loop rows are a separate sweep on the same machine, orientation of
the same kind: unrolling 5{,}000 claims from a single \lstinline|for| loop
over a literal list, deep-copying and substituting the loop body once per
item (Section~\ref{sec:implementation}), costs about 180~ms, growing
somewhat faster than linearly, which that section's stated cap on
loop-unrolling size exists to bound.

Only the checker's own cost is timed above; the runtime monitor's cost is
a separate question this appendix did not previously answer, and one
mechanism's own cost dominates it at scale. The cluster sign-flip's exact
$p$-value is now computed by dynamic-programming convolution rather than
by enumerating every sign assignment directly (Section~\ref{sec:lambdat}):
at BIG-Bench Hard's pooled claim ($k=26$ clusters), enumeration measured
close to a minute in pure Python against under two milliseconds convolved,
identical $p$-value either way (the two routes are checked against each
other on 20 random configurations up to
$k=12$, where enumeration is still fast enough to check against). At the
leaderboard's own $k=12$, $2^{12} = 4{,}096$ sign assignments already
finish in well under a millisecond, so the two routes are indistinguishable
in wall-clock terms there; the difference is specific to a claim family
whose cluster count runs larger, and BIG-Bench Hard's pooled claim is
exactly that case.

\end{document}